\documentclass[aps,prl,twocolumn,superscriptaddress,floatfix,longbibliography]{revtex4-2}
\setcounter{secnumdepth}{3}
\usepackage{amsmath}
\usepackage{appendix}
\usepackage{times}
\usepackage{amssymb}
\usepackage{graphicx}
\usepackage{soul}
\usepackage[colorlinks=true,linkcolor=blue, citecolor=blue, urlcolor=blue, bookmarks]{hyperref}
\makeatletter
\usepackage[T1]{fontenc}
\usepackage{amsmath,amsfonts,amssymb,amsthm,bbm,bm, braket}
\usepackage{wasysym}
\usepackage{comment}
\usepackage{scalerel}
\usepackage{enumerate}
\usepackage{amssymb}
\usepackage{graphicx}
\usepackage{mathtools}
\usepackage{ifthen}
\usepackage{tensor}
\usepackage{tikz}
\usepackage{tikz-network}
\usetikzlibrary{patterns,decorations.pathreplacing}

\usepackage{color}
\usepackage{longtable}
\usepackage[normalem]{ulem} 
\theoremstyle{break}        
\usepackage{color}
\definecolor{myred}{RGB}{232,102,102}
\definecolor{myblue}{RGB}{187,187,255}
\definecolor{myorange0}{RGB}{252,226,5}
\definecolor{myorange0c}{RGB}{255,255,255}
\definecolor{myorange}{RGB}{255,165,0}
\definecolor{mygrey}{RGB}{105,105,105}
\definecolor{OliveGreen}{RGB}{85,107,47}
\definecolor{NavyBlue}{RGB}{0,0,128}
\definecolor{mygreen}{RGB}{34,139,34}
\definecolor{myY}{RGB}{220,255,203}
\definecolor{myYO}{RGB}{255, 220, 151}

\definecolor{mygreenc}{RGB}{150,50,50}

\usepackage{tensor}
\newcommand{\be}{\begin{equation}}
\newcommand{\ee}{\end{equation}}
\newcommand{\ba}{\begin{aligned}}
\newcommand{\ea}{\end{aligned}}
\newcommand{\bw}{\begin{widetext}}
\newcommand{\ew}{\end{widetext}}

\newtheorem{theorem}{Theorem}
\theoremstyle{plain}

\theoremstyle{plain}
\newtheorem{lemma}{Lemma}
\theoremstyle{plain}

\theoremstyle{plain}

\renewcommand{\qedsymbol}{$\blacksquare$}

\newcommand{\Wgategreenprime}[2]{
\draw[very thick] (#1-0.5, #2 +0.5) -- (#1+0.5,#2-0.5);
\draw[very thick] (#1-0.5,#2-0.5) -- (#1+0.5,#2+0.5);
\draw[ thick, fill=mygreen!70, rounded corners=2pt] (#1-0.35,#2+0.35) rectangle (#1+0.35,#2-0.35);
\Text[x=#1,y=#2]{$k'$}
}

\newcommand{\Wgategreen}[2]{
\draw[very thick] (#1-0.5, #2 +0.5) -- (#1+0.5,#2-0.5);
\draw[very thick] (#1-0.5,#2-0.5) -- (#1+0.5,#2+0.5);
\draw[ thick, fill=mygreen!70, rounded corners=2pt] (#1-0.35,#2+0.35) rectangle (#1+0.35,#2-0.35);
\Text[x=#1,y=#2]{$k$}
}

\newcommand{\Measurement}[2]{
\draw[very thick](#1-0.5,#2-0.5)--(#1,#2)--(#1+0.5,#2-0.5);
\draw[fill=orange!30, draw=black, very thick] (#1,#2) circle (0.35);
}
\newcommand{\MYcircle}[2]{
\draw[thick, fill=white] (#1,#2) circle (0.15); }
\newcommand{\MYsquare}[2]{
 \coordinate (Origin) at (#1,#2);
\filldraw [thick, fill=white, even odd rule] ($(Origin)+(-.1cm,-.1cm)$) coordinate (Square) -- ++(0.0cm,0.2cm) -- ++(0.2cm,0.0cm) -- ++(0.0cm,-0.2cm) -- cycle;
 }



\newcommand{\pk}[1]{{\color{blue}[#1]}}
\definecolor{skyblue}{RGB}{135, 206, 235}
\hypersetup{
 pdftitle={Mixed state deep thermalization},
 pdfsubject={Many-body systems},
 pdfauthor={Xie-Hang Yu, Wen Wei Ho, and Pavel Kos},
 pdfkeywords={exact solutions, quantum circuits, quantum many-body systems}
}

\begin{document}
\title[]{Mixed state deep thermalization
}
\author{Xie-Hang Yu}
\affiliation{Max-Planck-Institut f\"ur Quantenoptik, Hans-Kopfermann-Str. 1, 85748 Garching}
\author{Wen Wei Ho}
\affiliation{Department of Physics, National University of Singapore, Singapore 117551}
\affiliation{Centre for Quantum Technologies, National University of Singapore, 3 Science Drive 2, Singapore 117543}
\author{Pavel Kos}
\affiliation{Max-Planck-Institut f\"ur Quantenoptik, Hans-Kopfermann-Str. 1, 85748 Garching}
\begin{abstract}
    We introduce the notion of the \emph{mixed state projected ensemble (MSPE)}, a collection of mixed states describing a local region of a quantum many-body system, conditioned upon  measurements of the complementary region which are incomplete.
    This constitutes a generalization of the pure state projected ensemble in which measurements are assumed ideal and complete, and which has been shown to tend towards  limiting pure state distributions depending only on symmetries of the system, thus representing a new kind of universality in quantum equilibration dubbed {\it deep thermalization}.  We study the MSPE generated by solvable (1+1)d dual-unitary quantum circuit evolution, and identify the limiting   mixed state distributions which emerge at late times depending on the size of the incomplete measurement, which we assume to be lossy, 
    finding that they correspond to certain random density matrix ensembles known in the literature. 
    We also derive the rate of the emergence of such universality.
    Furthermore, we investigate the quantum information properties of the states composing the ensemble, specifically their capacity to teleport quantum information between the ends of the system. 
    The teleportation fidelity is upper bounded by the quantum conditional entropy, which
    we find exhibits a sharp transition from zero to maximal when the number of measurements lost matches of that the number of degrees of freedom to be teleported. 
    Our results initiate the first investigation of deep thermalization for mixed state ensembles, which are relevant for present-day quantum simulation experiments wherein measurements are typically not perfect, and also amount to a physical and natural way of sampling from hitherto abstract random density matrix ensembles. 
\end{abstract}\maketitle

\emph{Introduction---}Understanding how isolated quantum many-body systems thermalize is one of the most important questions driving the study of quantum dynamics. 
The standard framework involves studying the state of a local subsystem $A$ over time --- that is, probe only observables which have support on $A$ --- and ask if it matches that of the Gibbs state, the maximally entropic state subject to conservation laws~\cite{srednicki1994chaos,rigol2008thermalization,nandkishore2015many,Abanin2019Colloquium}. 
In such a formulation, information of the rest of the system $B$ with which $A$ is entangled with, which is assumed to play the role of a bath,
is discarded.

Recent advances in controlling and probing modern-day quantum simulator experiments~\citep{gross2017quantum,choi2023emergent,evered2023high,gross2021quantum,David2022Quantum,Daniel2016Siteresolved,google_2024_quantum,Moses2023Racetrack,liu2025certified} 
have suggested that this framework might, in some cases, be insufficient:
in many such platforms, global measurements can be performed, such that some (classical) information of the bath $B$ might be recorded in addition to the subsystem $A$ of interest.
This has led to a novel perspective on quantum thermalization~\citep{choi2023preparing,cotler2023emergent, ippoliti2023dynamical, ho2022exact}:
the quantum state describing the configuration on $A$ may be studied {\it conditioned} on a  measurement outcome $\bm{\alpha}$ on $B$.
Assuming the measurements are done projectively in a fixed local basis,   considering all conditional states  $\ket{\psi(\bm\alpha)}_{A}$ with measurement outcomes $\bm{\alpha}$  which occur with probability $P_{\bm{\alpha}}$  results in the  so-called \emph{projected ensemble} $\mathcal{E}:=\{P_{\bm{\alpha}},\ket{\psi(\bm\alpha)}_{A}\}$, which can also be understood as a distribution of pure states over the Hilbert space of $A$. 

The projected ensemble subsumes the standard notion of quantum thermalization (its first moment is nothing more than the reduced density matrix), but strictly goes beyond: higher moments encode ergodicity of wavefunctions over the Hilbert space, which is tied to the quantum information scrambling nature of the underlying dynamics~\citep{choi2023preparing,cotler2023emergent, ippoliti2023dynamical, mark2024maximum,zhang2025holographicdeepthermalizationtheory}.

The interest in the projected ensemble lies in the fact that it tends toward limiting distributions (at late times in generic quantum dynamics) which are universal, depending on only coarse grained features of the system such as symmetries and the charge distribution of the initial state. 
This is a novel form of quantum equilibration which has been dubbed \emph{deep thermalization}~\citep{ho2022exact, Ippoliti_2022solvable}.
For example, in the absence of conservation laws, it has been found that the Haar ensemble, a uniform distribution of pure states emerges; while in the presence of conserved quantities, the Scrooge ensemble --- the Haar distribution distorted by the conserved charges --- emerges \cite{mark2024maximum,chang2025deepthermalizationchargeconservingquantum}
~\footnote{More generally, measurement bases may further distort the limiting distributions as they may reveal more or less information about the charges; the resulting distributions are ``generalized Scrooge ensembles''.}.
These have further been argued to be underpinned by generalized maximal entropy principles rooted in quantum information theory, and have   been demonstrated to hold across various physical setting, from spins~\citep{Claeys_2022_emergent, Ippoliti_2022solvable, ippoliti2023dynamical, Harshank2025Nolocality}, 
fermions~\citep{Lucas2023Generalized}, to   continuous variable bosonic systems~\citep{Liu2024GaussianDT}, and even in $1$D and $2$D tensor network states~\citep{lami2025quantumrandomnessemergentconfinement}. 
Importantly, deep thermalization was already experimentally observed in Rydberg quantum simulators~\citep{zhang2025holographicdeepthermalizationtheory}.

However, experimental platforms in reality are never perfect. Many kinds of incoherent errors, like erasure and dephasing in dynamics or measurements, inevitably modify the pure states generated in the projected ensemble into mixed ones, raising questions as to how the universality of deep thermalization will be modified under such scenarios.
In this work, we initiate the investigation of \emph{mixed state deep thermalization}, focusing on the case of a partial loss of measurement outcomes.
We consider the projected ensemble formed under  dynamics of  dual-unitary quantum circuits~\citep{bertini2019exact}, a class of recently-introduced solvable models which has provided powerful and analytical insights into various aspects of quantum dynamics like entanglement growth and quantum chaos~\cite{Akila2016kickedising, bertini2019exact, gopalakrishnan2019unitary,piroli2020exact,claeys2020maximum,rather2020creating,kos2021correlations,bertini2021random,suzuki2022computational}.
We rigorously derive the limiting universal ensembles emerging at late times and identify them as certain random density matrix ensembles known in the random matrix literature~\cite{mehta1991random,zyczkowski2011generating}.
Additionally, we examine the quantum information properties of the conditional states of the projected ensemble, namely their ability to teleport quantum information between distant regions~\citep{Bennett1993teleporting,bao2024finite}.
The teleportation fidelity is bounded from above by the quantum conditional entropy, and we find the latter exhibits a sharp transition from zero to maximal depending on the number of degrees of freedom to be teleported relative to the number of degrees of freedom on which measurement information is lost, signifying a fundamental change in the projected ensemble's quantum information theoretic nature. 
Our results extend the notion of deep thermalization to systems with imperfections, and establishes universality in mixed-state ensembles arising under physical scenarios.

\emph{Mixed state projected ensembles (MSPE)---}Consider a quantum system  in a state {$\ket{\Psi}$}
consisting of two subsystems $A$ and $B$ with $N_A$ and $N_B$ qudits of local dimension $d$, respectively. Subsystem $B$ is further divided into three contiguous sub-parts $B_1,B_2,B_3$ (Fig.~\ref{fig1}) and undergoes rank-$1$ orthogonal measurements, i.e., complete projective measurements. 
However, we assume only the measurement results from $B_1$ and $B_3$ are kept, while those from $m$ sites in $B_2$  are lost. 
These lost measurement outcomes are the particular type of imperfection we consider in our model, leading to mixed states on subsystem $A$ depending on the measurement results of $B_1$ and $B_3$, which we denote by the string vector $\bm{\alpha}=(\bm{\alpha}_{B_1}, \bm{\alpha}_{B_3})$.
For each measurement outcome, there is an associated mixed conditional state on $A$
\begin{equation*}
\rho_{N_{A}}(\bm{\alpha})=\frac{\mathrm{Tr}_{B_2} [(I_{A+B_2}\otimes\bra{\bm{\alpha}})\ket{\Psi}\bra{\Psi}(I_{A+B_2}\otimes\ket{\bm{\alpha}})]
}{P_{\bm{\alpha}}}
,
\end{equation*}
occuring with probability $P_{\bm{\alpha}}=\bra{\Psi}I_{A+B_2}\otimes\ket{\bm{\alpha}}\braket{\bm{\alpha}|\Psi}$. Here $I_{A+B_2}$ is the identity matrix acting on subsystems $A$ and $B_2$, and
$\ket{\bm{\alpha}}=\ket{\bm{\alpha}}_{B_1}\otimes\ket{\bm{\alpha}}_{B_3}$ represents the measurement basis state corresponding to the outcome $\bm{\alpha}$ on $B_1, B_3$. The \emph{mixed state projected ensembles} (MSPE) is the set of all possible mixed states on $A$ combined with their probability $\mathcal{E}:=\{P_{\bm{\alpha}},\rho_{N_A}(\bm{\alpha})$\}.  
Note generalizations of the above set-up can be considered.

\begin{figure}[t]
\includegraphics[width=1.0\columnwidth]{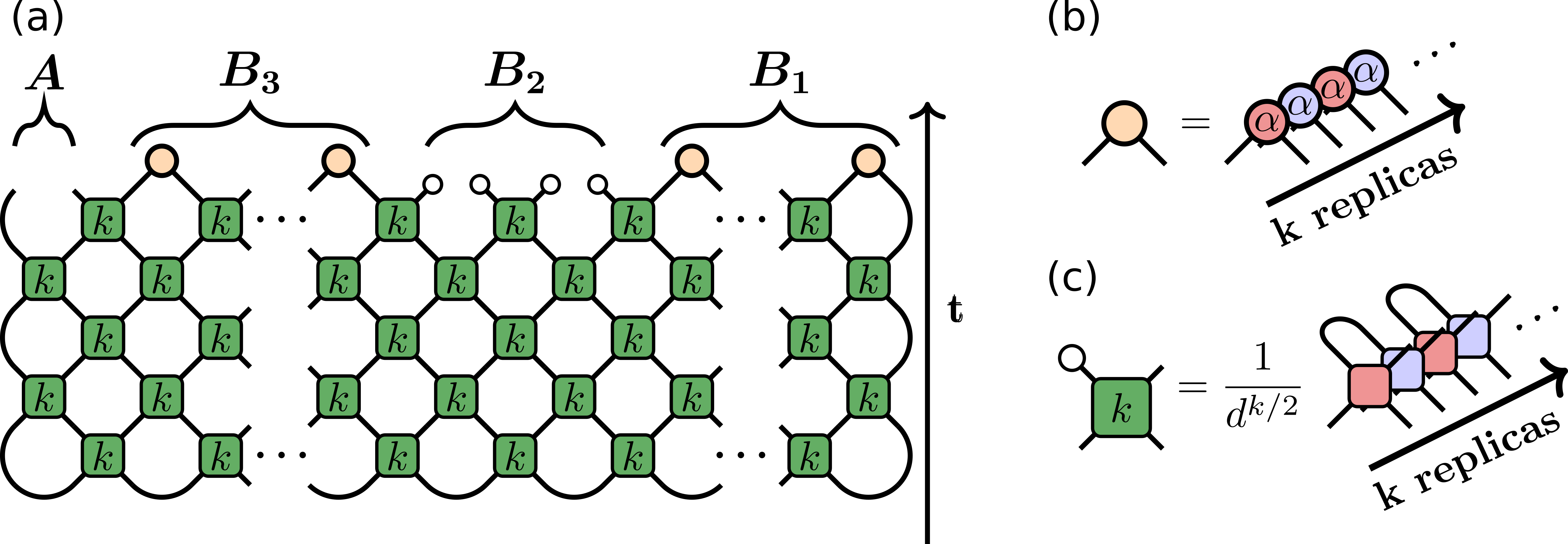}
\caption{
Quench dynamics leading to a mixed state projected ensemble (MSPE). 
(a) The conditional state $\rho_{N_{A}}^{\otimes k}(\bm{\alpha})$ consisting of $k$-pairs of forward and backward evolution ($k$-replica), which together with Born probability $P_{\bm{\alpha}}$ determines the moments of the ensemble $\rho_{N_{A}}^{(k)}$.
A product state composed of maximally entangled pairs evolves under a brick-wall quantum circuit with $t$ layers. The system is bipartited into two subsystems, $A$ and 
$B$, where $B$ is further split into three subparts.
$B$ is measured in $2$-qudit basis, with measurement outcomes from $B_2$ being lost.
We consider $A$'s mixed states conditioned on the measurement outcomes in $B_1, B_3$.
 %
%
(b) The orange circle denotes the projective measurement on two qudits with the measurement outcome $\bm{\alpha}$ in $k$-replicas. 
(c) The white empty bullet corresponds to lost measurement outcomes and is proportional to the $k$-copy of the vectorized identity matrix. This effectively implements a partial trace, connecting the forward and backward time evolutions.
}

\label{fig1}
\end{figure}

The MSPE extends the study of the projected ensembles, which has thus far been focused on pure states, to mixed states, 
and can be understood as a distribution over the space $\mathcal{V}_A$ of density matrices on $A$.
As mentioned, it has been rigorously demonstrated that 
the pure state projected ensemble exhibits universal distributions depending on macroscopic conserved quantities, a phenomenon   known as deep thermalization~\cite{ho2022exact,
ippoliti2023dynamical,Claeys_2022_emergent,mark2024maximum}.
Here, we extend this inquiry to the MSPE, asking whether its distribution converges to a universal one over $\mathcal{V}_A$, that is, one that does not depend on the microscopic details of the model. If so, we call this \emph{mixed-state deep thermalization}.

The distribution of state ensembles can be characterized by their moments. The $k$-th moment is
\begin{equation}
\rho_{N_{A}}^{(k)}=\sum_{\bm{\alpha}}P_{\bm{\alpha}}\rho_{N_{A}}^{\otimes k}(\bm{\alpha}),
\label{eq:moment_definition}
\end{equation}
which is a density matrix acting on the $k$-copy of the subsystem $A$, see Fig.~\ref{fig1} for a graphical expression.
If projected ensemble tends to a universal ensemble $\mathcal{E}'$, $\rho_{N_{A}}^{(k)}$  should match
$\rho_{\mathcal{E}'}^{(k)}=\int\rho^{\otimes k}d\mu(\rho)$ for each  $k$, where $d\mu(\rho)$ is the measure induced by   $\mathcal{E}'$. To quantify the similarity between the two ensembles, we use the Schatten norm
$
\Delta_{\xi}^{(k)}=\frac{\|\rho_{N_{A}}^{(k)}-\rho_{\mathcal{E}'}^{(k)}\|_{\xi}}{\|\rho_{\mathcal{E}'}^{(k)}\|_{\xi}},
$
with $\xi$ the Schatten index.
In this letter, we focus on $\xi=1$ and $2$. 
For $\forall k$, if $\Delta_\xi^{(k)}$ vanishes (is smaller than $\epsilon$), then we say that the MSPE is exactly ($\epsilon$ approximately) deeply thermalized to the ensemble $\mathcal{E}'$. 

\emph{Solvable models and results on $k$-moments---}
We analytically show MSPE arising from a particular treatable protocol, a generalization of the one considered in Ref.~\cite{ippoliti2023dynamical}. 
We take the quantum state of $N_A+N_B$ qudits to
arise from a $1+1$D quantum circuit dynamics, to wit: starting from a product state composed of maximally entangled pairs $\ket{\varphi_0}=\sum_i^d \ket{ii}/\sqrt{d}$ as  $\ket{\psi}= \ket{\varphi_0}^{\otimes (N_A+N_B)/2}$,
the system is evolved with a brick-wall quantum circuit with $t$ layers ($t$ is odd in Fig.~\ref{fig1}~\footnote{If $t$ is even, we can measure the rightmost output qudit in the computational basis, which does not change the results in this letter}),
with each two-qudit gate independently sampled from an open set of dual-unitary gates~\citep{bertini2019exact}, i.e., gates which are unitary when viewed both bottom to top and left to right. The distribution of each gate has a non-zero weight on an open subset of the dual-unitary gates, which allows the spatial evolution to consist of a universal gate set as proved in Ref. \cite{ippoliti2023dynamical}. We focus on models without any global conservation laws, and on the regime  $t+1\geq N_A$.

We assume the system is measured in subsystem $B$ using a tensor product of two-site measurements. Precisely, the measurement  basis is generated from $\ket{\varphi_0}$ acted by the Heisenberg-Weyl operators~\citep{asadian2016heisenberg} $\sigma_\alpha$ as $\ket{\varphi_\alpha}=I\otimes\sigma_\alpha\ket{\varphi_0}$ with $\alpha=\{1,\cdots,d^2\}$. These measurement bases, generalizing the Bell measurement to the higher dimensions, have the maximum bipartite entanglement and are orthogonal to each other. Notably, they satisfy the solvable conditions for dual-unitary circuits~\citep{piroli2020exact}. 

We focus our discussion on the $k$-th moment $\rho_{N_A}^{(k)}$, diagrammatically shown in Fig.~\ref{fig1} (a) as a tensor network. The figure is depicted in the vectorized $k$-replica format, where forward and backward time evolutions are folded on top of each other. 
Each green gate consists of $k$ copies of forward and backward evolution
$
\begin{tikzpicture}[baseline=(current  bounding  box.center), scale=0.7]
\Wgategreen{0}{0}
\end{tikzpicture}
=(U\otimes U^{*})^{\otimes k}.
$
Moreover, the projective measurement with outcome $\alpha$ is denoted by an orange circle in Fig.~\ref{fig1} (b). We take the subsystems $B_1$ and $B_3$ to be infinitely long,  a condition which can be relaxed~\footnote{This assumption can be relaxed. As demonstrated in Refs.~\citep{suzuki2024global,riddell2025quantum}, the formation of unitary $k$-design is exponentially fast in the length of $B_1$ and $B_3$. 
Consequently, our results also hold for finite $B_1$ and $B_3$ as an $\epsilon$-approximate deep thermalization happens in a finite chain whose length scales as $\log(1/\epsilon)$.}
. 
In region $B_2$, the measurement outcomes on $m$ sites are assumed forgotten, which is equivalent to erasing these sites. Therefore, this region is traced out, i.e., the forward and backward time-sheets are connected.
In Fig.~\ref{fig1} (c), this partial trace is represented by $m$ empty bullets. Each empty bullet is proportional to the vectorized identity operator acting on the $k$ replicas, as
$
\begin{tikzpicture}[baseline=(current  bounding  box.center), scale=0.5]
\draw[very thick](0.4,0.4)--(0.8,0.8);
\MYcircle{0.8}{0.8}
\end{tikzpicture}
\!=\!
\frac{1}{d^{k/2}}
(\begin{tikzpicture}[baseline=(current  bounding  box.center), scale=0.6]
\draw[very thick](0,0)--(0.5,0.5)--(0.65,0.35)--(0.15,-0.15);
\end{tikzpicture})^{\otimes k}
=
\frac{1}{d^{k/2}}(\sum_{i=1}^d \ket{ii})^{\otimes k}
$ .

By an explicit calculation, we show deep thermalization for mixed states arising from such erasures of measurement outcomes or, equivalently, in the presence of totally depolarizing noise (see discussion and outlook).  
Concretely, for almost all realizations of the circuit model described above, we prove the following:\\

\emph{
1) Taking the circuit depth $t\to\infty$, our MSPE deeply thermalizes to the so-called generalized Hilbert-Schmidt ensemble, Eq.~\eqref{eq:HS_ensemble}.}

\emph{
2) At large $t$, this approach is exponentially quick.
For the  $k$-th moments to be $\epsilon$ close in both the two and trace norms, it takes time $t_k\sim\log(k/\epsilon)$.\\}

These two results hold for any ratio of $m/N_A$.
The generalized Hilbert-Schmidt ensemble is defined by tracing out $m$ qudits from a Haar random states on $N_A+m$ qudits:
\begin{equation}
    \mathcal{E}=\bigg\{\mathrm{Tr}_m \ket{\psi}_{N_A+m}\bra{\psi}\bigg{|} \ket{\psi} \sim \mathrm{Haar}(d^{N_A+m})\bigg\}.\label{eq:HS_ensemble}
\end{equation}
For $m=N_A$ this ensemble reduces to the so-called Hilbert-Schmidt ensemble known in the random matrix theory literature~\cite{sommers2004statistical,zyczkowski2011generating}.

The emergence of the generalized Hilbert-Schmidt ensemble as the limiting form of the MSPE under measurements which are lossy, constitutes one of the key results of our work. It is further a physically  reasonable result: 
intuitively, this ensemble can be viewed as the most ergodic mixed-state ensemble appearing under the scenario of lost measurement information. 
Indeed, suppose the number of erased measurement outcomes is zero ($m=0$), then the MSPE reduces to the  pure state ensemble of Haar random states expected from deep thermalization~\cite{cotler2023emergent,ho2022exact, Claeys_2022_emergent, Ippoliti_2022solvable, ippoliti2023dynamical},   which is the most  ergodic pure state distribution at infinite temperature (no conservation laws).
When $m > 0$, we  expect a loss of purity  for each state on $A$  that decreases exponentially with $m$; the generalized Hilbert-Schmidt ensemble indeed has this behavior.

We  briefly sketch the proof of our results (1) and (2). This involves a  computation of the moments $\rho_{N_A}^{(k)}$. For that, we analyze the evolution of the $k$-replicated system in the spatial direction from right to left. 
Assuming that $B_1$, $B_3$ are infinitely long, since the gate composed of the circuits form a universal gate set, the circuits there effectively form a unitary $k$-design~\cite{ho2022exact,ippoliti2023dynamical} 
in the space direction. 
The steady state, therefore, must commute with $U^{\otimes k}$ for any $U$ \cite{ippoliti2023dynamical}.
This subspace is spanned by the global permutations of $k$-replicas, so it suffices to track the coefficients of each permutation element. The erased sites in $B_2$ behave as totally depolarizing channels, which modify the coefficients of different permutation elements. Lastly, the gates in the subsystem $A$ implement an isometry that preserves these coefficients. Our results follow by combining all these effects and the replica trick, which we provide in more detail in the End Matter and the Supplemental Material~\citep{SM}.

Another interesting limit is to keep $t$ fixed ($t+1\geq N_A$) while increasing either $d$ or both $N_A,m$  to infinity. In the latter, the ratio $\gamma=m/N_A$ is fixed and $t+1\geq m$.
In the pure-state Haar random ensemble, the $k$-th moment is an equally weighted sum of permutation operators on the $k$-replicated Hilbert space. 
Here we find that instead the leading contribution (in $d$) for $\rho_{N_A}^{(k)}$ of MSPE comes only from the identity permutation, even if only a single measurement outcome is erased. The contributions from other permutation elements are suppressed as $1/d^m$ \footnote{However, for large but finite $d$, the highly degenerate subleading coefficients matter when considering the $k$-th moment for $k\gg d$,(see Eq.~(\ref{eq:nextOrderD}) in the End Matter).}. It is tempting to argue that in this limit, the MSPE is indistinguishable from the totally mixed state ensemble, where every conditional mixed state is identical to the maximally mixed state. However, this is not always the case. 
The difference from the totally mixed state ensemble can be resolved by, for example, looking at some nonlinear global quantities, such as Renyi-$k$ entropies, which can amplify the subleading contributions, or looking directly at the distribution of the eigenvalues of the mixed states~\footnote{We say more about distribution of the eigenvalues in the discussion and the Supplemental Material~\cite{SM}. }. 

\begin{figure}
\includegraphics[width=1.0\columnwidth]{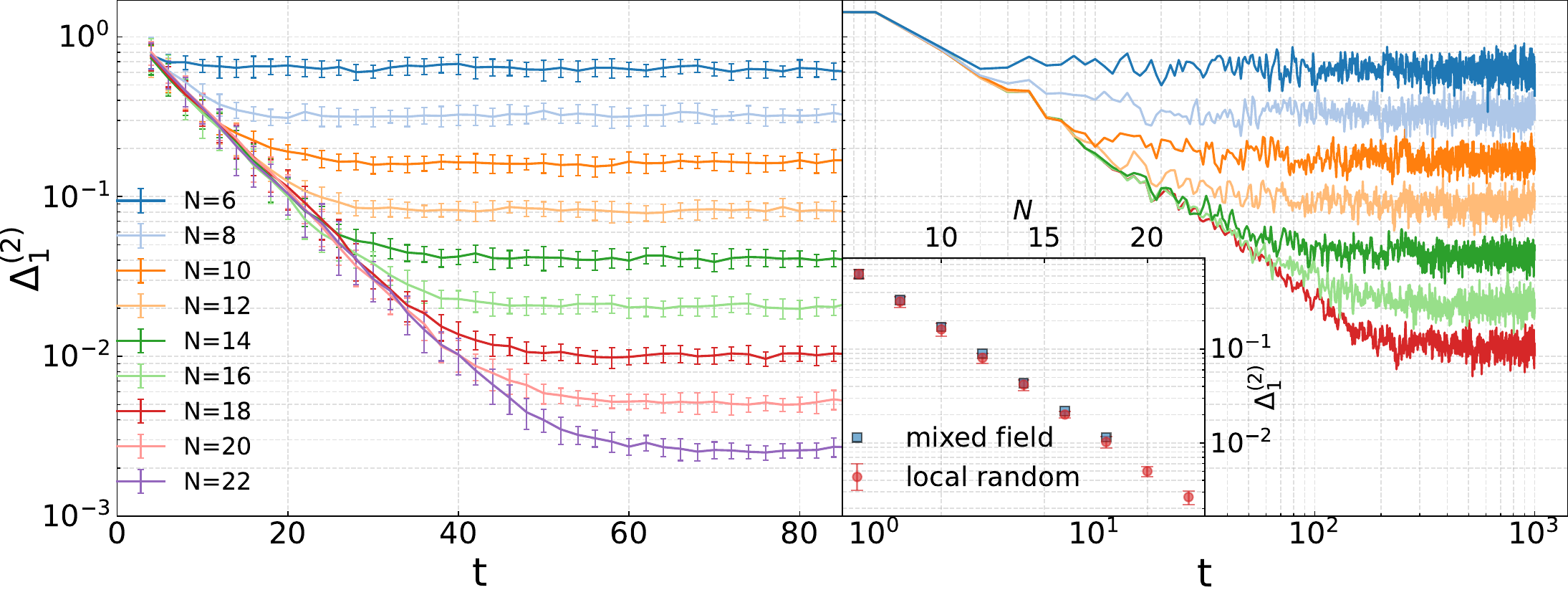}
\caption{
Distance of $\rho_{N_{A}}^{(k)}$ from the generalized Hilbert-Schmidt ensemble. Left: local Haar-random circuits for system size $N$ from $6$ to $22$, where error bars indicate statistical fluctuations over different circuit realizations. Right: mixed-field Ising dynamics for $N$ from $6$ to $18$, governed by the Hamiltonian $H=\sum_{j=1}^N(h_x\sigma_x^j+h_y\sigma^j_y)+\sum_{j=1}^{N-1}J\sigma_x^j\sigma_x^{j+1}$, with $\sigma_\alpha^j$ the Pauli matrix on $j$-th qubit and $(h_x,h_y,J)=(0.8090,0.9045,1.0)$.
Inset: saturated deviation $\Delta_1^{(2)}$ at late times versus the system size.
} 
\label{fig2}
\end{figure}

\emph{Generality of mixed state deep thermalization---}
Here we establish universality of mixed-state deep thermalization beyond the dual-unitary setting. 
Analytically, we can show that the limiting distribution of the MSPE is the generalized Hilbert–Schmidt ensemble   for global Haar-random circuits with $N_B$ larger than $\Omega(k N_A)$ and for local random unitary circuits with brick-wall structure in the large-$d$ limit,
see the End Matter and SM~\cite{SM}. 

We numerically demonstrate this by computing the distance $\Delta_1^{(2)}$ between the MSPE and the generalized Hilbert–Schmidt ensemble for both random unitary circuits and Hamiltonian mixed-field Ising dynamics with finite system size $N=N_A+N_B$ and local dimension $d=2$, see Fig.~\ref{fig2}. The results show that the saturation value of $\Delta_1^{(2)}$ in all cases decay  exponentially with system size, 
in line with our predictions. 
We further find that the decay rate depends on the dynamics (but not on the system size): for random unitary circuits, $\Delta_1^{(k)}$ decays exponentially in time before saturation, whereas for Hamiltonian dynamics the decay obeys a power law (this is presumably attributed to energy conservation).
In~\cite{SM} we provide further data on the kicked Ising model and dual-unitary circuit at finite system sizes. A detailed analysis of finite-time and finite-systems behavior across different models is left for future work.

\emph{Quantum conditional entropy and state teleportation---} We return  to our analytical setting of dual-unitary circuits. In the limit $d\to\infty$, the amount of information lost when even only a single measurement outcome is erased, $\log(d)$, diverges. This, as well as the fact that the leading (in $d$) contribution comes from the identity permutation, suggests that the MSPE should be indistinguishable from the totally mixed state. Nonetheless, the ratio between the lost and the retained information is always fixed to $m/N_A$. Therefore, how much quantum information can be transported through the region $B_2$ is not clear.
To address this, we examine the average ability of MSPE to {\it teleport} a quantum state through the bulk~\cite{bao2024finite,lovas2024quantum}. Concretely, we consider the following  set-up: we leave a single qudit at the rightmost boundary of the chain unmeasured, called the reference $R$, which will act as an input for an arbitrary state for our teleportation thought-experiment. We will try to recover the quantum information sent into $R$ from region $A$, mediated by the lossy measurements in $B$. Diagrammatically, this looks like:
\begin{equation*}
\begin{tikzpicture}[baseline=(current  bounding  box.center), scale=0.4]
\foreach \i in {0,2,5,7,9,12}
{
\foreach \j in {0,2}
{
\Wgategreen{\i}{\j}
}
}
\foreach \i in {-1,1,4,6,8,10,13}
{
\foreach \j in{1,3}
{
\Wgategreen{\i}{\j}
}
}
\Text[x=3.1,y=2]{$\bm{\cdots}$}

\Text[x=10.9,y=2]{$\bm{\cdots}$}

\foreach \i in {2,9,12}
{
\Measurement{\i}{4}
}
\foreach \i in{5,7}
{
\MYcircle{\i-0.5+0.1}{3.5+0.1}
}
\foreach \i in {6,8}
{
\MYcircle{\i-0.5-0.1}{3.5+0.1}
}
\foreach \i in {-2,0,3,5,7,9,12}
{
\draw[very thick]
  (\i+0.5,-0.5) to[out=-45, in=225] (\i+1.5,-0.5);
}
\foreach \i in {-1,1}
{
\draw[very thick]
  (13.5,\i+0.5) to[out=45, in=315] (13.5,\i+1.5);
}
\draw[very thick] (13.5,3.5)--(13.5,4);
\draw[very thick] (-0.5,3.5)--(-0.5,4);
\draw[very thick] (0.5,3.5)--(0.5,4);
\foreach \i in {-1,1}
{
\draw[very thick]
  (-1.5,\i+0.5) to[out=135, in=225] (-1.5,\i+1.5);
}
\draw[very thick] (-1.5,3.5) -- (-1.5,4);
\Text[x=-0.5,y=4.75]{\large{$\bm{A}$}}
\Text[x=13.5,y=4.75]{\large{$\bm{R}$}}
\draw[->,very thick](14.5,-0.5)--(14.5,5);
\Text[x=15,y=2.5] {\pmb{t}}
\end{tikzpicture}.
\end{equation*}
Now, the fidelity of teleportation can be upper bounded by the conditional entropy between the subsystem $A$ and the reference $R$~\citep{cerf1997negative}, which matches the coherent information~\citep{Schumacher1996Quantum} viewing evolution through the bulk as a quantum channel. 
For simplicity, we compute the annealed average Renyi-$k$ conditional entropy $I_{R:A}^{(k)}$
\begin{equation}
I_{R:A}^{(k)}=S_{AR}^{(k)}-S_{A}^{(k)}=-\frac{1}{k-1}\log\frac{\overline{\mathrm{Tr}\rho_{AR}^{k}}}{\overline{\mathrm{Tr}\rho_{A}^{k}}}, \label{eq:conditional_entropy_expression}
\end{equation}
where a bar indicates the average over the measurement outcomes  with associated Born probabilities.

The essence of computing the conditional entropy is to evaluate the Renyi-$k$ purities $\mathrm{Tr}(\overline{\rho^k})$,
expressible using the $k$-th moment of the density matrix ensemble. For example, $\mathrm{Tr}(\overline{\rho_A^2})=\mathrm{Tr}(\mathrm{SWAP}\rho_A^{\otimes 2})$ where $\mathrm{SWAP}$ is the standard swap operator over the two replicas and $\rho_A^{\otimes 2}=\overline{\rho_A\otimes \rho_A}$ is the second moment of $\rho_A$.  
Thus, the previously used methods including the replica trick apply here as well. 
We emphasize that different boundary conditions must be imposed on the reference $R$ site depending on whether $S_A^{(k)}$ or $S_{AR}^{(k)}$ is evaluated, which results in the non-trivial conditional entropy~\cite{SM}.

We provide the evaluation of Eq.~(\ref{eq:conditional_entropy_expression}) in the Supplemental Material~\cite{SM}.

In the limit $t\to\infty$ and $m,N_A\to\infty$ while the ratio $\gamma=m/N_A$ is fixed (or equivalently $d\to\infty)$, we can identify two phases for the capacity of quantum state teleportation. If $N_A>m$, $I_{R:A}^{(k)}=-\log(d)$ for any real $k\geq0$, which indicates the possibility for a perfect quantum state teleportation~\cite{cerf1997negative}. This agrees with our intuition at the beginning of this section. The extreme case $m=0$ reflects the fact that the space evolution is unitary, and no information is lost. For $m>0$, the erased sites are sparse compared to $N_A$, and the system still preserves enough quantum information to teleport the state. We call this the teleportation phase.

Conversely, if $N_A<m$, $I_{R:A}^{(k)}=\log(d)$, indicating that the subsystem $A$ and the reference $R$ are decoupled. This can be explained by the monogamy of entanglement~\cite{lovas2024quantum}. In this case, the number of erased sites $m$ is larger than the subsystem size $N_A$, and all the sites in $A$ are entangled with the environmental degrees of freedom~\footnote{We refer to the environmental degrees of freedom arising from the purification of tracing out $m$ sites in $B_2$.}; they cannot be entangled with the reference qudit $R$ anymore, and thus no state teleportation is possible. This corresponds to the decoupled phase.

The above two phases are separated by the transition point $N_A=m$, where the size of subsystem $A$ matches the number of erased sites. There $I^{(k)}_{R:A}=0$. Since this sharp transition holds for any $k\geq0$, we have analytically shown a sharp transition from the teleportation phase to the decoupled phase.

\emph{Discussion and outlook---}
In this Letter, we have introduced the notion of the mixed state projected ensemble (MSPE), a statistical description of a local subregion of a quantum many-body state, conditioned upon measurements on the complement which are imperfect. We have identified the limiting forms of the MSPE under the scenario of lossy measurements in dual-unitary quantum circuit evolution, results which represent a novel form of universality found in quantum equilibration. This can be understood as the conceptual generalization of deep thermalization, recently introduced for pure state ensembles, to the setting of mixed states.
Furthermore, we investigated the quantum information theoretic property of the MSPE arising in our model, 
and identified a phase transition in terms of the ability of the MSPE 
to
teleport quantum information between distant regions, which depends on the  ratio of the number of erased measurement outcomes to the size of the subsystem of interest. Our findings are particularly relevant for NISQ-era quantum devices, where imperfections and noise are unavoidable, and also provides a practical scheme to generate random states, which has applications in 
myriad quantum information tasks~\cite{Neill2018blueprint,Cross2019Validating,arute2019quantum,bouland2019complexity,Alagic2018Unforgeable}. An experimentally verifiable consequence of mixed-state deep thermalization is the emergence of the so-called Erlang distribution~\citep{Leemis01022008, Shaw2025Experimental, mandal2025partialprojectedensemblesspatiotemporal} governing the probability of probabilities  of bit-string outcomes~\citep{zhang2025holographicdeepthermalizationtheory,Shaw2025Experimental} upon measuring a projected state of the MSPE, different from the Porter–Thomas distribution known for pure-states~\citep{mark2024maximum}.

While in this work we have characterized the emergent random density matrix ensembles by their finite moments (c.f.~Eq.~\eqref{eq:moment_definition}), alternatively, one may study the induced eigenvalue-eigenvector distribution of density matrices on the subregion $A$. 
This is done in the Supplemental Material~\cite{SM}. 
For the MSPE and Hilbert-Schdmit ensembles studied in this work, the eigenvectors can be shown to always be distributed uniformly. The eigenvalue distribution for the MSPE in the limit $N_A\to\infty$ at fixed ratio $\gamma=m/N_A$  is more nontrivial:
we show the variance of the distribution is exponentially suppressed in $t$, so in this limit it reduces to a delta distribution matching the generalized Hilbert-Schmidt ensemble~\citep{sommers2004statistical,VA1967Distribution}. Remarkably, each instance in this limiting ensemble has a flat spectrum, which supports the intuition that the MSPE is an ergodic ensemble at infinite temperature for given $N_A$ and $m$.

Our work also opens new directions of inquiry. Here we have focused on the loss of measurement records on a small, non-extensive subregion ($m$ consecutive sites) in generating the MSPE. It would be interesting to understand the resulting universality in the case when the erased measurements have finite density. An immediately tractable 
setting is 
a sparse configuration where the erased measurements are spread randomly  in space, with sufficiently long regions of retained measurements in between. 
 At leading order of $d^{t}$, a quick computation yields the same results for deep thermalization and quantum state teleportation, see Supplemental Material~\citep{SM};  however, the two models appear to differ in their higher-order corrections of $d^{t}$. For dense measurement loss, we can prove that for global Haar-random circuits, the MSPE still deeply thermalizes to the generalized Hilbert-Schmidt ensemble~\citep{SM}. We conjecture it will be a general result for other dynamics. Notice that in the limit $m, N_A\to\infty$ while $N_A/m\to 0$, each instance in the generalized Hilbert-Schmidt ensemble is similar to the totally mixed state.

We may also consider different error models beyond the erasure errors  considered in this paper.
For example, discarding measurement outcomes 
is  equivalent to applying {\it totally} depolarizing channels to $B_2$ before measuring it.
This suggests a simple generalization ---application of 
{\it partial} depolarizing channel, 
e.g. defined by 
$\rho\to P\rho+(1-P)I/d$ (on $B_2$), with a tunable parameter $P$.
For $P=1$ we recover the pure state Haar ensemble, for $P=0$ the MSPE studied in this letter, and it remains an open question  what universal ensemble is approached in deep thermalization for general $P$. We may ask similar questions for other kinds of noisy quantum evolution and measurements, as well as dynamics in the presence of symmetries, conservation laws and beyond spin systems. 
For example, Ref.~\citep{milekhin2024observableprojectedensembles} considers related questions in conformal field theory with U(1) charges.

More fundamentally, it would be very interesting to understand how the universal mixed-state ensembles unveiled in this work, as well as those arising in future work discussed above, can be argued for in terms of general physical  principles like the principle of maximum entropy. Indeed, in deep thermalization of pure state ensembles, such a principle has been successfully formulated using ideas from quantum information theory, see Refs.~\cite{mark2024maximum, Liu2024GaussianDT,Varikuti_2024_unraveling,chang2025deepthermalizationchargeconservingquantum}, and can be understood as a generalization of the second law of thermodynamics. It would be very exciting to extend similar notions to mixed state projected ensembles, which would entail a significant  fundamental advancement of our understanding of quantum statistical mechanics.

\begin{acknowledgments} 
\emph{Acknowledgments.}---We thank  Ignacio Cirac for valuable discussions. W.~W.~H.~is supported by the National Research Foundation (NRF), Singapore, through the NRF Felllowship NRF-NRFF15-2023-0008, and through the National Quantum Office, hosted in A*STAR, under its Centre for Quantum Technologies Funding Initiative (S24Q2d0009).  P.~K.~acknowledges financial support from the Alexander
von Humboldt Foundation. 
\end{acknowledgments}

\bibliography{MyCollection}

\begin{appendices}
\appendix
\section*{End Matter}
\twocolumngrid
\section{$k$-th Moment}
\begin{figure}[t]
\centering
\begin{tikzpicture}[baseline=(current  bounding  box.center), scale=0.5]
\foreach \i in {0,2,5,9,12}
{
\foreach \j in {0,2,4}
{
\Wgategreen{\i}{\j}
}
}
\Wgategreen{7}{0}
\Wgategreen{7}{2}
\foreach \i in {-1,1,4,6,8,10,13}
{
\foreach \j in{1,3}
{
\Wgategreen{\i}{\j}
}
}
\Text[x=3.1,y=0]{$\bm{\cdots}$}
\Text[x=3.1,y=4]{$\bm{\cdots}$}
\Text[x=10.9,y=0]{$\bm{\cdots}$}
\Text[x=10.9,y=4]{$\bm{\cdots}$}
\foreach \i in {1,4,10,13}
{
\Measurement{\i}{5}
}
\foreach \i in{6}
{
\MYcircle{\i-0.5+0.1}{4.5+0.1}
}
\foreach \i in {9}
{
\MYcircle{\i-0.5-0.1}{4.5+0.1}
}
\MYcircle{6.6}{3.6}
\MYcircle{7.4}{3.6}
\foreach \i in {-2,0,3,5,7,9,12}
{
\draw[very thick]
  (\i+0.5,-0.5) to[out=-45, in=225] (\i+1.5,-0.5);
}
\foreach \i in {-1,1,3}
{
\draw[very thick]
  (13.5,\i+0.5) to[out=45, in=315] (13.5,\i+1.5);
}
\foreach \i in {-1,1,3}
{
\draw[very thick]
  (-1.5,\i+0.5) to[out=135, in=225] (-1.5,\i+1.5);
}
\draw[very thick,dashed](0.4,-1.5)--(0.4,6);
\draw [decorate, decoration={brace, amplitude=10pt}, very thick] (0.5,5.25) -- (4.5,5.25);
\draw [decorate, decoration={brace, amplitude=10pt}, very thick] (5.5,5.25) -- (8.5,5.25);
\draw [decorate, decoration={brace, amplitude=10pt}, very thick] (9.5,5.25) -- (13.5,5.25);
\Text[x=2.5, y=6.5]{\large{$\bm{B_3}$}}
\Text[x=7, y=6.5]{\large{$\bm{B_2}$}}
\Text[x=11.5, y=6.5]{\large{$\bm{B_1}$}}
\draw[->,very thick](14,-1.5)--(14,6);
\Text[x=14.5,y=2.5] {\pmb{t}}
\Text[x=0.4,y=-2]{\large$\bm{\rho_\mathrm{f}}$}
\draw[very thick,dashed,color=blue] (5.25,4.65)--(6.45,3.45)--(6.45,-1);
\Text[x=6.2,y=-1.5]{\large$\bm{\rho_\mathrm{a}}$}
\draw[very thick,dashed,color=blue] (8.65,4.65)--(7.45,3.45)--(7.45,-1);
\Text[x=7.7,y=-1.5]{\large$\bm{\rho_\mathrm{b}}$}
\draw[very thick, dashed,color=blue](13.45,5.0)--(13.45,-1);
\Text[x=13.45,y=-1.5]{\large$\bm{\rho_\mathrm{i}}$}
\draw[->,very thick](13.4,-2)--(4,-2);
\Text[x=8.7,y=-2.5]{Horizontal Evolution}
\end{tikzpicture}
\caption{Quench dynamics leading to MSPE. The symbols have the same meaning as in Fig.~\ref{fig1}. Here, we use the unitarity to simplify the 
 region $B_2$. We use the dashed line to indicate the states on the time-like cuts, which appear in the derivation of the results.}
\label{fig:End_matterkmomentSM}
\end{figure}

Here, we describe the calculation of the $k$-th moment $\rho_{N_A}^{(k)}$. 
It is convenient to consider the spatial evolution progressing from right to left by exchanging the roles of space and time. In this way, the initial state comprises of maximally entangled pairs over $t+1$ sites in all $k$ replica copies, as indicated in Fig.~\ref{fig:End_matterkmomentSM} $\rho_\mathrm{i}=\left(\ket{\varphi_0}^{\otimes (t+1)/2}\bra{\varphi_0}^{\otimes (t+1)/2}\right)^{\otimes k}$. 
The state evolved with $N_B$ layers of brick-wall circuit made of dual-unitary gates, which are also unitary for this evolution in space direction.
The original initial state becomes an open boundary condition on the bottom boundary (i.e. an identity matrix on a single site).
Projective measurements with outcome $\alpha$ instead imply the action of a single site unitary $\sigma_\alpha$ on the other (top) boundary. Therefore, if the measurement outcome is recorded, the evolution is unitary.

For any measurement outcome $\bm{\alpha}_1$ in $B_1$, we denote the spatial unitary in the subpart $B_1$ as $U_{\bm{\alpha}_1}$, whose unitarity is guaranteed by dual-unitary property of the circuit. Due to the assumption that the subpart $B_1$ is infinitely long, the  measurement outcome averaged state $\rho_\mathrm{b}$ on the time-like cut thus reaches its steady state before entering $B_2$, determined by the eigenvectors of the inhomogeneous spatial transfer matrix with the largest eigenvalues. Since any open set of the dual-unitary gate forms a universal gate set~\citep{suzuki2022computational}, $U_{\bm{\alpha}_1}$ then explores the whole space of the unitary group~\citep{ippoliti2023dynamical,riddell2025quantum} for different measurement outcomes. According to Ref.~\cite{ippoliti2023dynamical}, the steady state must commute with $U_{\bm{\alpha}_1}^{\otimes k}$ for any $\bm{\alpha}_1$, which is equivalent to any unitary. By Schur-Weyl duality, the subspace of the steady states 
is exclusively given by the representation $\rho_{t+1}(g)$ of a permutation element $g\in\mathbb{S}_k$ on the $k$ replicas, each with $t+1$ qudits living on the time-like cut.
Here $\mathbb{S}_k$ is the permutation group for $k$-replicas. Concretely, 
\begin{equation}
\begin{aligned}
&\rho_{t+1}(g)\\=&\sum_{i_1,i_2,\cdots,i_k=1}^{d^{t+1}}\ket{i_{g(1)}i_{g(2)}\cdots i_{g(k)}}\bra{i_1i_2\cdots i_k }.\label{eq:specific_representation}
\end{aligned}
\end{equation}
Note that this is a global state spanning all $t+1$ sites in each replica space, as indicated by the subscript of $\rho(g)$.
In the replica picture, the initial state is totally symmetric, thus $\rho_\mathrm{b}$ is an equal superposition of all these dominant eigenvectors
\begin{equation}
\rho_\mathrm{b} = \sum_g \rho_{t+1}(g).
\end{equation}
Here we omit a possible normalization factor, which does not influence the results.

In the region $B_2$, measurement outcomes are erased so that the action on the topmost $m/2$ qudits is given by a totally depolarizing quantum channel
$
   \frac{1}{d^2}\sum_\alpha \sigma_\alpha (\cdot) \sigma_\alpha^\dagger.
$
Using the unitarity of the circuit, region $B_2$ can be simplified as shown in Fig.~\ref{fig:End_matterkmomentSM}.
As a result, the time-like state after the depolarization channel, $\rho_\mathrm{a}$ in Fig.~\ref{fig:End_matterkmomentSM}, is no longer an eigenstate of the spatial transfer matrix.

Since $B_3$ is also infinitely long, it projects $\rho_\mathrm{a}$ back to the corresponding eigenspace. However, due to the replica structure in $\rho_\mathrm{a}$, the final steady state, $\rho_\mathrm{f}$ in Fig.~\ref{fig:End_matterkmomentSM}, does not correspond to the equal superposition of the permutations $\rho_{t+1}(g)$. 
The overlap between $\rho_{t+1}(g)$ and $\rho_\mathrm{f}$ can be determined from the overlap between $\rho_{t+1}(g)$ and $\rho_\mathrm{a}$ as $\rho_{t+1}(g)$ remains invariant under the evolution in $B_1$: $\mathrm{Tr}\left(\rho_{t+1}^\dagger(g)\rho_\mathrm{f}\right)=\mathrm{Tr}\left(\rho_{t+1}^\dagger(g)\rho_\mathrm{a}\right)$. 
It is important to note that the matrices $\rho_{t+1}(g)$ are not mutually orthogonal with respect to the Frobenius inner product. Taking this into account, we express $\rho_\mathrm{f}$ as a linear combination of $\rho_{t+1}(g)$: $\rho_\mathrm{f}=\sum_g \alpha(g)\rho_{t+1}(g)$ where $\alpha(g)$ satisfies a series of equations

\begin{equation}
\begin{aligned} & \sum_{g'}\frac{d^{ml(g)/2+ml(g')/2+(t+1-m/2)l(g^{-1}g')}}{d^{k(m/2+t+1)}}\\
= & \sum_{g'}\frac{d^{(t+1)l(g^{-1}g')}}{d^{(t+1)k}}\alpha(g')\ \mathrm{for}\ \forall g.
\end{aligned}\label{eq:equation_for_alpha_consecutive}
\end{equation}
Here $l(g)$ denotes the number of cycles~\citep{mehta1991random} in the permutation element $g$.


The state $\rho_\mathrm{f}$ completely determines the output $k$-th moment $\rho^{(k)}_{N_A}$ in region $A$.
Since the gates in the region $A$ are also dual-unitary,
they form an isometry $W$ from the time-like state to the spatial state with $WW^\dagger = I_{N_A}$.

This isometry does not mix representations of different permutation elements
as $W^{\otimes k}\rho_{t+1}(g)W^\dagger{}^{\otimes k}=\rho_{N_A}(g)$, see the Supplemental Material~\citep{SM} for details.
Here $\rho_{N_A}(g)$ is the representation of $g$ defined analogously to Eq. (\ref{eq:specific_representation}), except that each replica contains $N_A$ sites and thus the sum goes to $d^{N_A}$.
Therefore, $\rho_{N_A}^{(k)}$ has exactly the same weight of each permutation element as $\rho_\mathrm{f}$. We emphasize that the above calculation has not included the Born probability $P_{\bm{\alpha}}$, which can be included using the replica trick~\cite{SM}. Taking it into account  does not change the result. Consequently, the $k$-th moment of the distribution describing  MSPE is expressed as $\rho_{N_A}^{(k)}=\sum_g\alpha(g)\rho_{N_A}(g)$ and entirely determined by Eq. (\ref{eq:equation_for_alpha_consecutive}).

To the first order expansion of Eq.~(\ref{eq:equation_for_alpha_consecutive}) in large $t$ ($d^t$) limit, we can replace $l(g^{-1}g')$ with a delta function $d^{(t+1)l(g^{-1}g')}\approx d^{(t+1)k}\delta_{g,g'}$. Therefore, Eq. (\ref{eq:equation_for_alpha_consecutive}) simplifies to
\begin{equation}
\alpha(g)=\frac{d^{ml(g)}}{d^{mk}},
\label{eq:large_t_limit}
\end{equation}
which indicates that all the permutation elements contribute to the $k$-th moment. 
The coefficients $\alpha(g)$ in Eq. (\ref{eq:large_t_limit}) match the ones for the generalized Hilbert-Schmidt ensemble (cf. Eq.~\eqref{eq:HS_ensemble}). 
Thus, in the limit $t\to\infty$ the MSPE is deeply thermalized to it.

Let us now consider the next order in $1/d^t$, which will give us the deep thermalization time scales. 
Here, we need to expand the cycle function $l$ up to the next order. For that we have
\begin{equation}
\begin{aligned}
\frac{d^{(t+1)l(g^{-1}g')}}{d^{(t+1)k}}=&\delta_{g',g}+\\&\sum_{i<j}\frac{1}{d^{(t+1)}}\delta_{g',gs_{i,j}}+\mathcal{O}\bigg(\frac{1}{d^{2t}}\bigg),
\end{aligned}
\label{eq:EndMatter_nextleadinginGHS}
\end{equation}
where $s_{i,j}$ is the swap operator between the $i$-th and $j$-th replica.
Using this expansion, we evaluate the distance to the generalized Hilbert-Schmidt ensemble $\Delta_\xi^{(k)}$ with $\xi=1,2$. The full expression can be found in the Supplemental Material~\citep{SM}, which simplifies  for $k\ll d^m$ to
\begin{equation}
\begin{aligned}
\Delta^{(k)}_1&\lesssim\frac{1}{d^{t+1}}\frac{1}{d^{m}}(1-\frac{1}{d^{m}})\frac{k(k-1)}{2};\\
\Delta_{2}^{(k)}&\approx\frac{1}{d^{t+1}}\frac{1}{d^{m}}(1-\frac{1}{d^{m}})\sqrt{\frac{k(k-1)}{2}}+\mathcal{O}\bigg(\frac{1}{d^{2t}}\bigg).
\end{aligned}
\end{equation}
This indicates that the deep thermalization times scale as $t_k\sim\log(k/\epsilon)$ for an $\epsilon$-approximation to the generalized Hilbert-Schmidt ensemble for a given moment $k$. Note that higher moments require a longer time to achieve $\epsilon$-convergence 
leading to the separation of deep thermalization time scales~\citep{ippoliti2023dynamical}. 

On the other hand, if we take $d\to\infty$ first, the leading contribution in Eq. (\ref{eq:equation_for_alpha_consecutive}) is determined by $g=g'=e$ where $e$ is the identity element in the permutation group. This results in $\alpha(g)=\delta_{g,e}$, meaning that the $k$-th moment of the MSPE is always the $k$-fold tensor product of the identity 
density matrix
for any finite $k$. This corresponds to an ensemble where a totally mixed state is attributed to all measurement outcomes. The next order expansion in $d$ comes from the swap operator $s_{i,j}$ for arbitrary $i<j$. We have 
\begin{equation}
\alpha(s_{i,j})=1/d^m 
    \label{eq:nextOrderD}
\end{equation}
if $t>3m/2$, and the number of all the swap operators are $k(k-1)/2$, see Supplemental Material~\citep{SM}.

\section{Local Haar random circuits}
We consider the same circuit structure as in Fig.~\ref{fig1}, except that each local gate is now drawn Haar randomly.
In the limit $d\to\infty$, $N_{B_{1}},N_{B_{3}}\to\infty$,
and $t>\max{(N_{A},N_{B_{2}})}$, a typical realization yields an MSPE that is deeply thermalized to the generalized Hilbert–Schmidt ensemble.

The proof of this statement requires an extension of that in Ref.~\cite{Chan_2024_projected}. We show that after measuring subsystems $B_1$ and $B_3$ with outcome $\bm{\alpha}$, the projected ensemble on the remaining degrees of freedom, $\mathcal{E}=\{P_{\bm{\alpha}},\ket{\psi_{\bm{\alpha}}}_{A,B_{2}}\}$
is equivalent to the Haar-random state ensemble. Tracing out $B_2$ then yields the generalized Hilbert–Schmidt ensemble on subsystem $A$ following the definition Eq. (\ref{eq:HS_ensemble}). In the following, we briefly review the argument of Ref.~\cite{Chan_2024_projected} and extend it to show that the Haar-random ensemble emerges once $B_1$ and $B_3$ are measured. The difference is that~\cite{Chan_2024_projected} considered a bipartite system, whereas our system is partitioned into four parts.

In the limit $d\to\infty$, the circuit dynamics can be described using the membrane picture. Measurements impose distinct boundary conditions at the top of the circuit: one for regions $A,B_2$ and another for $B_1,B_3$.
Domain walls form at the interfaces between these regions and evolve backward in time, annihilating either at the boundary or with each other, as shown in Fig. \ref{Figure_domain_wall}. Each domain-wall trajectory incurs an energy penalty $d^{-kL}$, where $L$ is its length. For small $t$, the domain walls walk randomly in the
circuit. However, for $t>\max{(N_{A}, N_{B_2})}$, there exists a unique trajectory which
minimizes the energy penalty as shown in Fig. \ref{Figure_domain_wall}.
In the $d\to\infty$ limit, this trajectory dominates the frame potential, leading to the Haar-random state ensemble.

\begin{figure}
\includegraphics[width=0.6\columnwidth]{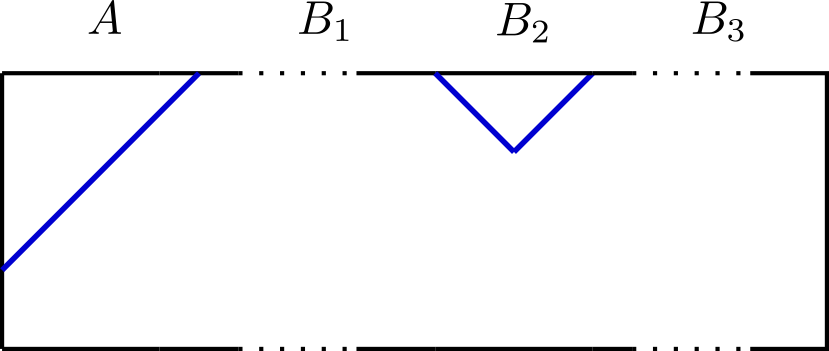}

\caption{Membrane-picture illustration of domain-wall evolution (blue lines). The figure shows the trajectory with minimal energy penalty.}

\label{Figure_domain_wall}
\end{figure}
\end{appendices}
\newpage
\onecolumngrid

\newcounter{equationSM}
\newcounter{figureSM}
\newcounter{tableSM}
\stepcounter{equationSM}
\setcounter{equation}{0}
\setcounter{figure}{0}
\setcounter{table}{0}
\makeatletter
\setcounter{equation}{0}
\setcounter{figure}{0}
\setcounter{table}{0}
\setcounter{section}{0}
\makeatletter
\renewcommand{\theequation}{S\arabic{equation}}
\renewcommand{\thefigure}{S\arabic{figure}}
\renewcommand{\thetable}{S\arabic{table}}
\begin{center}
{\large{\bf Supplemental Material for\\
 ``Mixed state deep thermalization''}}
\end{center}

\begin{itemize}
\item[-] In Section~\ref{Sec:universalgateset} we prove that our choice of dual-unitary gates form a universal gate set;
\item[-] In Section~\ref{Sec:k_th_moment_calculation} we derive the general equations determining the non-normalized $k$-th moment of MSPE;
\item[-] In Section~\ref{Sec:large_t} we discuss the $k$-th moment of MSPE in the limit $t\to\infty$; 
\item[-] In Section~\ref{Sec:correction_of_large_t} we discuss the deep thermalization of MSPE to the Hilbert Schmidt ensemble and the corrections at finite $t$;
\item[-] In Section~\ref{Sec:large_d} we discuss the $k$-th moment of MSPE in the limit $d\to\infty$;
\item[-] In Section~\ref{Sec:Eigenvalue_distribution} we compute and discuss the eigenvalue distribution of MSPE;
\item[-] In Section~\ref{Sec:DT_typical} we provide more analytical and numerical results for other typical models which deeply thermalize;
\item[-] In Section~\ref{Sec:ConditionalInformation} we compute the conditional entropy. The setup is described in the main text;
\item[-] In Section~\ref{Sec:Modelwithsparseerausureerrors}, we compute the MSPE with sparse $m$ erasure errors on the measurement outcomes.
\end{itemize}

\section{Universal gate set\label{Sec:universalgateset}}
In this section, we show that any open set $\mathcal{G}$ of dual-unitary
two-qudit gates arranged in a brick-wall circuit is so called brick-wall universal, see the following definition
and App. E of Ref. \cite{ippoliti2023dynamical}. As a consequence, averaging
spatial evolution $\{U_{\bm{\alpha}}^{\otimes k}\}$ over $\bm{\alpha}$
produces a unitary $k$-design, which directly follows the argument from Refs.
\cite{ho2022exact,ippoliti2023dynamical}. We use this result in the End
Matter to determine the spatial steady state.

Our proof follows a similar logic to the proof in App. E of Ref. \cite{ippoliti2023dynamical}. 

\emph{Definition}.---A set $\mathcal{G}\subset U(d^{2})$ of two-qudit
gate is ``brick-wall universal'' if the set of brick-wall circuits
on $t+1$ qudits, with open boundary conditions, composed of gates
in $\mathcal{G}$, is dense in the unitary group $U(d^{t+1})$.

Alternatively, a set $\mathcal{G}$ is brick-wall universal if it
can generate a universal gate set out of the brick-wall structure
on the $t+1$ qudits. Notice that even if $\mathcal{G}$ is a universal
gate set, it may not be brick-wall universal, as here we impose restrictions on the architecture of the circuits. 

In the following, we will show that any open set of the dual-unitary
gate $\mathcal{G}\subset U(d^{2})$ is brick-wall universal. Specifically,
we will show that this set $\mathcal{G}$ can generate arbitrary single-site
rotation on any site, together with two-qudit imprimitive gates on
any nearest-neighbor sites. According to Ref. \cite{brylinski2001universalquantumgates},
a gate is primitive if it is equivalent to the Identity or SWAP gate
up to single-site rotation, and only the latter is dual-unitary. An
imprimitive gate plus an arbitrary single-site rotation forms a universal
gate set.

\begin{proof}

Let us take any gate $U\in\mathcal{G}$. From $U$, we can construct
a brick-wall layer in the spatial direction as $\mathbb{U}=(I_{d}\otimes U^{\otimes(t-1)/2}\otimes I_{d})\cdot U^{\otimes(t+1)/2}$.
This is the unitary matrix with eigenphases $\{\phi_{i},i=1,\cdots d^{t+1}\}$.
The distance of $\mathbb{U}^{j}$ from the identity is given by $\lVert\mathbb{U}^{j}-I_{d}^{\otimes(t+1)}\rVert_{2}^{2}=2\sum_{i}[1-\cos(\phi_{i}j)]$,
a quasiperiodic function. This distance can be made arbitrarily small
by choosing a suitable $j=J\in\mathbb{N}$, according to the Poincaré
recurrence theory, thus $\mathbb{U}^{J}=I_{d}^{\otimes(t+1)}$. Here,
we directly write the equal sign because we can choose the distance
to be smaller than any prescribed precision. More rigorously, one
should keep an error $\epsilon$ and let $\epsilon\to0$ at the final
step. Hence, we have $\mathbb{U}^{J-1}=\mathbb{U}^{\dagger}$, i.e.,
we can also realize $\mathbb{U}^{\dagger}$ from the brick-wall circuit
built from $\mathcal{G}$.

It is known that the dual-unitary gate has the freedom of a single-site
rotations on four legs. Since $\mathcal{G}$ is an open set, the gate
$V=(v_{l_{1}}\otimes v_{l_{2}})\cdot U\cdot(v_{r_{1}}\otimes v_{r_{2}})$
also belongs to $\mathcal{G}$ for small enough single-site rotation
$v_{l_{1}},v_{l_{2}},v_{r_{1}},v_{r_{2}}$. Now, we construct $\mathbb{V}$
as a brick-wall layer from $\mathbb{U}$ by replacing one of $U$
at the first layer with $V$, which acts on the sites $(i,i+1)$ for
some $i\in\mathrm{even}$. The brick-wall circuit $\mathbb{U}^{J-1}\mathbb{V}$
gives us a two-qudit gate 

\begin{equation}
\mathbb{U}^{J-1}\mathbb{V}=\mathbb{U}^{\dagger}\mathbb{V}=U^{\dagger}V=U^{\dagger}(v_{l_{1}}\otimes v_{l_{2}})U(v_{r_{1}}\otimes v_{r_{2}})\label{SM:eq:the_gate_we_obtain_universal_sec}
\end{equation}
on the sites $(i,i+1)$ and Identity elsewhere. By choosing $v_{l_{1}}=v_{l_{2}}=I_{d}$,
we realize $v_{r_{1}}$, $v_{r_{2}}$ on the sites $i$, $i+1$, respectively.
Therefore, we can realize any single-site rotation out of the brick wall
circuit built from $\mathcal{G}$.

We still need to prove that $\mathcal{G}$ can produce the imprimitive
gate on any nearest-neighbor sites. If $U$ is already imprimitive,
we can choose $v_{r_{1}}=v_{r_{2}}=I_{d}$ in Eq. (\ref{SM:eq:the_gate_we_obtain_universal_sec}).
Then the circuit $\mathbb{U}^{J-1}\mathbb{V}=U^{\dagger}(v_{l_{1}}\otimes v_{l_{2}})U$
 produces an imprimitive gate on sites $i,i+1$ and Identity elsewhere
for some $v_{l_{1}},v_{l_{2}}$.

On the other hand, if $U$ is a primitive dual unitary gate, it can
be written as $U=(u_{l_{1}}\otimes u_{l_{2}})\cdot\mathrm{SWAP}$.
However, $V=(u_{l_{1}}\otimes u_{l_{2}})\cdot\mathrm{SWAP}\cdot D$
is also a dual-unitary with $D$ a diagonal two-qudit gate. Here, we
choose $V$ and $U$ to have the same single-site rotations. As long
as we choose $D$ to be very close to the Identity (but still imprimitive),
the gate $V$ will also be inside the open set $\mathcal{G}$. From
Eq. (\ref{SM:eq:the_gate_we_obtain_universal_sec}), we can generate
an imprimitive gate $D$ from $\mathbb{U}^{J-1}\mathbb{V}$ on the
sites $(i,i+1$). Therefore, we can always produce imprimitive gates
on any sites $(i,i+1)$ with $i\in\mathrm{even}$. The same argument
holds for the sites $(i,i+1)$ with $i\in\mathrm{odd}$, if we replace
one of the gates $U$ in the second layer of $\mathbb{U}$ in the
construction of $\mathbb{V}$ and consider the circuit $\mathbb{V}\mathbb{U}^{J-1}$.

\end{proof}

\section{Calculation of k-moment}
\label{Sec:k_th_moment_calculation}

In this section, we will calculate the $k$-th moment of MSPE on the subsystem $A$: $\rho_A^{(k)}$. For convenience, we denote the matrix inner product as 
\begin{equation}
\langle A,B \rangle=\frac{\mathrm{Tr}\left(A^\dagger B\right)}{D},
\end{equation}
with $D$ being the underlying Hilbert space dimension.

We first define the non-normalized mixed state 
$\tilde{\rho}_{N_{A}}(\bm{\alpha})$ as 
\begin{equation}
\tilde{\rho}_{N_{A}}(\bm{\alpha})=\mathrm{Tr}_{B_{2}}[(I_{A+B_{2}}\otimes\bra{\bm{\alpha}})\ket{\Psi}\bra{\Psi}(I_{A+B_{2}}\otimes\ket{\bm{\alpha}})],
\end{equation}
and the reduced moment $\tilde{\rho}_{N_{A}}^{(k)}$ as 
\begin{equation}
\tilde{\rho}_{N_{A}}^{(k)}=C_{k}\sum_{\bm{\alpha}}\tilde{\rho}_{N_{A}}^{\otimes k}(\bm{\alpha}).
\end{equation}
Here $C_{k}$ is a normalization factor which can be chosen as $\mathrm{Tr}\tilde{\rho}_{N_{A}}^{(k)}=1$ for convenience.

For simplicity, we introduce the number of lost measurement outcomes $n=m/2$, which is half of the number of sites in $B_2$.  Due to the space-time duality, we can consider the spatial evolution of states living on the timelike cut, progressing from right to left, as denoted by $\rho_{\mathrm{f}},\rho_{\mathrm{a}},\rho_\mathrm{b},\rho_\mathrm{i}$ in Fig.~\ref{fig:kmomentSM}. Those are states on $t+1$ qudits. 
Rigorously speaking, the output of Fig.~\ref{fig:kmomentSM} for a specfic measurement outcome $\bm{\alpha}$ is $\tilde{\rho}_{N_A}(\bm{\alpha})$, and the sum over all $\bm{\alpha}$ is proportional to $\tilde{\rho}_{N_A}^{(k)}$. In the following, without explicitly speaking, the figure implies a sum over $\bm{\alpha}$. 
Due to the assumption that $B_1$ is infinitely long, and the fact that the initial state $\rho_\mathrm{i}$ is a tensor product one among different replicas, the state $\rho_\mathrm{b}$ is a totally symmetric state, described by the equal superposition of the representation of permutation elements $\{g\}$.
\begin{equation}
\rho_\mathrm{b}=\sum_g \rho_{t+1,k}(g).
\end{equation}
Unless explicitly stated, we discard a possible normalization factor of the state $\rho$ in this section. Here $\rho_{t+1,k}(g)$ is the representation of the permutation element $g$ across $k$ replicas, as defined in the main text. The subscript $k$ of $\rho(g)$ indicates the number of replicas the operator is supported on, and the subscript $t+1$ indicates the number of sites in each replica. Concretely,
\begin{equation}
\rho_{t+1,k}(g)=\sum_{i_1,i_2,\cdots,i_k=1}^{d^{t+1}}\ket{i_{g(1)}i_{g(2)}\cdots i_{g(k)}}\bra{i_1i_2\cdots i_k },\label{eq:SM_specific_representation}
\end{equation}
where $d^{t+1}$ is the Hilbert space dimension of each replica with $t+1$ sites.

\begin{figure}[t]
\centering
\begin{tikzpicture}[baseline=(current  bounding  box.center), scale=0.5]
\foreach \i in {0,2,5,9,12}
{
\foreach \j in {0,2,4}
{
\Wgategreen{\i}{\j}
}
}
\Wgategreen{7}{0}
\Wgategreen{7}{2}
\foreach \i in {-1,1,4,6,8,10,13}
{
\foreach \j in{1,3}
{
\Wgategreen{\i}{\j}
}
}
\Text[x=3.1,y=0]{$\bm{\cdots}$}
\Text[x=3.1,y=4]{$\bm{\cdots}$}
\Text[x=10.9,y=0]{$\bm{\cdots}$}
\Text[x=10.9,y=4]{$\bm{\cdots}$}
\foreach \i in {1,4,10,13}
{
\Measurement{\i}{5}
}
\foreach \i in{6}
{
\MYcircle{\i-0.5+0.1}{4.5+0.1}
}
\foreach \i in {9}
{
\MYcircle{\i-0.5-0.1}{4.5+0.1}
}
\MYcircle{6.6}{3.6}
\MYcircle{7.4}{3.6}
\foreach \i in {-2,0,3,5,7,9,12}
{
\draw[very thick]
  (\i+0.5,-0.5) to[out=-45, in=225] (\i+1.5,-0.5);
}
\foreach \i in {-1,1,3}
{
\draw[very thick]
  (13.5,\i+0.5) to[out=45, in=315] (13.5,\i+1.5);
}
\foreach \i in {-1,1,3}
{
\draw[very thick]
  (-1.5,\i+0.5) to[out=135, in=225] (-1.5,\i+1.5);
}
\draw[very thick,dashed](0.4,-1.5)--(0.4,6);
\draw [decorate, decoration={brace, amplitude=10pt}, very thick] (0.5,5.25) -- (4.5,5.25);
\draw [decorate, decoration={brace, amplitude=10pt}, very thick] (5.5,5.25) -- (8.5,5.25);
\draw [decorate, decoration={brace, amplitude=10pt}, very thick] (9.5,5.25) -- (13.5,5.25);
\Text[x=-1,y=5.5]{\large{$\bm{\tilde{\rho}_{N_A}^{(k)}}$}}
\Text[x=2.5, y=6.5]{\large{$\bm{B_3}$}}
\Text[x=7, y=6.5]{\large{$\bm{B_2}$}}
\Text[x=11.5, y=6.5]{\large{$\bm{B_1}$}}
\draw[->,very thick](14,-1.5)--(14,6);
\Text[x=14.5,y=2.5] {\pmb{t}}
\Text[x=0.4,y=-2]{\large$\bm{\rho_\mathrm{f}}$}
\draw[very thick,dashed,color=blue] (5.25,4.65)--(6.45,3.45)--(6.45,-1);
\Text[x=6.2,y=-1.5]{\large$\bm{\rho_\mathrm{a}}$}
\draw[very thick,dashed,color=blue] (8.65,4.65)--(7.45,3.45)--(7.45,-1);
\Text[x=7.7,y=-1.5]{\large$\bm{\rho_\mathrm{b}}$}
\draw[very thick, dashed,color=blue](13.45,5.0)--(13.45,-1);
\Text[x=13.45,y=-1.5]{\large$\bm{\rho_\mathrm{i}}$}
\draw[->,very thick](13.4,-2)--(4,-2);
\Text[x=8.7,y=-2.5]{Horizontal Evolution}
\end{tikzpicture}
\caption{Quench dynamics leading to MSPE. The symbols have the same meaning as in Fig.~\ref{fig1}
in the main text. Here, we use the unitarity to simplify the $B_2$ subsystem. We also use the dashed line to represent the states on the timelike cut.}
\label{fig:kmomentSM}
\end{figure}

In the subsystem $B_2$, the state undergoes totally depolarization channels on its topmost $n$ qudits. Thus, the state leaving $B_2$, i.e., $\rho_\mathrm{a}$ in Fig.~\ref{fig:kmomentSM}, can be written as

\begin{equation}
    \rho_\mathrm{a}=\frac{\sum_g \mathrm{Tr}_n(\rho_{t+1,k}(g))\otimes I_n}{d^{nk}}.
\end{equation}
Here $\mathrm{Tr}_n$ and $I_n$ are the partial trace and Identity on the topmost $n$ qudits, respectively. In the following, we introduce the function $l(g)$ counting the number of cycles in the permutation element $g$ and using the letter $e$ denoting the Identity permutation element. The above expression can be rewritten as 
\begin{equation}
    \rho_{\mathrm{a}}=\sum_g d^{n\left(l(g)-k\right)}\rho_{t+1-n,k}(g)\otimes \rho_{n,k}(e).\label{eq:SM_rho_a}
\end{equation}

Due to the assumption that the subsystem $B_3$ is also  infinitely long, the state $\rho_{\mathrm{a}}$ is further projected into the subspace with eigenvalues $1$
of the spatial transfer matrix, which is spanned by all representations of the permutation elements $\{\rho_{t+1,k}(g),g\in\mathbb{S}_k\}$. Thus, 
\begin{equation}
\rho_\mathrm{f}=\sum_g \alpha(g) \rho_{t+1,k}(g), \label{Eq.definition_alphag}
\end{equation}
with the coefficients $\alpha(g)$ to be determined. 
Since $\rho_{t+1,k}(g)$ is invariant under the spatial transfer matrix, its inner products with $\rho_\mathrm{f}$ and $\rho_\mathrm{a}$ are the same, i.e., $\langle \rho_{t+1,k}(g),\rho_\mathrm{f} \rangle=\langle \rho_{t+1,k}(g),\rho_\mathrm{a}\rangle:=c(g)$. We can first calculate $c(g)$ from Eq.~\eqref{eq:SM_rho_a} as
\begin{equation}
\begin{aligned}
    c(g)&=\sum_{g'} d^{n\left(l(g')-k\right)}\left\langle \rho_{t+1,k}^{\dagger}(g),\rho_{t+1-n,k}(g')\otimes \rho_{n,k}(e)\right\rangle\\
    &=\sum_{g'}d^{n\left(l(g')-k\right)}d^{n\left(l(g)-k\right)}d^{(t+1-n)\left(l(g^{-1}g')-k\right)}. \label{eq:result_of_cg}
\end{aligned}
\end{equation}
On the other hand, we can obtain the relation between $\alpha(g)$ and $c(g)$ from Eq. (\ref{Eq.definition_alphag}) as 
\begin{equation}
c(g)=\sum_{g'}\alpha(g') \frac{d^{(t+1)\times l(g^{-1}g')}}{d^{(t+1)k}}.\label{eq:relation_between_cg_alphag}
\end{equation}
Equations (\ref{eq:result_of_cg}) and (\ref{eq:relation_between_cg_alphag}) provide a series of linear equations of $\alpha(g)$. In principle, we can solve these linear equations to obtain the desired $\alpha(g)$.

The dual-unitry gates in the subsystem $A$ form an isometry mapping from $t+1$ sites along the vertical cut to $N_A$ sites along the horizontal output. When $t+1\geq N_A$, the $k$-th reduced moment on the subsystem $A$ $\tilde\rho_{N_A}^{(k)}$ is expressed as
\begin{equation}
    \tilde\rho_{N_A}^{(k)}\propto W^{\otimes k}\rho_{\mathrm{f}} W^{\dagger}{}^{\otimes k},
\end{equation}
with $WW^\dagger=I_{N_A}$.
We can further identify that 
\begin{equation}
\begin{aligned}  \bra{j_{1}\cdots j_{k}}W^{\otimes k}\rho_{t+1,k}(g)W^{\dagger}{}^{\otimes k}\ket{j'_{1}\cdots j'_{k}}
= & \sum_{i_{1}\cdots i_{k}}\prod_{u=1}^{k}W_{j_{u}i_{g(u)}}\prod_{u'=1}^{k}W_{j'_{u'}i_{u'}}^{*}
=  \sum_{i_{1}\cdots i_{k}}\prod_{u=1}^{k}W_{j_{g^{-1}(u)}i_{u}}\prod_{u'=1}^{k}W_{j'_{u'}i_{u'}}^{*}\\
= & \prod_{u=1}^{k}\sum_{i_{u}}W_{j_{g^{-1}(u)}i_{u}}W_{j'_{u}i_{u}}^{*}
= \prod_{u=1}^{k}\delta_{j_{g^{-1}(u)},j'_{u}}
=  \delta_{j_{1},j'_{g(1)}}\delta_{j_{2},j'_{g(2)}}\cdots\delta_{j_{k},j'_{g(k)}}.
\end{aligned}
\end{equation}
Here, in the second equality, we relabel $u\to g^{-1}(u)$. In the third equality, we reorder the product and the sum. In the last equality, we relabel again $u\to g(u)$. Consequently, we obtain that $W^{\otimes k}\rho_{t+1,k}(g)W^\dagger{}^{\otimes k}=\rho_{N_A,k}(g)$. 

Therefore, $\rho_{N_A}^{(k)}$ is still in the permutation space with the same coefficient $\alpha(g)$ as
\begin{equation}
\tilde\rho_{N_A}^{(k)}\propto\sum_g \alpha(g) \rho_{N_A,k}(g).
\end{equation}

\section{Leading order in $t\to\infty$ limit}
\label{Sec:large_t}
\subsection{Expression for $\tilde{\rho}_{N_A}^{(k)}$}
Here we can consider finite $d$ with $t\to\infty$. In this limit, we only keep the leading order of $d^t$ as
\begin{equation}
\begin{aligned}c(g) & =\frac{\sum_{g'}d^{nl(g')+nl(g)+(t+1-n)\times l(g^{-1}g')}\delta_{g',g}}{d^{k(t+1+n)}}+\mathcal{O}\left(\frac{1}{d^{t}}\right)\\
 & =\frac{d^{2nl(g)}}{d^{2nk}}+\mathcal{O}\left(\frac{1}{d^{t}}\right)=\frac{1}{d^{2n(k-l(g))}}+\mathcal{O}\left(\frac{1}{d^{t}}\right),
\end{aligned}
\end{equation}
and
\begin{equation}
c(g)=\sum_{g'}\alpha(g')\frac{d^{(t+1)\times l(g^{-1}g')}}{d^{(t+1)k}}=\alpha(g)+\mathcal{O}\left(\frac{1}{d^{t}}\right).
\end{equation}
Thus, we obtain that under the limit $t\to\infty$, 
\begin{equation}
\alpha(g)=\frac{1}{d^{2n(k-l(g))}}+\mathcal{O}\left(\frac{1}{d^{t}}\right). \label{eq:SM_MSPE_Leading_order_large_t}
\end{equation}

After taking the normalization factor $C_k$ into account, we have that in the limit $t\to\infty$,
\begin{equation}
\tilde{\rho}_{N_{A}}^{k}=\frac{\sum_{g\in\mathbb{S}_{k}}\rho_{N_{A},k}(g)d^{2nl(g)}}{(d^{N_{A}+2n}+k-1)!/(d^{N_{A}+2n}-1)!}.
\end{equation}

\subsection{Expression of $\rho_{N_A}^{(k)}$}
In order to go from $\tilde{\rho}_{N_A}^{(k)}$ to $\rho_{N_A}^{(k)}$, we have to recover the Born probability, which can be achieved by the replica trick.
We first rewrite $\rho_{N_{A}}^{(k)}$ as 
\begin{equation}
\rho_{N_{A}}^{(k)}=\sum_{\bm{\alpha}}\tilde{\rho}_{N_{A}}^{\otimes k}(\bm{\alpha})[\mathrm{Tr}\tilde{\rho}_{N_{A}}(\bm{\alpha})]^{1-k}.
\end{equation}
In the replica trick, we will compute  
\begin{equation}
\bar{\rho}_{N_{A}}^{(k,q)}=C_{k+q}\sum_{\bm{\alpha}}\tilde{\rho}_{N_{A}}^{\otimes k}(\bm{\alpha})[\mathrm{Tr}\tilde{\rho}_{N_{A}}(\bm{\alpha})]^{q}
\end{equation}
for every integer $q$, and then take the limit $q=1-k$. Here we
choose $C_{k+q}$ as a global normalization factor such that $\mathrm{Tr}\overline{\rho}_{N_{A}}^{(k,q)}=1$
for convenience. Notice that $C_{1}=1$. As we will see later, this limit exists because the result will be the same for every $q$.

$\overline{\rho}_{N_{A}}^{(k,q)}$ can be related to $\tilde{\rho}_{N_{A}}^{(k+q)}$
for integer $q>0$ as 
\begin{equation}
\overline{\rho}_{N_{A}}^{(k,q)}=\mathrm{Tr}_{\{k+q\}/\{k\}}\tilde{\rho}_{N_{A}}^{(k+q)}, \label{SM:expression_of_bar_rho_k_q}
\end{equation}
where $\mathrm{Tr}_{\{k+q\}/\{k\}}$ is the trace over the last $q$
replicas, which corresponds to having the Identity permutation as the
boundary of those replicas. According to the previous subsection, at
late time $t\to\infty$, we have
\begin{equation}
\tilde{\rho}_{N_{A}}^{(k+q)}=\frac{\sum_{g\in\mathbb{S}_{k+q}}\rho_{N_{A},k+q}(g)d^{2nl(g)}}{(d^{N_{A}+2n}+k+q-1)!/(d^{N_{A}+2n}-1)!}.
\end{equation}
To illustrate the computation of $\mathrm{Tr}_{\{k+q\}/\{k\}}$, we
can start from $q=1$. It is well known that any permutation of $k+1$
elements has a unique decomposition into a permutation on the first
$k$ elements times a SWAP on the $(k+1)$-th element, expressed as
\begin{equation}
g=g's_{j,k+1}.
\end{equation}
Here $g\in\mathbb{S}_{k+1}$, $g'\in\mathbb{S}_{k}$, $s_{j,k+1}$
is the swap between the $j$-th and $(k+1)$-th element, with $s_{k+1,k+1}$
the identity permutation. By a direct computation, we obtain 
\begin{equation}
\mathrm{Tr}_{\{k+1\}/\{k\}}\rho_{N_{A},k+1}(g)=\begin{cases}
\rho_{N_{A},k}(g'), & \mathrm{if}\ j<k+1;\\
d^{N_{A}}\rho_{N_{A},k}(g'), & \mathrm{if}\ j=k+1,
\end{cases}\label{eq:SM_trace_replica_Eq1}
\end{equation}
and
\begin{equation}
l(g)=\begin{cases}
l(g') & \mathrm{if}\ j<k+1;\\
l(g')+1 & \mathrm{if}\ j=k+1.
\end{cases}\label{eq:SM_trace_replica_Eq2}
\end{equation}
These relations immediately lead to 
\begin{equation}
\overline{\rho}_{N_{A}}^{(k,1)}=\frac{\sum_{g'\in\mathbb{S}_{k}}\rho_{N_{A},k}(g')(k+d^{N_{A}+2n)})d^{2nl(g')}}{(d^{N_{A}+2n}+k)!/(d^{N_{A}+2n}-1)!}=\frac{\sum_{g'\in\mathbb{S}_{k}}\rho_{N_{A},k}(g')d^{2nl(g')}}{(d^{N_{A}+2n}+k-1)!/(d^{N_{A}+2n}-1)!}.
\end{equation}

We can systematically iterate the above trace computation over the
replica for every integer $q$. Specifically, we start from the state
$\tilde{\rho}_{N_{A}}^{(k+q)}$ defined on $(k+q)$ replicas. By using
Eqs. (\ref{eq:SM_trace_replica_Eq1}) and (\ref{eq:SM_trace_replica_Eq2})
to trace over the $(k+q)$-th replica, we obtain a state defined on
$(k+q-1)$ replicas. We then apply the same procedure to trace over
the $(k+q-1)$-th replicas, yielding a state on $(k+q-2)$ replicas.
Repeating this process iteratively, we eventually arrive at the desired
state defined on $k$ replicas, which reads as
\begin{equation}
\overline{\rho}_{N_{A}}^{(k,q)}=\frac{\sum_{g'\in\mathbb{S}_{k}}\rho_{N_{A},k}(g')\prod_{j=0}^{q-1}(k+j+d^{N_{A}+2n})d^{2nl(g')}}{(d^{N_{A}+2n}+k+q-1)!/(d^{N_{A}+2n}-1)!}=\frac{\sum_{g'\in\mathbb{S}_{k}}\rho_{N_{A},k}(g')d^{2nl(g')}}{(d^{N_{A}+2n}+k-1)!/(d^{N_{A}+2n}-1)!}.
\end{equation}
This result is independent of $q$, and therefore, we make the analytical continuation to any $q=1-k$, which gives us 
\begin{equation}
\rho_{N_{A}}^{(k)}=\frac{\sum_{g'\in\mathbb{S}_{k}}\rho_{N_{A},k}(g')d^{2nl(g')}}{(d^{N_{A}+2n}+k-1)!/(d^{N_{A}+2n}-1)!},\label{SM:limiting_distribution_MSPE_after_replica}
\end{equation}
whose physical interpretation will be investigated in detail in the next section.
We will connect this result with a known random density matrix ensemble in the next section.

\section{Deep thermalization to Generalized Hilbert-Schmidt ensemble}

\label{Sec:correction_of_large_t}
We first consider the Haar random ensemble on $N_{A}+2n$ sites. Its
k-moment is then 
\begin{equation}
\rho_{\mathrm{Haar}}^{(k)}=\frac{\sum_{g'\in\mathbb{S}_k}\rho_{N_{A}+2n,k}(g')}{(d^{N_{A}+2n}+k-1)!/(d^{N_{A}+2n}-1)!}
\end{equation}
If we trace out the last $2n$ qudits, we get the generalized Hilbert-Schmidt
ensemble, whose $k$-moment is given by 
\begin{equation}
\rho_{\mathrm{GHS}}^{(k)}=\frac{\sum_{g'\in\mathbb{S}_k}\mathrm{Tr}_{2n}^{\otimes k}\rho_{N_{A}+2n,k}(g')}{(d^{N_{A}+2n}+k-1)!/(d^{N_{A}+2n}-1)!}=\frac{\sum_{g'\in\mathbb{S}_k}\rho_{N_{A},k}(g')d^{2nl(g')}}{(d^{N_{A}+2n}+k-1)!/(d^{N_{A}+2n}-1)!},
\end{equation}
Here $\mathrm{Tr}^{\otimes k}$ indicates that the partial trace is
performed on each replica. This result indicates that $\rho_{\mathrm{GHS}}^{(k)}=\sum_{g'\in\mathbb{S}_k}\alpha(g')\rho_{N_{A},k}(g')$
with $\alpha(g')\propto d^{2nl(g')}$ which agrees with Eq.~(\ref{SM:limiting_distribution_MSPE_after_replica}).
Thus, in the leading order of $d^{t}$, i.e., $t\to\infty$, the MSPE
is deeply thermalized to the generalized Hilbert Schmidt ensemble with $2n$
qudits erased.

In the following, we consider how the the $k$-th moment of MSPE approaches the aforementioned universal ensemble at long times, in particular at the order
of $\mathcal{O}\left(\frac{1}{d^{t}}\right)$. 
For simplicity, our calculation will be restricted to the leading order contribution from $\mathcal{O}(d^{N_A})$.
Employing Eq.~(\ref{eq:EndMatter_nextleadinginGHS}) in the main text, we have
\begin{equation}
\begin{aligned}c(g) & =\frac{\sum_{g'}d^{nl(g)+nl(g')+(t+1-n)l(g^{-1}g')}}{d^{k(t+1+n)}}
 & =\frac{d^{2nl(g)}+\sum_{\langle i,j\rangle}d^{nl(g)+nl(gs_{i,j})}\frac{1}{d^{t+1-n}}}{d^{2kn}}+\mathcal{O}\left(\frac{1}{d^{2t}}\right),
\end{aligned}
\end{equation}
with $s_{i,j}$ the swap operator between the $i$-th replica and $j$-th replica.
We also have
\begin{equation}
\begin{aligned}c(g) & =\sum_{g'}\alpha(g')\frac{d^{(t+1)l(g^{-1}g')}}{d^{(t+1)k}}
 & =\alpha(g)+\sum_{\langle i,j\rangle}\alpha(gs_{i,j})\frac{1}{d^{t+1}}+\mathcal{O}\left(\frac{1}{d^{2t}}\right).
\end{aligned}
\end{equation}
Solving the above equations, we obtain that 
\begin{equation}
\alpha(g)=\frac{d^{2nl(g)}}{d^{2kn}}+\frac{1}{d^{2kn+t+1}}\sum_{\langle i,j\rangle}\bigg(d^{nl(g)+nl(gs_{i,j})+n}-d^{2nl(gs_{i,j})}\bigg)+\mathcal{O}\left(\frac{1}{d^{2t}}\right).
\end{equation}
Thus, we have 
\begin{equation}
\tilde{\rho}_{N_{A}}^{(k)}=\sum_{g}\frac{\alpha(g)}{C}\rho_{N_{A},k}(g)
\end{equation}
where $C$ is a normalization factor determined by $\mathrm{Tr}\tilde{\rho}_{N_{A}}^{(k)}=1$.
Using the identity that 
\begin{equation}
\sum_{g}D^{l(g)}=\frac{(D+k-1)!}{(D-1)!}\ \mathrm{for}\ \forall D\in\mathbb{N}, \label{eq:SM_sum_of_permutation_element_inexp}
\end{equation}
we can obtain 
\begin{equation}
C=\frac{1}{d^{2kn}}\bigg(\frac{(d^{2n+N_{A}}+k-1)!}{(d^{2n+N_{A}}-1)!}+\frac{K}{d^{t+1}}\bigg)+\mathcal{O}\left(\frac{1}{d^{2t}}\right), \quad K=\sum_{g}\sum_{\langle i,j\rangle}d^{N_{A} l(g)}\bigg(d^{n l(g)+n l(gs_{i,j})+n}-d^{2nl(gs_{i,j})}\bigg).
\end{equation}
Thus, we got 
\begin{equation}
\frac{\alpha(g)}{C}=\frac{d^{2nl(g)}}{(d^{2n+N_{A}}+k-1)!/(d^{2n+N_{A}}-1)!}+\beta(g),
\end{equation}
where 
\begin{align}
\beta(g) & =\frac{(d^{2n+N_{A}}-1)!}{(d^{2n+N_{A}}+k-1)!d^{t+1}}\{\sum_{\langle i,j\rangle}d^{nl(g)+nl(gs_{ij})+n}-d^{2nl(gs_{i,j})}\nonumber 
  -K d^{2nl(g)}\frac{(d^{2n+N_{A}}-1)!}{(d^{2n+N_{A}}+k-1)!}\}+\mathcal{O}\left(\frac{1}{d^{2t}}\right).
\end{align}

We can now consider the large $N_{A}$ limit. In this limit, we have
\begin{equation}
\begin{aligned}K & =d^{kN_{A}}\sum_{\langle i,j\rangle}\bigg(d^{kn+nl(s_{i,j})+n}-d^{2nl(s_{i,j})}+\mathcal{O}\left(\frac{1}{d^{N_{A}}}\right)\bigg)
  =d^{kN_{A}}\frac{k(k-1)}{2}d^{2kn}\bigg(1-\frac{1}{d^{2n}}+\mathcal{O}\left(\frac{1}{d^{N_{A}}}\right)\bigg).
\end{aligned}
\end{equation}
and 
\begin{equation}
\begin{aligned}\beta(g) & =\frac{1}{d^{t+1}d^{2kn+kN_{A}}(1+\mathcal{O}\left(\frac{1}{d^{N_{A}}}\right))}\{\sum_{\langle i,j\rangle}d^{nl(g)+nl(gs_{ij})+n}-d^{2nl(gs_{i,j})}\\
 & -d^{kN_{A}}\frac{k(k-1)}{2}d^{2kn}\bigg(1-\frac{1}{d^{2n}}+\mathcal{O}\left(\frac{1}{d^{N_{A}}}\right)\bigg)\frac{d^{2nl(g)}}{d^{2nk+kN_{A}}(1+\mathcal{O}\left(\frac{1}{d^{N_{A}}}\right))}\}\\
 & =\frac{1}{d^{t+1}d^{2kn+kN_{A}}}\{\sum_{\langle i,j\rangle}d^{nl(g)+nl(gs_{ij})+n}-d^{2nl(gs_{i,j})} -\frac{k(k-1)}{2}\bigg(1-\frac{1}{d^{2n}}\bigg)d^{2nl(g)}+\mathcal{O}\left(\frac{1}{d^{N_{A}}}\right)\}.
\end{aligned}\label{eq:SM:expression_of_beta_g}
\end{equation}

The convergence rate can be characterized as 
\begin{equation}
\Delta^{(k)}_{\xi}=\frac{||\rho_{N_{A}}^{(k)}-\rho_{\mathrm{GHS}}^{(k)}||_{\xi}}{||\rho_{\mathrm{GHS}}^{(k)}||_{\xi}}.
\end{equation}
To compute this quantity, we also refer to the replica trick. We will first compute 
\begin{equation}
\bar{\Delta}^{(k,q)}_{\xi}=\frac{||\bar{\rho}_{N_{A}}^{(k,q)}-\rho_{\mathrm{GHS}}^{(k)}||_{\xi}}{||\rho_{\mathrm{GHS}}^{(k)}||_{\xi}}
\end{equation}
for each positive integer $q$, and then take the limit $q=1-k$. This limit exists because we will show that $\bar{\Delta}^{(k,q)}_{\xi}$ can be bounded independently of $q$. 

From the definition of $\bar{\rho}_{N_A}^{(k,q)}$ Eq. (\ref{SM:expression_of_bar_rho_k_q}), we obtain that 
\begin{equation}
\bar{\Delta}^{(k,q)}_{\xi}=\frac{||\sum_{g\in\mathbb{S}_{k+q}}\beta(g)\mathrm{Tr}_{\{k+q\}/\{k\}}\rho_{N_{A},k+q}(g)||_{\xi}}{||\rho_{\mathrm{GHS}}^{(k)}||_{\xi}},
\end{equation}
with $\beta(g)$ given by Eq. (\ref{eq:SM:expression_of_beta_g}) but with $k$ replaced by $k+q$. Since we only consider the leading contribution from $\mathcal{O}(d^{N_A})$, the trace $\mathrm{Tr}_{\{k+q\}/\{k\}}$ forces that $\rho_{N_A,k+q}(g)=\rho_{N_A,k}(g')\otimes \rho_{N_A,q}(e)$ for some $g'\in\mathbb{S}_k$. Equivalently, it forces that the permutation element $g$ should act like $g'$ in the first $k$ elements and act trivially in the last $q$ elements, which we denote by $g=(g', e)$. Using $\mathrm{Tr}_{\{k+q\}/\{k\}}\rho_{N_A,k+q}(g)=\rho_{N_A}(g')d^{qN_A}$, we have
\begin{equation}
\overline{\Delta}_{\xi}^{(k,q)}=\frac{\left\lVert \sum_{g=(g',e)}\bigg\{\sum_{1\leq i<j\leq k+q}d^{nl(g)+nl(gs_{ij})+n}-d^{2nl(gs_{i,j})}-\frac{(k+q)(k+q-1)}{2}\bigg(1-\frac{1}{d^{2n}}\bigg)d^{2nl(g)}\bigg\} d^{qN_{A}}\rho_{N_{A},k}(g')\right\rVert _{\xi}}{d^{t+1}d^{2(k+q)n+(k+q)N_{A}}\lVert\rho_{\mathrm{GHS}}^{(k)}\rVert_{\xi}}
\end{equation}
up to the leading contribution of $\mathcal{O}(d^{N_A})$.

We first consider the term
\[
\sum_{1\leq i<j\leq k+q}d^{nl(g)+nl(gs_{ij})+n}-d^{2nl(gs_{i,j})}-\frac{(k+q)(k+q-1)}{2}\bigg(1-\frac{1}{d^{2n}}\bigg)d^{2nl(g)}
\]
in the numerator. Since $g=(g',e)$, we have $l(g)=l(g')+q$. We classify
the sum into three groups. 
\begin{enumerate}
\item $1\leq i<j\leq k$. This group has $k(k-1)/2$ elements and $l(gs_{i,j})=l(g's_{i,j})+q$.
This leads to
\[
\begin{aligned} & \sum_{1\leq i<j\leq k}d^{nl(g)+nl(gs_{ij})+n}-d^{2nl(gs_{i,j})}-\frac{k(k-1)}{2}\bigg(1-\frac{1}{d^{2n}}\bigg)d^{2nl(g)}\\
= & \left[\sum_{1\leq i<j\leq k}d^{nl(g')+nl(g's_{ij})+n}-d^{2nl(g's_{i,j})}-\frac{k(k-1)}{2}\bigg(1-\frac{1}{d^{2n}}\bigg)d^{2nl(g')}\right]d^{2nq};
\end{aligned}
\]
\item $1\leq i\leq k<j\leq k+q$. This group has $kq$ elements and $l(gs_{i,j})=l(g')+q-1$.
Therefore,
\[
\begin{aligned} & \sum_{1\leq i\leq k<j\leq k+q}d^{nl(g)+nl(gs_{ij})+n}-d^{2nl(gs_{i,j})}-kq\bigg(1-\frac{1}{d^{2n}}\bigg)d^{2nl(g)}\\
= & kq\left[d^{2n(l(g')+q)}-d^{2n(l(g')+q-1)}-\left(1-\frac{1}{d^{2n}}\right)d^{2n(l(g')+q)}\right]\\
= & 0;
\end{aligned}
\]
\item $k+1\leq i<j\leq k+q$. This group has $q(q-1)/2$ elements and $l(gs_{i,j})=l(g')+q-1$.
Similarly
\[
\begin{aligned} & \sum_{k+1\leq i<j\leq k+1}d^{nl(g)+nl(gs_{ij})+n}-d^{2nl(gs_{i,j})}-kq\bigg(1-\frac{1}{d^{2n}}\bigg)d^{2nl(g)}= 0.
\end{aligned}
\]
\end{enumerate}
Summing over the above three groups, we obtain
\begin{equation}
\overline{\Delta}_{\xi}^{(k,q)}=\frac{\left\lVert \sum_{g'\in\mathbb{S}_k}\bigg\{\sum_{1\leq i<j\leq k}d^{nl(g')+nl(g's_{ij})+n}-d^{2nl(g's_{i,j})}-\frac{k(k-1)}{2}\bigg(1-\frac{1}{d^{2n}}\bigg)d^{2nl(g')}\bigg\}\rho_{N_{A},k}(g')\right\rVert _{\xi}}{d^{t+1}d^{2kn+kN_{A}}\lVert\rho_{\mathrm{GHS}}^{(k)}\rVert_{\xi}}.\label{eq:SM_deviation_onenorm}
\end{equation}
The above expression is independent of $q$. Therefore, we can take the limit $q=1-k$ and the same expression holds for $\Delta_{\xi}^{(k)}$.

For the case $\xi=1$, let us estimate the numerator in Eq. (\ref{eq:SM_deviation_onenorm}).
It is obvious that $\rho_{N_{A},k}(g')\rho_{N_{A},k}^{\dagger}(g')=I$,
where $I$ is the Identity acting on the $k$-replica space. Thus,
$||\rho_{N_{A}}(g')||_{1}=d^{N_{A}k}$.
For the denominator, we have
$
||\rho_{\mathrm{GHS}}^{(k)}||_{1}=1$
since $\rho_{\mathrm{GHS}}^{(k)}$ is a normalized density matrix. 
Therefore,
\begin{equation}
\begin{aligned}\Delta^{(k)}_{1} & \leq\frac{d^{N_{A}k}}{d^{t+1}d^{2kn+kN_{A}}}\{\sum_{g\in\mathbb{S}_k}|\sum_{\langle i,j\rangle}d^{nl(g)+nl(gs_{ij})+n}-d^{2nl(gs_{i,j})}-\frac{k(k-1)}{2}\bigg(1-\frac{1}{d^{2n}}\bigg)d^{2nl(g)}|\}\\
 & =\frac{1}{d^{t+1}d^{2kn}}\{\sum_{g\in\mathbb{S}_k}|\sum_{\langle i,j\rangle}d^{nl(g)+nl(gs_{ij})+n}-d^{2nl(gs_{i,j})}-\frac{k(k-1)}{2}\bigg(1-\frac{1}{d^{2n}}\bigg)d^{2nl(g)}|\}
\end{aligned}
\label{eq:convergence_rate_for_2-1}
\end{equation}
If  the bound for $\Delta_1^{(k)}$ converges to zero, it 
 means that terms vanish for each
$g$ in the sum. If we take $g=e$, the summand reduces to zero. 

Next, we consider the contribution from $g=s_{i',j'}$ for certain
pair of $(i',j')$ with $1\leq i'<j'\leq k$, which gives us

\begin{equation}
\begin{aligned} & \frac{1}{d^{t+1}d^{2kn}}|\{d^{2nk}+\frac{(k-2)(k+1)}{2}d^{2n(k-1)}-d^{2nk}- \frac{(k-2)(k+1)}{2}d^{2n(k-2)}-\frac{k(k-1)}{2}\bigg(1-\frac{1}{d^{2n}}\bigg)d^{2n(k-1)}\}|\\
= & \frac{1}{d^{t+1}d^{2kn}}|\{\frac{(k-2)(k+1)}{2}d^{2n(k-1)}(1-\frac{1}{d^{2n}})-\frac{k(k-1)}{2}\bigg(1-\frac{1}{d^{2n}}\bigg)d^{2n(k-1)}\}|\\
= & \frac{1}{d^{t+1}d^{2kn}}d^{2n(k-1)}(1-\frac{1}{d^{2n}})
\sim  \frac{1}{d^{t+1}}\frac{1}{d^{2n}}(1-\frac{1}{d^{2n}}).
\end{aligned}
\end{equation}
Summing over all pairs of $(i',j')$ with $1\leq i'<j'\leq k$ gives rise to 
the leading contribution in the limit $n\to\infty$ while the ratio $n/N_A$ is fixed (or equivalently, in the large $d$ limit).  Thus, we obtain that in this case,
when $k\ll d^{2n}$,
 
\begin{equation}
\Delta^{(k)}_1\lesssim\frac{1}{d^{t+1}}\frac{1}{d^{2n}}(1-\frac{1}{d^{2n}})\frac{k(k-1)}{2}.
\end{equation}
If we require that $\Delta^{(k)}_1\leq\epsilon$, we got the time it
takes for MSPE to approach the generalized Hilbert-Schmidt ensemble $\epsilon$
closely as $t_{k}\lesssim\log{k/\epsilon}$. We have also numerically tested that the dominant contribution of $\Delta_{1}^{(k)}$
 comes from summing $g=s_{i',j'}$ for all pairs of $(i',j')$ even for finite $n$ when $k$ is not
too large compared to $d^{2n}$.

The case $\xi=2$ follows similarly with being more controllable. 
Using Eq.~(\ref{eq:SM_sum_of_permutation_element_inexp}), we arrive at

\begin{equation}
\begin{aligned}\Delta_{2}^{(k)} & =\bigg(\frac{\sum_{g}|\beta(g)|^{2}d^{N_{A}k}
}{\sum_{g}|\frac{d^{2nl(g)}}{(d^{N_{A}+2n}+k-1)!/(d^{N_{A}+2n}-1)!}|^{2}d^{N_{A}k}
}
\bigg)^{\frac{1}{2}}
\\
 & =\frac{1}{d^{t+1}}\frac{\frac{1}{d^{2kn+kN_{A}}}\bigg(\sum_{g}|\{\sum_{\langle i,j\rangle}d^{nl(g)+nl(gs_{ij})+n}-d^{2nl(gs_{i,j})}-\frac{k(k-1)}{2}\bigg(1-\frac{1}{d^{2n}}\bigg)d^{2nl(g)}+\mathcal{O}\left(\frac{1}{d^{N_{A}}}\right)\}|^{2}\bigg)^{\frac{1}{2}}}{\frac{1}{d^{kN_{A}+2kn}
 }
 \bigg(\sum_{g}|d^{2nl(g)}|^{2}\bigg)^{\frac{1}{2}}}
 \\
 & =\frac{1}{d^{t+1}}\bigg(\frac{(d^{4n}-1)!}{(d^{4n}+k-1)!}\sum_{g}|\{\sum_{\langle i,j\rangle}d^{nl(g)+nl(gs_{ij})+n}-d^{2nl(gs_{i,j})}-\frac{k(k-1)}{2}\bigg(1-\frac{1}{d^{2n}}\bigg)d^{2nl(g)}\}|^{2}\bigg)^{\frac{1}{2}}
 \\
 &\sim
 \frac{1}{d^{t+1}}\frac{1}{d^{2n}}(1-\frac{1}{d^{2n}})\sqrt{\frac{k(k-1)}{2}}.
\end{aligned}
\label{eq:convergence_rate_for_norm_2}
\end{equation}
Notice that in the case $\xi=1$, we obtain an upper bound on $\Delta_1^{(k)}$, but for $\xi=2$, the result is nearly precise except the last line in Eq. (\ref{eq:convergence_rate_for_norm_2}). If we require that $\Delta_2^{(k)}\leq\epsilon$, we got the time it takes for MSPE to approach the generalized Hilbert-Schmidt ensemble $\epsilon$ closely as $t_k\sim\log{k/\epsilon}$.

In the above discussion, we assume $m=2n$ as a constant. Nonetheless,
one can directly figure out that all of the results in this section
also hold if $N_{A}\to\infty$ and the ratio $\gamma=m/N_{A}$ is fixed ($\gamma=0$ corresponds to a constant $m$).

\section{Large $d$ limit}
\label{Sec:large_d}
In this section, we take the limit $d\to\infty$ at fixed $t$ for any $t+1\geq N_A$. In this limit, we can approximate the cycle function $l(g)$ for $g\in\mathbb{S}_k$ as
\begin{equation}
    d^{l(g)}=d^k\bigg(\delta_{g,e}+\mathcal{O}\left(\frac{1}{d}\right)\bigg).
\end{equation}
Thus, at the zeroth order, we have $\rho_{N_A}^{(k)}=\sum_{g\in\mathbb{S}_k} \alpha(g) \rho_{N_A,k}(g)/\mathrm{Tr}(\sum_{g\in\mathbb{S}_k} \alpha(g) \rho_{N_A,k}(g))$ with
\begin{equation}
    \alpha(g)=
\begin{cases}
       1& \; \mathrm{if}\;g=e\\
       0& \; \mathrm{otherwise}
\end{cases}.
\end{equation}
We want to emphasize that this result (at fixed $k$) holds for any $n\geq1$. It indicates that in the large-$d$ limit, the contribution from the permutation element except the Identity one diminishes sharply even for a single forgotten measurement outcome. The MSPE deeply thermalizes to the totally mixed states.

In the next order expansion in $1/d$, we can approximate the cycle function $l(g)$ as 
\begin{equation}
    d^{l(g)}=d^k\bigg(\delta_{g,e}+\frac{1}{d}\sum_{i<j}\delta_{g,s_{i,j}}+\mathcal{O}\left(\frac{1}{d^2}\right)\bigg).
\end{equation}
If $t>3n$, we can still do the approximation 
\begin{equation}
d^{(t+1-n)l(g^{-1}g')}=d^{(t+1-n)k}\delta_{g,g'}+\mathcal{O}\left(\frac{1}{d^{t-n}}\right),
\end{equation}
and similarly for $d^{(t+1)l(g^{-1}g')}$.
Consequently, by repeating the argument above again, we obtain that
\begin{equation}
\alpha(g)=\delta_{g,e}+\sum_{i<j}\delta_{g,s_{i,j}}\frac{1}{d^{2n}}+\mathcal{O}\left(\frac{1}{d^{2n+1}}\right).
\end{equation}
Therefore, we see the effect of $n$ on MSPE. 
Although the amplitude of each term in the summand is independent of $k$, the number of terms in $\sum_{i<j}$ scales as $k^2$. Consequently, we can not ignore the next leading contribution if $k\gg d^n$.

\section{Eigenvalue distribution}
\label{Sec:Eigenvalue_distribution}

In this section, we compute the eigenvalue distribution for MSPE.
From the  section~\ref{Sec:correction_of_large_t}, we can express the $k$-moment of MSPE
as 
\begin{equation}
\rho_{N_{A}}^{(k)}=\sum_{g\in\mathbb{S}_k}\frac{\alpha(g)}{C}\rho_{N_{A},k}(g)\label{eq:SM_kth_moment_of_MSPE}
\end{equation}
with 
\begin{equation}
\begin{aligned}\frac{\alpha(g)}{C} & =\frac{d^{ml(g)}}{(d^{m+N_{A}}+k-1)!/(d^{m+N_{A}}-1)!}\\
 & +\frac{1}{d^{t+1}d^{mk+kN_{A}}}\bigg\{\sum_{1\leq i<j\leq k}d^{\frac{ml(g)+ml(gs_{ij})+m}{2}}-d^{ml(gs_{i,j})}-\frac{k(k-1)}{2}\bigg(1-\frac{1}{d^{m}}\bigg)d^{ml(g)}\bigg\}.
\end{aligned}
\end{equation}
Here we have substituted $m=2n$. In this section, we take the limit
$N_{A}\to\infty$ with the ratio $\gamma=m/N_{A}$ fixed ($\gamma$
can be zero, which corresponds to $m$ fixed as a constant). We first
need to calculate $\mathrm{Tr}\overline{\rho_{N_{A}}^{k}}=\sum_{\bm{\alpha}}P_{\bm{\alpha}}\mathrm{Tr}\rho_{N_A}^k(\bm{\alpha})$, where
we use bar to denote the average over MSPE. To this end, we introduce
a permutation element $\sigma$ which maps $\{1,2,3,\cdots,k\}$ to
$\{k,1,2,\cdots,k-1\}$. 
Denoting by $\mathcal{S}=\rho_{N_{A},k}(\sigma)$ as the representation of the permutation element $\sigma$ on $k$-replicas, each with $N_A$ sites, and denoting by $\rho_{N_A}$ the usual density matrix of a single replica,
we obtain 
\begin{equation}
\mathrm{Tr}\overline{\rho_{N_{A}}^{k}}=\mathrm{Tr}\mathcal{S}^\dagger \rho_{N_{A}}^{(k)}=\min\{d^{N_{A}},d^{m}\}\int P(\lambda)\lambda^{k}d\lambda.\label{eq:SM_expression_of_distribution_EV}
\end{equation}
In the above expression, $P(\lambda)$ is the distribution of the
eigenvalues, and the prefactor $\min\{d^{N_{A}},d^{m}\}$ is the maximum possible number
of non-vanishing eigenvalues of $\rho_{N_{A}}(\bm{\alpha})$. 

As long as the left-hand side of Eq. (\ref{eq:SM_expression_of_distribution_EV})
is known, we can then revert the right hand side and obtain the expression
of $P(\lambda)$. In the following, we will only keep the leading
order of $d^{N_{A}}$. If $\gamma<1$, we have
\begin{equation}
\begin{aligned}  \mathrm{Tr}\overline{\rho_{N_{A}}^{k}}
= & \sum_{g}\frac{\alpha(g)}{C}d^{N_{A}l(\sigma^{-1}g)}
\approx  \frac{\alpha(\sigma)}{C}d^{kN_{A}}
=  \frac{1}{d^{(k-1)m}}-\frac{k(k-1)}{2d^{t+1}}\bigg(1-\frac{1}{d^{m}}\bigg)\frac{1}{d^{(k-1)m}}.
\end{aligned}
\end{equation}
This leads to the solution of $P(\lambda)$ as 
\begin{equation}
P(\lambda)=\delta\left(\lambda-\frac{1}{d^{m}}\right)-\frac{1}{2d^{t+1}d^{2m}}\bigg(1-\frac{1}{d^{m}}\bigg)\delta''\left(\lambda-\frac{1}{d^{m}}\right),\label{eq:SM_P_lambda_less}
\end{equation}
where $\delta''(x)$ is the second derivative of $\delta(x)$. The
first term in Eq. (\ref{eq:SM_P_lambda_less}) is the same result
as that of the generalized Hilbert-Schmidt ensemble in the limit of $m, N_A\to\infty$ with a fixed ratio $\gamma=m/N_A$. The second term, which is exponentially suppressed in
$t$, represents the broadening of the delta distribution, as can
be seen by a test function.

The case $\gamma>1$ follows similarly. One just needs to notice that the main contribution
to $\mathrm{Tr}\overline{\rho_{N_{A}}^{k}}$ now comes from $g=e$ in
Eq. (\ref{eq:SM_kth_moment_of_MSPE}), where $\alpha(g)$ dominates.

\section{Deep thermalization for typical dynamical models\label{Sec:DT_typical}}
In this section, we will compute the MSPE both analytically and numerically for other typical models. We present the analytical result first. 
\subsection{Analytical proof for globally Haar-random states}
Consider a circuit uniformly drawn from the global unitary group; then the output state will be Haar distributed. For this setting, we establish the following
\begin{theorem}[Global Haar random states]\label{Theorem:global_haar}

Let $\ket{\psi}$ be a Haar-random state in $\mathbb{C}^{d^{N_A+N_B}}$,
subjected to a similar measurement-forgetting scheme as in Fig.~\ref{fig1} in the main text, with measurements taken in the computational basis. Then the corresponding MSPE is $\epsilon$-approximately deeply
thermalized to the generalized Hilbert-Schmidt ensemble with probability
at least $1-\delta$ as long as 
\begin{equation}
N_{B}=\Omega(N_{A}k+\log\frac{1}{\epsilon}+\log\log\frac{1}{\delta}).\label{eq:SM_theorem_scaling_of_Nb}
\end{equation}
Here the $\epsilon$-approximation is characterized in $\Delta_{1}^{(k)}$.

\end{theorem}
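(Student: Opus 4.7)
The plan is to reduce the theorem to the standard deep-thermalization result for the pure-state projected ensemble of a global Haar state, by exploiting the fact that forgetting the $m$ measurement outcomes on $B_2$ is operationally equivalent to tracing them out. Concretely, I reformulate the MSPE as a two-stage process: first, measure only $\tilde{B} := B_1 \cup B_3$ in the computational basis, producing a pure-state projected ensemble $\{P_{\bm{\alpha}}, |\phi(\bm{\alpha})\rangle_{A \cup B_2}\}$ on the Hilbert space of $A \cup B_2$; second, apply $\mathrm{Tr}_{B_2}$ to each state to obtain $\rho_{N_A}(\bm{\alpha})$. A direct check confirms that the Born probabilities of the two-stage description coincide with those of the one-stage MSPE, so the two ensembles are identical as probability distributions over density matrices.

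Next I invoke the standard deep-thermalization bound for global Haar states, in the form proved in Ref.~\cite{Chan_2024_projected}: for a Haar-random pure state on a bipartite system $X \cup Y$ whose $Y$ part is measured in the computational basis, the projected ensemble on $X$ satisfies $\Delta_1^{(k)} \leq \epsilon$ with probability at least $1-\delta$ provided $|Y| \gtrsim |X| k + \log(1/\epsilon) + \log\log(1/\delta)$. I apply this with $X = A \cup B_2$ and $Y = \tilde{B}$, which requires $N_B - m \gtrsim (N_A + m) k + \log(1/\epsilon) + \log\log(1/\delta)$; in the regime $m = O(N_A)$ relevant to the theorem, this is absorbed into the scaling of Eq.~\eqref{eq:SM_theorem_scaling_of_Nb}. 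To transfer the approximation from $A \cup B_2$ down to $A$, I use that $\rho_{N_A}^{(k)} = (\mathrm{Tr}_{B_2})^{\otimes k} \rho_{A \cup B_2, \mathrm{proj}}^{(k)}$, that $(\mathrm{Tr}_{B_2})^{\otimes k}$ applied to the Haar moment on $A \cup B_2$ yields exactly $\rho_{\mathrm{GHS}}^{(k)}$ by the definition in Eq.~\eqref{eq:HS_ensemble}, and that partial trace is a contraction in trace norm:
\begin{equation*}
\|\rho_{N_A}^{(k)} - \rho_{\mathrm{GHS}}^{(k)}\|_1 \leq \|\rho_{A \cup B_2, \mathrm{proj}}^{(k)} - \rho_{\mathrm{Haar}, A \cup B_2}^{(k)}\|_1.
\end{equation*}
Since both $\rho_{\mathrm{GHS}}^{(k)}$ and $\rho_{\mathrm{Haar}, A \cup B_2}^{(k)}$ are normalized density matrices of trace one, the relative distance $\Delta_1^{(k)}$ for the MSPE is bounded by the corresponding quantity for the pure-state projected ensemble on $A \cup B_2$, which is at most $\epsilon$ by the previous step.

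The principal obstacle lies in the invoked pure-state bound rather than in the reduction. The average $\mathbb{E}_{\psi \sim \mathrm{Haar}}[\rho_{\mathrm{proj}}^{(k)}]$ equals $\rho_{\mathrm{Haar}}^{(k)}$ up to corrections that can be evaluated using Weingarten calculus, organized via the permutation-sum identity Eq.~\eqref{eq:SM_sum_of_permutation_element_inexp}; these corrections are suppressed exponentially in $|Y|$, supplying the $|Y| \gtrsim |X| k$ scaling. The concentration step, converting an expectation bound to a high-probability bound, is where the $\log\log(1/\delta)$ appears: Levy's lemma applied directly to $\Delta_1^{(k)}$ as a function on the unit sphere gives a Lipschitz constant that grows with $k$ and with the replica dimension, and squeezing this into the stated scaling requires care (either a truncation-and-union-bound argument over a $k$-dependent net of observables, or a moment-method concentration for $\|\rho_{\mathrm{proj}}^{(k)} - \rho_{\mathrm{Haar}}^{(k)}\|_2$ combined with a norm conversion). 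Verifying that the implicit constants genuinely absorb the $m$-dependence, so that the clean statement $N_B = \Omega(N_A k + \log(1/\epsilon) + \log\log(1/\delta))$ is tight, is the remaining bookkeeping task.
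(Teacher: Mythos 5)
Your reduction --- measure only $B_1\cup B_3$ to get a pure-state projected ensemble on $A\cup B_2$, then push the error through $(\mathrm{Tr}_{B_2})^{\otimes k}$ using contractivity of the partial trace in trace norm --- is internally sound, and it is in fact the strategy the paper uses for the \emph{local} Haar circuit case in the End Matter. But for this theorem it does not deliver the claimed scaling, and the shortfall is not ``remaining bookkeeping.'' Invoking the pure-state bound with $X=A\cup B_2$ and $Y=B_1\cup B_3$ costs you $N_B - m \gtrsim (N_A+m)k+\log(1/\epsilon)+\log\log(1/\delta)$, i.e.\ an $m$-dependent requirement $N_B \gtrsim (N_A+m)k + m + \cdots$. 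The theorem asserts $N_B=\Omega(N_A k+\log(1/\epsilon)+\log\log(1/\delta))$ with \emph{no} $m$-dependence, and the paper immediately exploits this: the remark following the theorem stresses that the bound depends on $N_A$ but not on $N_{B_2}$, so the result covers dense measurement loss where $m$ is a finite fraction of $N_B$. In that regime your intermediate step is simply unsatisfiable (for $m\geq N_B/(k+1)$ the inequality $N_B-m\gtrsim (N_A+m)k$ has no solution), so restricting to ``$m=O(N_A)$'' quietly proves a strictly weaker statement than the one asked for.

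The paper avoids this by never asking the ensemble on $A\cup B_2$ to be an approximate design. It applies Levy's lemma directly to each matrix element $f(\ket{\psi})=\bra{i}\rho^{(k)}_{\ket{\psi}}\ket{j}$ of the already-traced-out $k$-th moment on $\mathcal{H}_A^{\otimes k}$: the mean is shown to equal $\bra{i}\rho^{(k)}_{\mathrm{GHS}}\ket{j}$ \emph{exactly} (using the independence of $P_{\bm{\alpha}}$ and $\ket{\psi_{\bm{\alpha}}}$ for global Haar states, as in Ref.~\cite{cotler2023emergent}), the Lipschitz constant is bounded by $2(2k-1)$ independently of $m$, and a union bound over the $d^{2N_Ak}$ matrix elements with $\epsilon\to\epsilon/d^{2N_Ak}$ converts this to a trace-norm statement. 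Since only $\dim\mathcal{H}_A^{\otimes k}$ enters the union bound and the Lipschitz estimate, the resulting requirement on $N_B$ involves $N_A k$ but not $m$. If you want to salvage your route, you would need a version of the intermediate pure-state statement that controls only the observables surviving $\mathrm{Tr}_{B_2}$ --- at which point you have essentially reconstructed the paper's direct concentration argument. Separately, the global-Haar pure-state bound you invoke is due to Ref.~\cite{cotler2023emergent} rather than Ref.~\cite{Chan_2024_projected} (the latter concerns local Haar circuits), though that attribution slip is minor compared to the scaling issue.
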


Notice that the required scaling of the subsystem $B$, the right-hand side of Eq. (\ref{eq:SM_theorem_scaling_of_Nb}), only depends
on $N_{A}$ but not $N_{B_{2}}$. Thus, the result also applies under dense measurement-loss errors, and the precise locations of such errors are irrelevant. In the theorem, we assume the measurement is in the computational basis, but it can be directly generalized to other measurement bases.
Our proof strategy closely parallels that of Ref. \cite{cotler2023emergent}.

The proof of the above theorem is
divided into the following three lemmas. Notice that if $\ket{\psi}\in\mathbb{C}^{d^{N_{A}+N_{B}}}$
is Haar distributed, then real vector $\vec{v}$ constructed from
$\ket{\psi}$ as $\vec{v}=\begin{bmatrix}\mathrm{Re}(\psi) & \mathrm{Im}(\psi)\end{bmatrix}$is
uniformly distributed on the unit sphere in $\mathbb{R}^{2d^{N_{A}+N_{B}}}$.
In what follows, we identify a function $f:\mathbb{C}^{D}\to\mathbb{R}$
with the corresponding function $f:\mathbb{S}^{2D-1}\to\mathbb{R}$
via the relation $f(\ket{\psi})=f(\vec{v})$. The Levy's lemma then
states that

\begin{lemma}[The Levy's lemma]\label{Lemma:Levy}

If a function $f:\mathbb{S}^{2D-1}\to\mathbb{R}$ is differentiable
and satisfies $\lVert df(\vec{v})/d\vec{v}\rVert\leq\eta$ for $\forall\vec{v}\in\mathbb{S}^{2D-1}$,
then for any $\epsilon>0$, we have
\begin{equation}
\mathrm{Prob}_{\ket{\psi}\sim\mathrm{Haar}(D)}[|f(\ket{\psi})-\mathbb{E}_{\ket{\psi}\sim\mathrm{Haar}(D)}(f(\ket{\psi}))|\geq\epsilon]\leq2\exp\left(-\frac{2D\epsilon^{2}}{9\pi^{3}\eta^{2}}\right)
\end{equation}
with $\mathbb{E}_{\ket{\psi}\sim\mathrm{Haar}(D)}$ denoting the average
over the Haar distribution.

\end{lemma}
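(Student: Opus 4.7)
The plan is to exploit the positive curvature of the sphere $\mathbb{S}^{2D-1}$, which yields sharp concentration of measure. The overall approach has three standard steps: first translate the gradient bound into a Lipschitz bound on the sphere, then apply the spherical isoperimetric inequality to obtain a Gaussian-type tail bound about the median of $f$, and finally convert concentration about the median into concentration about the mean.

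For the first step, since $\lVert df/d\vec{v} \rVert \leq \eta$ pointwise, the mean-value theorem gives $|f(\vec{u}) - f(\vec{v})| \leq \eta \lVert \vec{u} - \vec{v} \rVert$ in Euclidean distance. On the unit sphere, the geodesic distance $d_g$ and the chordal distance are comparable, $\lVert \vec{u} - \vec{v} \rVert \leq d_g(\vec{u},\vec{v}) \leq (\pi/2)\lVert \vec{u} - \vec{v} \rVert$, so $f$ is also Lipschitz in the geodesic metric with constant at most $\eta$, up to a factor of $\pi/2$. For the second step, I would invoke the L\'evy--Schmidt isoperimetric inequality: among measurable subsets of $\mathbb{S}^{2D-1}$ of fixed measure, spherical caps minimize the measure of the $\delta$-enlargement. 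Applied to the half-space $\{f \geq M_f\}$, where $M_f$ is a median of $f$, this yields
\begin{equation}
\mathrm{Prob}[\,f \geq M_f + \epsilon\,] \leq \exp\!\left(-\frac{C D \epsilon^2}{\eta^2}\right)
\end{equation}
for some absolute constant $C$, obtained by comparing the measure of a spherical cap at geodesic distance $\epsilon/\eta$ from the equator against the corresponding Gaussian tail on the real line. An analogous bound holds from below.

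For the third step, integrating the median tail bound above yields $|\mathbb{E}f - M_f| = O(\eta/\sqrt{D})$, so the triangle inequality converts the concentration around the median to concentration around the mean, at the cost of a constant factor in the exponent and a prefactor of $2$ absorbing both tails. The main obstacle will be the careful constant-tracking required to arrive at precisely the denominator $9\pi^3\eta^2$: each of the three steps introduces a multiplicative factor (the chordal-to-geodesic conversion contributes a factor $\pi/2$ that gets squared, the isoperimetric comparison to a Gaussian cap contributes further factors related to surface areas of $\mathbb{S}^{2D-1}$, and the median-to-mean step absorbs one more). These multiply to inflate the naive sharp constant $(2D-2)/(2\eta^2)$ coming from the isoperimetric inequality itself into the looser but uniform constant $2D/(9\pi^3 \eta^2)$ quoted in the statement. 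An equivalent but cleaner route, which I would keep in reserve, is to use the fact that the uniform measure on $\mathbb{S}^{2D-1}$ satisfies a log-Sobolev inequality with constant of order $1/D$ (a consequence of Bakry--\'Emery applied to the sphere's Ricci lower bound $2D-2$), then run Herbst's argument to produce the Gaussian concentration directly from the gradient bound $\eta$, avoiding the intermediate step through the median altogether.
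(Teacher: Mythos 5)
The paper does not prove this lemma at all: L\'evy's lemma is quoted there as a standard external result (the version with constant $2/(9\pi^3)$ that is ubiquitous in the quantum-information literature, e.g.\ following Ref.~\cite{cotler2023emergent} and the works it builds on), so there is no in-paper argument to compare yours against. Your sketch is the correct classical route --- Lipschitz extension from the gradient bound, the L\'evy--Schmidt isoperimetric inequality giving sub-Gaussian concentration about the median, and the median-to-mean conversion --- and the log-Sobolev/Herbst alternative you hold in reserve is equally standard and valid. The only substantive shortfall is the one you flag yourself: you never actually carry out the constant-tracking that produces the specific denominator $9\pi^3\eta^2$ in the stated bound, so as written this is a plan for a proof rather than a proof; to close it you would need to fix a concrete cap-measure estimate (e.g.\ $\mu(\{d_g(\cdot,C)\ge\theta\})\le\sqrt{\pi/8}\,e^{-(n-1)\theta^2/2}$ on $\mathbb{S}^{n}$) and propagate the $\pi/2$ chordal-to-geodesic factor and the median-to-mean shift explicitly. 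One minor slip: the set $\{f\ge M_f\}$ is a superlevel set of measure at least $1/2$, not a half-space; the isoperimetric comparison is against a hemisphere, which is what your subsequent cap estimate implicitly assumes, so the logic is unaffected.
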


To employ the Levy's lemma, we choose the function $f:\mathbb{C}^{d^{N_{A}+N_{B}}}\to\mathbb{R}$
as 
\begin{equation}
f(\ket{\psi})=\bra{i}\sum_{\bm{\alpha}}\frac{\mathrm{Tr}_{B_{2}}\left(\ket{\tilde{\psi}_{\bm{\alpha}}}\bra{\tilde{\psi}_{\bm{\alpha}}}\right)^{\otimes k}}{\left(\braket{\tilde{\psi}_{\bm{\alpha}}|\tilde{\psi}_{\bm{\alpha}}}\right)^{k-1}}\ket{j},\label{eq:SM_definition_of_test_function_in_Levy}
\end{equation}
where $\ket{\tilde{\psi}_{\bm{\alpha}}}=(I_{A+B_{2}}\otimes\bra{\bm{\alpha}})\ket{\psi}$
with $\bm{\alpha}$ the measurement outcome on $B_{1},$ $B_{3}$,
and $\ket{i},\ket{j}$ being the standard basis on $\mathcal{H}_{A}^{\otimes k}$,
where $\mathcal{H}_{A}$ is the Hilbert space on the subsystem $A$,
as $\ket{i}=\otimes_{l=1}^{k}\ket{i_{l}}$, $\ket{j}=\otimes_{l=1}^{k}\ket{j_{l}}$
with $\ket{i_{l}},\ket{j_{l}}\in\{\ket{1},\ket{2},\cdots,\ket{d}\}^{\otimes N_{A}}$.
The following Lemma shows that the function $f(\ket{\psi})$ satisfies
the condition in the Levy's lemma as

\begin{lemma}[Coefficients in Levy's Lemma]

The function $f$ defined in Eq. (\ref{eq:SM_definition_of_test_function_in_Levy})
satisfies 
\begin{equation}
\left\lVert \frac{df(\vec{v})}{d\vec{v}}\right\rVert \leq2(2k-1).
\end{equation}

\end{lemma}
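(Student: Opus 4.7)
The plan is to decompose $f$ into contributions associated with each measurement outcome $\bm{\alpha}$, observe that these have disjoint support among the complex amplitudes of $\ket{\psi}$, and then bound the sector-wise gradient using the quotient rule combined with positivity of $\rho_{\bm{\alpha}}:=\mathrm{Tr}_{B_{2}}\ket{\tilde\psi_{\bm\alpha}}\bra{\tilde\psi_{\bm\alpha}}$. Concretely, write $f=\sum_{\bm\alpha}g_{\bm\alpha}$ with $g_{\bm\alpha}=N_{\bm\alpha}/D_{\bm\alpha}^{k-1}$, where $N_{\bm\alpha}=\bra{i}\rho_{\bm\alpha}^{\otimes k}\ket{j}$ and $D_{\bm\alpha}=\mathrm{Tr}\,\rho_{\bm\alpha}=\braket{\tilde\psi_{\bm\alpha}|\tilde\psi_{\bm\alpha}}$. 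Because $\ket{\tilde\psi_{\bm\alpha}}=(I_{A+B_{2}}\otimes\bra{\bm\alpha})\ket{\psi}$ selects only amplitudes $\psi_{a,b,\bm\alpha}$, the term $g_{\bm\alpha}$ depends only on this sector of coordinates, and the sectors for distinct $\bm\alpha$ are orthogonal as real coordinates. Hence $\|\nabla_{\vec v}f\|^{2}=\sum_{\bm\alpha}\|\nabla_{\vec v}g_{\bm\alpha}\|^{2}$, and the final summation closes up cleanly using $\sum_{\bm\alpha}D_{\bm\alpha}=\braket{\psi|\psi}=1$ on the unit sphere.

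Within each sector I would use Wirtinger derivatives. Positivity of $\rho_{\bm\alpha}$ supplies the entry-wise bound $|(\rho_{\bm\alpha})_{i,j}|\leq D_{\bm\alpha}$, hence $|N_{\bm\alpha}|\leq D_{\bm\alpha}^{k}$. A direct calculation yields $\partial_{\psi^{*}_{c,d,\bm\alpha}}D_{\bm\alpha}=\psi_{c,d,\bm\alpha}$, and therefore $\|\partial_{\psi^{*}}D_{\bm\alpha}\|=\sqrt{D_{\bm\alpha}}$. For the numerator, I would expand $\partial_{\psi^{*}_{c,d,\bm\alpha}}N_{\bm\alpha}$ as a sum of $k$ terms (one per replica of $\rho_{\bm\alpha}$), apply the entry-wise bound on the remaining $k-1$ factors, and then use Cauchy--Schwarz twice --- first across the at most $k$ indices $l$ with $j_{l}=c$, and then using $\sum_{d}|\psi_{i_{l},d,\bm\alpha}|^{2}\leq D_{\bm\alpha}$ --- to obtain $\|\partial_{\psi^{*}}N_{\bm\alpha}\|\leq k D_{\bm\alpha}^{k-1/2}$. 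The quotient rule then delivers $\|\partial_{\psi^{*}}g_{\bm\alpha}\|\leq(2k-1)\sqrt{D_{\bm\alpha}}$, and the analogous bound on $\|\partial_{\psi}g_{\bm\alpha}\|$ follows by Hermitian conjugation.

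The last step is to convert the Wirtinger bounds into the real-parametrization gradient through the identity $\|\nabla_{\vec v}u\|^{2}=2(\|\partial_{\psi}u\|^{2}+\|\partial_{\psi^{*}}u\|^{2})$, applied separately to $u=\mathrm{Re}(g_{\bm\alpha})$ and $u=\mathrm{Im}(g_{\bm\alpha})$ and using $\|\partial_{\psi}\mathrm{Re}(g_{\bm\alpha})\|\leq\tfrac{1}{2}(\|\partial_{\psi}g_{\bm\alpha}\|+\|\partial_{\psi^{*}}g_{\bm\alpha}\|)$; this gives $\|\nabla_{\vec v}g_{\bm\alpha}\|^{2}\leq 4(2k-1)^{2}D_{\bm\alpha}$. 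Summing over $\bm\alpha$ and invoking $\sum_{\bm\alpha}D_{\bm\alpha}=1$ produces $\|df/d\vec v\|\leq 2(2k-1)$, as required. The main obstacle is keeping the constant sharp: the factor $k$ comes from the $k$-fold product rule on $N_{\bm\alpha}$ and the $k-1$ from the exponent on $D_{\bm\alpha}$ in the denominator, and these must combine via the triangle inequality without extra slack. A naive bound on $\|\partial_{\psi^{*}}N_{\bm\alpha}\|$ via the operator-norm estimate $\|\rho_{\bm\alpha}^{\otimes(k-1)}\|\leq D_{\bm\alpha}^{k-1}$ would introduce a spurious $\sqrt{d^{N_A}}$ factor through the unconstrained Hilbert-space index; it is precisely the entry-wise positivity bound, combined with the sparsity of the index pattern $(i_{l},j_{l})$, that avoids this and yields the constant $2(2k-1)$.
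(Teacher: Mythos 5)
Your proposal is correct and arrives at the paper's constant $2(2k-1)$ through the same underlying mechanism: the quotient rule splits $df$ into a numerator contribution of size $k$ (product rule over the $k$ replicas of $\rho_{\bm{\alpha}}$) and a denominator contribution of size $k-1$, while orthogonality of the measurement sectors keeps the sum over $\bm{\alpha}$ from inflating the bound. The packaging, however, is genuinely different. The paper first splits the derivative into the two global terms, bounds each by Cauchy--Schwarz and operator-norm arguments on explicitly constructed block matrices $M_{l}(\bm{\alpha},\bm{\beta})$ (whose mutual orthogonality over $(\bm{\alpha},\bm{\beta})$ plays the role of your disjoint-support observation), and adds $2(k-1)+2k$ by the triangle inequality. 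You instead decompose $f=\sum_{\bm{\alpha}}g_{\bm{\alpha}}$ first, obtain \emph{exact} Pythagorean additivity $\lVert\nabla_{\vec v} f\rVert^{2}=\sum_{\bm{\alpha}}\lVert\nabla_{\vec v} g_{\bm{\alpha}}\rVert^{2}$ from the disjoint coordinate supports, prove the per-sector bound $(2k-1)\sqrt{D_{\bm{\alpha}}}$ with Wirtinger calculus and the entry-wise positivity bound $|(\rho_{\bm{\alpha}})_{ij}|\le\mathrm{Tr}\,\rho_{\bm{\alpha}}$ (where the paper instead invokes that the normalized state is a density matrix), and close with $\sum_{\bm{\alpha}}D_{\bm{\alpha}}=1$. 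Your route makes the origin of each factor ($k$ from the numerator, $k-1$ from the denominator, $\sqrt{D_{\bm{\alpha}}}$ per sector, and the overall $2$ from the real parametrization) more transparent, and your explicit treatment of $\mathrm{Re}(g_{\bm{\alpha}})$ and $\mathrm{Im}(g_{\bm{\alpha}})$ separately addresses a point the paper glosses over, namely that $f$ as written is complex-valued for $\ket{i}\neq\ket{j}$ while Levy's lemma is stated for real functions. I checked the individual estimates --- $\lVert\partial_{\psi^{*}}D_{\bm{\alpha}}\rVert=\sqrt{D_{\bm{\alpha}}}$, the double Cauchy--Schwarz giving $\lVert\partial_{\psi^{*}}N_{\bm{\alpha}}\rVert\le kD_{\bm{\alpha}}^{k-1/2}$, and the conversion $\lVert\nabla_{\vec v}u\rVert^{2}=2(\lVert\partial_{\psi}u\rVert^{2}+\lVert\partial_{\psi^{*}}u\rVert^{2})$ for real $u$ --- and they all hold, so the argument is complete and yields the stated Lipschitz constant without slack.
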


\begin{proof}

We can explicitly calculate the derivative as 
\begin{equation}
\begin{aligned}\frac{df(\vec{v})}{d\vec{v}} & =-(k-1)\sum_{\bm{\alpha}}\left(\bra{i}\frac{\mathrm{Tr}_{B_{2}}\left(\ket{\tilde{\psi}_{\bm{\alpha}}}\bra{\tilde{\psi}_{\bm{\alpha}}}\right)^{\otimes k}}{\left(\braket{\tilde{\psi}_{\bm{\alpha}}|\tilde{\psi}_{\bm{\alpha}}}\right)^{k}}\ket{j}\frac{d}{d\vec{v}}\braket{\tilde{\psi}_{\bm{\alpha}}|\tilde{\psi}_{\bm{\alpha}}}\right)\\
 & +\sum_{\bm{\alpha}}\left(\sum_{l=1}^{k}\prod_{l'\neq l}\frac{\bra{i_{l'}}\mathrm{Tr}_{B_{2}}\left(\ket{\tilde{\psi}_{\bm{\alpha}}}\bra{\tilde{\psi}_{\bm{\alpha}}}\right)\ket{j_{l'}}}{\braket{\tilde{\psi}_{\bm{\alpha}}|\tilde{\psi}_{\bm{\alpha}}}}\frac{d}{d\vec{v}}\bra{i_{l}}\mathrm{Tr}_{B_{2}}\left(\ket{\tilde{\psi}_{\bm{\alpha}}}\bra{\tilde{\psi}_{\bm{\alpha}}}\right)\ket{j_{l}}\right).
\end{aligned}
\label{eq:SM_explicitly_derivative_of_f}
\end{equation}
We are going to upper bound each term separately. For the first term
in Eq. (\ref{eq:SM_explicitly_derivative_of_f}), we have 
\begin{equation}
\frac{d}{d\vec{v}}\braket{\tilde{\psi}_{\bm{\alpha}}|\tilde{\psi}_{\bm{\alpha}}}=2\ket{\bm{\alpha}}_{B_{1},B_{3}}\bra{\bm{\alpha}}\otimes I_{A+B_{2}}\otimes\begin{bmatrix}1\\
 & 1
\end{bmatrix}\cdot\vec{v}.
\end{equation}
 Therefore, the vector norm of the first term in Eq. (\ref{eq:SM_explicitly_derivative_of_f})
can be bounded as 
\begin{equation}
\begin{aligned} & \left\lVert -(k-1)\bra{i}\frac{\mathrm{Tr}_{B_{2}}\left(\ket{\tilde{\psi}_{\bm{\alpha}}}\bra{\tilde{\psi}_{\bm{\alpha}}}\right)^{\otimes k}}{\left(\braket{\tilde{\psi}_{\bm{\alpha}}|\tilde{\psi}_{\bm{\alpha}}}\right)^{k}}\ket{j}\frac{d}{d\vec{v}}\braket{\tilde{\psi}_{\bm{\alpha}}|\tilde{\psi}_{\bm{\alpha}}}\right\rVert \\
 & \leq2(k-1)\sqrt{\sum_{\bm{\alpha}}\bra{i}\frac{\mathrm{Tr}_{B_{2}}\left(\ket{\tilde{\psi}_{\bm{\alpha}}}\bra{\tilde{\psi}_{\bm{\alpha}}}\right)^{\otimes k}}{\left(\braket{\tilde{\psi}_{\bm{\alpha}}|\tilde{\psi}_{\bm{\alpha}}}\right)^{k}}\ket{j}\vec{v}^{T}\ket{\bm{\alpha}}_{B_{1},B_{3}}\bra{\bm{\alpha}}\otimes I_{A+B_{2}}\otimes\begin{bmatrix}1\\
 & 1
\end{bmatrix}\vec{v}},\\
 & \leq2(k-1)\sqrt{\sum_{\bm{\alpha}}\vec{v}^{T}\ket{\bm{\alpha}}_{B_{1},B_{3}}\bra{\bm{\alpha}}\otimes I_{A+B_{2}}\otimes\begin{bmatrix}1\\
 & 1
\end{bmatrix}\vec{v}},\\
 & \leq2(k-1),
\end{aligned}
\label{eq:SM_bound_result_first}
\end{equation}
where in the first inequality, we used $P_{\bm{\alpha}}P_{\bm{\alpha'}}=P_{\bm{\alpha}}\delta_{\bm{\alpha,\alpha'}}$,
and in the second inequality, we used the fact that $\mathrm{Tr}_{B_{2}}(\ket{\tilde{\psi}_{\bm{\alpha}}}\bra{\tilde{\psi}_{\bm{\alpha}}})^{\otimes k}/(\braket{\tilde{\psi}_{\bm{\alpha}}|\tilde{\psi}_{\bm{\alpha}}})^{k}$
is a density matrix, such that its operator norm is less or equal
than $1$. Similarly, we can upper bound the second term in Eq. (\ref{eq:SM_explicitly_derivative_of_f})
as 
\begin{equation}
\begin{aligned} & \frac{d}{d\vec{v}}\bra{i_{l}}\mathrm{Tr}_{B_{2}}\left(\ket{\tilde{\psi}_{\bm{\alpha}}}\bra{\tilde{\psi}_{\bm{\alpha}}}\right)\ket{j_{l}}\\
 & =\sum_{\bm{\beta}}\frac{d}{d\vec{v}}\bra{i_{l}}\bra{\bm{\beta}}_{B_{2}}\left(\ket{\tilde{\psi}_{\bm{\alpha}}}\bra{\tilde{\psi}_{\bm{\alpha}}}\right)\ket{\bm{\beta}}_{B_{2}}\ket{j_{l}},\\
 & =\sum_{\bm{\beta}}\ket{\bm{\alpha}}_{B_{1},B_{3}}\bra{\bm{\alpha}}\otimes\ket{\bm{\beta}}_{B_{2}}\bra{\bm{\beta}}\otimes\begin{bmatrix}\ket{j_{l}}_{A}\bra{i_{l}}+\ket{i_{l}}_{A}\bra{j_{l}} & i\ket{j_{l}}_{A}\bra{i_{l}}-i\ket{i_{l}}_{A}\bra{j_{l}}\\
-i\ket{j_{l}}_{A}\bra{i_{l}}+i\ket{i_{l}}_{A}\bra{j_{l}} & \ket{j_{l}}_{A}\bra{i_{l}}+\ket{i_{l}}_{A}\bra{j_{l}}
\end{bmatrix}\cdot\vec{v},
\end{aligned}
\end{equation}
where $\{\ket{\bm{\beta}}_{B_{2}}\}$ is a set of orthogonal basis
on the region $B_{2}$. To simplify the notations, we introduce 
\begin{equation}
M_{l}(\bm{\alpha},\bm{\beta})=\ket{\bm{\alpha}}_{B_{1},B_{3}}\bra{\bm{\alpha}}\otimes\ket{\bm{\beta}}_{B_{2}}\bra{\bm{\beta}}\otimes\begin{bmatrix}\ket{j_{l}}_{A}\bra{i_{l}}+\ket{i_{l}}_{A}\bra{j_{l}} & i\ket{j_{l}}_{A}\bra{i_{l}}-i\ket{i_{l}}_{A}\bra{j_{l}}\\
-i\ket{j_{l}}_{A}\bra{i_{l}}+i\ket{i_{l}}_{A}\bra{j_{l}} & \ket{j_{l}}_{A}\bra{i_{l}}+\ket{i_{l}}_{A}\bra{j_{l}}
\end{bmatrix}\label{eq:SM_definition_of_MLalphabeta}
\end{equation}
and 
\begin{equation}
C_{l}(\bm{\alpha})=\prod_{l'\neq l}\frac{\bra{i}_{l'}\mathrm{Tr}_{B_{2}}\left(\ket{\tilde{\psi}_{\bm{\alpha}}}\bra{\tilde{\psi}_{\bm{\alpha}}}\right)\ket{j}_{l'}}{\braket{\tilde{\psi}_{\bm{\alpha}}|\tilde{\psi}_{\bm{\alpha}}}}.
\end{equation}
By orthogonality, we have $M_{l}(\bm{\alpha},\bm{\beta})M_{l'}(\bm{\alpha}',\bm{\beta}')=0$
unless $\bm{\alpha}=\bm{\alpha}'$, and $\bm{\beta}=\bm{\beta}'$.
Since $\mathrm{Tr}_{B_{2}}\ket{\tilde{\psi}_{\bm{\alpha}}}\bra{\tilde{\psi}_{\bm{\alpha}}}/\braket{\tilde{\psi}_{\bm{\alpha}}|\tilde{\psi}_{\bm{\alpha}}}$
is a valid state, we also have
\begin{equation}
|C_{l}(\bm{\alpha})|\leq1.
\end{equation}
Therefore, the second term in Eq. (\ref{eq:SM_explicitly_derivative_of_f})
can be bounded as
\begin{equation}
\begin{aligned} & \left\lVert \sum_{\bm{\alpha}}\left(\sum_{l=1}^{k}C_{l}(\bm{\alpha})\frac{d}{d\vec{v}}\bra{i}_{l}\mathrm{Tr}_{B_{2}}\left(\ket{\tilde{\psi}_{\bm{\alpha}}}\bra{\tilde{\psi}_{\bm{\alpha}}}\right)\ket{j}_{l}\right)\right\rVert \\
 & \leq\sqrt{\sum_{\bm{\alpha}}\sum_{l=1}^{k}\sum_{\bm{\alpha}'}\sum_{l'=1}^{k}\sum_{\bm{\beta}}\sum_{\bm{\beta}'}C_{l}(\bm{\alpha})C_{l'}^{*}(\bm{\alpha'})\vec{v}^{T}M_{l'}^{\dagger}(\bm{\alpha}',\bm{\beta}')M_{l}(\bm{\alpha},\bm{\beta})\vec{v}},\\
 & \leq\sqrt{\sum_{\bm{\alpha}}\sum_{\bm{\beta}}\sum_{l=1}^{k}\sum_{l'=1}^{k}\vec{v}^{T}M_{l'}^{\dagger}(\bm{\alpha},\bm{\beta})M_{l}(\bm{\alpha},\bm{\beta})\vec{v}}.
\end{aligned}
\label{eq:SM_intermediate_step_in_bounding_second_term}
\end{equation}
From the definition Eq. (\ref{eq:SM_definition_of_MLalphabeta}),
we can directly calculate that 
\begin{equation}
\begin{aligned}\sum_{\bm{\alpha}}\sum_{\bm{\beta}}M_{l'}^{\dagger}(\bm{\alpha},\bm{\beta})M_{l}(\bm{\alpha},\bm{\beta}) & =I_{B}\otimes\bigg(\begin{bmatrix}\ket{j_{l}}_{A}\bra{i_{l}}+\ket{i_{l}}_{A}\bra{j_{l}} & i\ket{j_{l}}_{A}\bra{i_{l}}-i\ket{i_{l}}_{A}\bra{j_{l}}\\
-i\ket{j_{l}}_{A}\bra{i_{l}}+i\ket{i_{l}}_{A}\bra{j_{l}} & \ket{j_{l}}_{A}\bra{i_{l}}+\ket{i_{l}}_{A}\bra{j_{l}}
\end{bmatrix}\cdot\\
 & \begin{bmatrix}\ket{j'_{l'}}_{A}\bra{i'_{l'}}+\ket{i'_{l'}}_{A}\bra{j'_{l'}} & i\ket{j'_{l'}}_{A}\bra{i'_{l'}}-i\ket{i'_{l'}}_{A}\bra{j'_{l'}}\\
-i\ket{j'_{l'}}_{A}\bra{i'_{l'}}+i\ket{i'_{l'}}_{A}\bra{j'_{l'}} & \ket{j'_{l'}}_{A}\bra{i'_{l'}}+\ket{i'_{l'}}_{A}\bra{j'_{l'}}
\end{bmatrix}\bigg),
\end{aligned}
\end{equation}
whose operator norm can be bounded by
\begin{equation}
\lVert\sum_{\bm{\alpha}}\sum_{\bm{\beta}}M_{l'}^{\dagger}(\bm{\alpha},\bm{\beta})M_{l}(\bm{\alpha},\bm{\beta})\rVert\leq4.
\end{equation}
Thus, Eq. (\ref{eq:SM_intermediate_step_in_bounding_second_term})
can be simplified to 
\begin{equation}
\left\lVert \sum_{\bm{\alpha}}\left(\sum_{l=1}^{k}C_{l}(\bm{\alpha})\frac{d}{d\vec{v}}\bra{i}_{l}\mathrm{Tr}_{B_{2}}\left(\ket{\tilde{\psi}_{\bm{\alpha}}}\bra{\tilde{\psi}_{\bm{\alpha}}}\right)\ket{j}_{l}\right)\right\rVert \leq\sqrt{4k^{2}}=2k.\label{eq:SM_bound_result_second}
\end{equation}
Combining Eqs. (\ref{eq:SM_bound_result_first}) and (\ref{eq:SM_bound_result_second}),
we obtain 
\begin{equation}
\left\lVert \frac{df(\vec{v})}{d\vec{v}}\right\rVert \leq2(2k-1).
\end{equation}

\end{proof}

The above Lemma, together with the Levy's Lemma, indicates that $f(\vec{v})$
defined in Eq. (\ref{eq:SM_definition_of_test_function_in_Levy})
will concentrate on its average value, which is given by the following
lemma:

\begin{lemma}[Avege value]The average value of $f(\vec{v})$ can
be calculated as

\begin{equation}
\mathbb{E}_{\ket{\psi}\sim\mathrm{Haar}(d^{N_{A}+N_{B}})}f(\ket{\psi})=\bra{i}\rho_{\mathrm{GHS}}^{(k)}\ket{j}.
\end{equation}

\end{lemma}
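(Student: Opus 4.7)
The plan is to identify $\mathbb{E}_\psi f(\ket{\psi})$ as the matrix element of the expected $k$-th moment of the MSPE generated by a globally Haar-random state, then evaluate this average by combining the replica trick with the Weingarten/Schur--Weyl formula, and finally recognize the result as $\bra{i}\rho^{(k)}_{\mathrm{GHS}}\ket{j}$ after tracing out $B_2$ on each replica.

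Setting $P_{\bm{\alpha}} = \braket{\tilde{\psi}_{\bm{\alpha}}|\tilde{\psi}_{\bm{\alpha}}}$, $\sigma_{A+B_2}(\bm{\alpha}) = \ket{\tilde{\psi}_{\bm{\alpha}}}\bra{\tilde{\psi}_{\bm{\alpha}}}/P_{\bm{\alpha}}$, and $\sigma_A(\bm{\alpha}) = \mathrm{Tr}_{B_2}\sigma_{A+B_2}(\bm{\alpha})$, the definition of $f$ becomes
\begin{equation*}
f(\ket{\psi}) = \bra{i}\sum_{\bm{\alpha}} P_{\bm{\alpha}}\,\sigma_A(\bm{\alpha})^{\otimes k}\ket{j},
\end{equation*}
so after exchanging expectation, trace, and sum, the problem reduces to evaluating $X := \mathbb{E}_\psi\sum_{\bm{\alpha}} P_{\bm{\alpha}}\,\sigma_{A+B_2}(\bm{\alpha})^{\otimes k}$, whose partial trace over $B_2$ on each of the $k$ replicas will then be matched with $\rho^{(k)}_{\mathrm{GHS}}$ as in Eq.~(\ref{eq:HS_ensemble}). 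The key intermediate claim is $X = \mathbb{E}_{\ket{\phi}\sim\mathrm{Haar}(d^{N_A+m})}\ket{\phi}\bra{\phi}^{\otimes k}$, from which the lemma follows by the very definition of the generalized Hilbert--Schmidt ensemble.

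To establish this intermediate claim I apply the replica trick to the denominator $P_{\bm{\alpha}}^{k-1}$ implicit in $X$. Introducing
\begin{equation*}
G_q := \mathbb{E}_\psi \sum_{\bm{\alpha}}\ket{\tilde{\psi}_{\bm{\alpha}}}\bra{\tilde{\psi}_{\bm{\alpha}}}^{\otimes k}\, P_{\bm{\alpha}}^q
\end{equation*}
for integer $q\geq 0$ (so that $q=1-k$ recovers $X$ upon analytic continuation), I rewrite $P_{\bm{\alpha}}^q$ as a trace over $q$ auxiliary replicas of $A+B_2$ and recognize $G_q$ as a partial contraction of $(\ket{\psi}\bra{\psi})^{\otimes(k+q)}$ with $(I_{A+B_2}\otimes\ket{\bm{\alpha}}\bra{\bm{\alpha}})^{\otimes(k+q)}$ summed over $\bm{\alpha}$. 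Substituting the Haar moment formula
\begin{equation*}
\mathbb{E}_\psi(\ket{\psi}\bra{\psi})^{\otimes(k+q)} = \frac{1}{\prod_{j=0}^{k+q-1}(d^{N_A+N_B}+j)}\sum_{\pi\in\mathbb{S}_{k+q}}\rho_{N_A+N_B,k+q}(\pi),
\end{equation*}
and noting that on the $B_1B_3$ factor $\rho_{B_1B_3,k+q}(\pi)$ merely permutes $k+q$ identical copies of $\ket{\bm{\alpha}}$, the $\bm{\alpha}$-sum produces a $\pi$-independent factor $d^{N_{B_1}+N_{B_3}}$. Tracing out the $q$ auxiliary replicas on $A+B_2$ and iterating Eqs.~(\ref{eq:SM_trace_replica_Eq1})--(\ref{eq:SM_trace_replica_Eq2}) contracts cycles of $\pi$ into a Pochhammer factor $\prod_{j=0}^{q-1}(k+j+d^{N_A+m})$ multiplying $\sum_{\pi'\in\mathbb{S}_k}\rho_{N_A+m,k}(\pi')$.

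Collecting all factors, the $q$-dependent ratio simplifies at the analytic-continuation point $q=1-k$ (using $\Gamma(D+1)/\Gamma(D)=D$ with $D=d^{N_A+N_B}=d^{N_{B_1}+N_{B_3}}\cdot d^{N_A+m}$) to exactly $1/\prod_{j=0}^{k-1}(d^{N_A+m}+j)$, the Haar normalization on $\mathbb{C}^{d^{N_A+m}}$. Hence $X = \rho^{(k)}_{\mathrm{Haar},A+B_2}$, and tracing out $B_2$ on each replica yields $\mathbb{E}_\psi f(\ket{\psi})=\bra{i}\rho^{(k)}_{\mathrm{GHS}}\ket{j}$, as claimed. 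The main obstacle is performing this Pochhammer bookkeeping cleanly and justifying the analytic continuation; both steps mirror the replica-trick manipulation already carried out around Eq.~(\ref{SM:expression_of_bar_rho_k_q}) and should go through by the same mechanism.
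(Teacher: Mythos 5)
Your proposal reaches the correct result, but by a genuinely different route from the paper. The paper's proof is a two-line exact argument: after pushing the Haar average inside the trace by linearity, it invokes the fact (proved in Cotler et al.) that for a globally Haar-random $\ket{\psi}$ the Born probability $P_{\bm{\alpha}}$ and the normalized conditional state $\ket{\psi_{\bm{\alpha}}}$ are statistically \emph{independent}, with $\ket{\psi_{\bm{\alpha}}}$ itself Haar-distributed on $A+B_2$ and $\mathbb{E}(P_{\bm{\alpha}})=d^{-(N_{B_1}+N_{B_3})}$; summing over the $d^{N_{B_1}+N_{B_3}}$ outcomes and tracing out $B_2$ then gives $\rho^{(k)}_{\mathrm{GHS}}$ with no approximation and no analytic continuation. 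You instead perform a direct $(k+q)$-replica Schur--Weyl computation, and your bookkeeping is correct: the contraction of $\rho_{B_1B_3,k+q}(\pi)$ with $\ket{\bm{\alpha}}^{\otimes(k+q)}$ is indeed $1$ for every $\pi$ so the $\bm{\alpha}$-sum contributes a $\pi$-independent $d^{N_{B_1}+N_{B_3}}$, the iterated partial trace produces $\prod_{j=0}^{q-1}(k+j+d^{N_A+m})$, and the resulting ratio of Pochhammer symbols evaluated at $q=1-k$ (using $d^{N_A+N_B}=d^{N_{B_1}+N_{B_3}}d^{N_A+m}$) collapses to the Haar normalization $1/\prod_{j=0}^{k-1}(d^{N_A+m}+j)$. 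The one caveat is that your $G_q$ is \emph{not} $q$-independent, whereas everywhere else the paper licenses the replica continuation precisely by exhibiting a $q$-independent answer; to put your continuation on the same footing you should note that the normalized object $G_q/\mathrm{Tr}\,G_q$ equals the Haar $k$-th moment on $A+B_2$ for every positive integer $q$, and that $\mathrm{Tr}\,G_{1-k}=\mathbb{E}\sum_{\bm{\alpha}}P_{\bm{\alpha}}=1$ exactly. What each approach buys: the paper's is shorter and fully rigorous (which matters, since this lemma feeds into a quantitative concentration bound via Levy's lemma), while yours is self-contained and does not require importing the independence lemma from the prior literature.
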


\begin{proof}

We can rewrite $f(\ket{\psi})$ as 
\begin{equation}
f(\ket{\psi})=\bra{i}\mathrm{Tr}_{B_{2}}\left(\sum_{\bm{\alpha}}P_{\bm{\alpha}}(\ket{\psi_{\bm{\alpha}}}\bra{\psi_{\bm{\alpha}}})^{\otimes k}\right)\ket{j},
\end{equation}
where $P_{\bm{\alpha}}=\braket{\psi|I_{A+B_{2}}\otimes\ket{\bm{\alpha}}\bra{\bm{\alpha}}|\psi}$
and $\ket{\psi_{\bm{\alpha}}}=\ket{\tilde{\psi}_{\bm{\alpha}}}/\sqrt{P_{\bm{\alpha}}}$.
From the linearity of the trace, we can write the average of $f$
as 
\begin{equation}
\begin{aligned}\mathbb{E}_{\ket{\psi}\sim\mathrm{Haar}(d^{N_{A}+N_{B}})}f(\ket{\psi}) & =\bra{i}\mathrm{Tr}_{B_{2}}\left(\sum_{\bm{\alpha}}\mathbb{E}_{\ket{\psi}\sim\mathrm{Haar}(d^{N_{A}+N_{B}})}\left(P_{\bm{\alpha}}(\ket{\psi_{\bm{\alpha}}}\bra{\psi_{\bm{\alpha}}})^{\otimes k}\right)\right)\ket{j}.\end{aligned}
\end{equation}
The Ref. \citep{cotler2023emergent} proved that If $\ket{\psi}$
is Haar-random distributed, $P_{\bm{\alpha}}$and $\ket{\psi_{\bm{\alpha}}}$would
be independent. Therefore, we can perform the average as 
\begin{equation}
\begin{aligned} & \mathbb{E}_{\ket{\psi}\sim\mathrm{Haar}(d^{N_{A}+N_{B}})}f(\ket{\psi})\\
 & =\sum_{\bm{\alpha}}\frac{1}{d^{N_{B_{1}}+N_{B_{3}}}}\bra{i}\mathrm{Tr}_{B_{2}}\left(\mathbb{E}_{\ket{\varphi}\sim\mathrm{Haar}(d^{N_{A}+N_{B_{2}}})}(\ket{\varphi}\bra{\varphi})^{\otimes k}\right)\ket{j},\\
 & =\bra{i}\rho_{\mathrm{GHS}}^{(k)}\ket{j},
\end{aligned}
\end{equation}
where in the first equality, we used that $\mathbb{E}_{\ket{\psi}\sim\mathrm{Haar}(d^{N_{A}+N_{B}})}(P_{\bm{\alpha}})=1/d^{N_{B_{1}}+N_{B_{3}}}$,
and in the second equality, we used the definition of the generalized
Hilbert-Schmidt ensemble. 

\end{proof}

Denoting the $k$-th moment of the MSPE corresponding to $\ket{\psi}$
as
\begin{equation}
\rho_{\ket{\psi}}^{(k)}=\sum_{\bm{\alpha}}P_{\bm{\alpha}}\mathrm{Tr}_{B_{2}}\left(\ket{\psi_{\bm{\alpha}}}\bra{\psi_{\bm{\alpha}}}\right)^{\otimes k},
\end{equation}
we can combine the above three lemmas to derive that
\begin{equation}
\mathrm{Prob}_{\ket{\psi}\sim\mathrm{Haar}(d^{N_{A}+N_{B}})}\left[\bra{i}\rho_{\ket{\psi}}^{(k)}\ket{j}-\bra{i}\rho_{\mathrm{GHS}}^{(k)}\ket{j}|\geq\frac{\epsilon}{d^{2N_{A}k}}\right]\leq2\exp\left(-\frac{d^{N_{A}+N_{B}}\epsilon^{2}}{18\pi^{3}(2k-1)^{2}d^{4N_{A}k}}\right),
\end{equation}
for $\forall\ket{i},\ket{j}$ the standard basis of $\mathcal{H}_{A}^{\otimes k}$.
Here, we have rescaled the error $\epsilon$ in Lemma \ref{Lemma:Levy}
to $\epsilon/d^{2N_{A}k}$ and substituted the dimension of total
Hilbert space as $D=d^{N_{A}+N_{B}}$. It is well known that 
\begin{equation}
\lVert O\rVert_{1}\leq\sum_{i,j}|O_{i,j}|
\end{equation}
for any matrix $O$. Therefore, the above inequality directly implies
that 
\begin{equation}
\mathrm{Prob}_{\ket{\psi}\sim\mathrm{Haar}(d^{N_{A}+N_{B}})}\left[\lVert\rho_{\ket{\psi}}^{(k)}-\rho_{\mathrm{GHS}}^{(k)}\rVert_{1}\geq\epsilon\right]\leq2d^{2N_{A}k}\exp\left(-\frac{d^{N_{A}+N_{B}}\epsilon^{2}}{18\pi^{3}(2k-1)^{2}d^{4N_{A}k}}\right),
\end{equation}
or equivalently
\begin{equation}
\mathrm{Prob}_{\ket{\psi}\sim\mathrm{Haar}(d^{N_{A}+N_{B}})}\left[\Delta_{1}^{(k)}\geq\epsilon\right]\leq2d^{2N_{A}k}\exp\left(-\frac{d^{N_{A}+N_{B}}\epsilon^{2}}{18\pi^{3}(2k-1)^{2}d^{4N_{A}k}}\right),\label{eq:SM_probability_equation_Delta_1}
\end{equation}
since $\lVert\rho_{\mathrm{GHS}}^{(k)}\rVert_{1}=1$. We can require
that
\begin{equation}
N_{B}\geq\frac{\log\left(\frac{18\pi^{3}(2k-1)^{2}d^{4N_{A}k}}{\epsilon^{2}}\log\frac{2d^{2N_{A}k}}{\delta}\right)}{\log d}=\Omega(N_{A}k+\log\frac{1}{\epsilon}+\log\log\frac{1}{\delta})
\end{equation}
such that the right hand side of Eq. (\ref{eq:SM_probability_equation_Delta_1})
is less than $\delta$, which proves the Theorem \ref{Theorem:global_haar}.

\subsection{Numerical results on typical models}
In this subsection, we report numerical calculations of different typical models at finite sizes, whose MSPE at late time all deeply thermalize to the generalized Hilbert-Schmidt ensemble, confirming the universality of this deep thermalization.

We consider four models: i) The local random Haar circuit; ii) The random dual-unitary circuit; iii) The kicked-Ising models away from dual-unitary points; iv) Mixed field Ising Hamiltonian dynamics. The model i)-iii) has exactly the same structure as the one shown in Fig.~\ref{fig1} in the main text, while the model iv) has a similar structure with the difference that the initial state is $\ket{0}^{\otimes N}$ and the measurement is performed in the computational basis. In all the models, the local Hilbert-space dimension is $d=2$, and we choose the leftmost two sites as the subsystem $A$ and the two sites in the middle as the subpart $B_2$.

Our numerical results are presented in Fig.~\ref{fig_SM}. As shown in the figure, all models we studied exhibit an exponential decay of the distance $\Delta_1^{(2)}$ from the generalized Hilbert–Schmidt ensemble as the system size $N$ increases. We further verified that this exponentially-decay behavior persists for $k=3$ and $k=4$, as well as for different measurement bases and initial states. These findings demonstrate the universality of the mixed-state deep thermalization proposed in this work.

Interestingly, the approach to saturation shows model-dependent features. For Hamiltonian dynamics, the distance $\Delta_1^{(2)}$ decreases polynomially with the evolution time $t$ prior to saturation, whereas for the other three circuit dynamics it decays exponentially in $t$. This distinction arises from whether or not energy is conserved in the dynamics. Moreover, random dual-unitary circuits display a two-stage deep thermalization process: for $t<N$, $\Delta_1^{(2)}$ converges more rapidly to a smaller value, while for $t>N$ it settles at a slightly higher saturation value. This behavior agrees with the $k=1$ case of conventional thermalization. At $k=1$ and for $t<N$, dual-unitary circuits thermalize exactly and instantaneously~\cite{bertini2019exact, piroli2020exact}, whereas for $t>N$ this exact thermalization mechanism breaks down due to the effect of the boundaries~\cite{suzuki2022computational, PhysRevLett.133.190401}. A detailed exploration of these early-time behaviors across different models is left for future work.

\begin{figure}
\includegraphics[width=0.7\columnwidth]{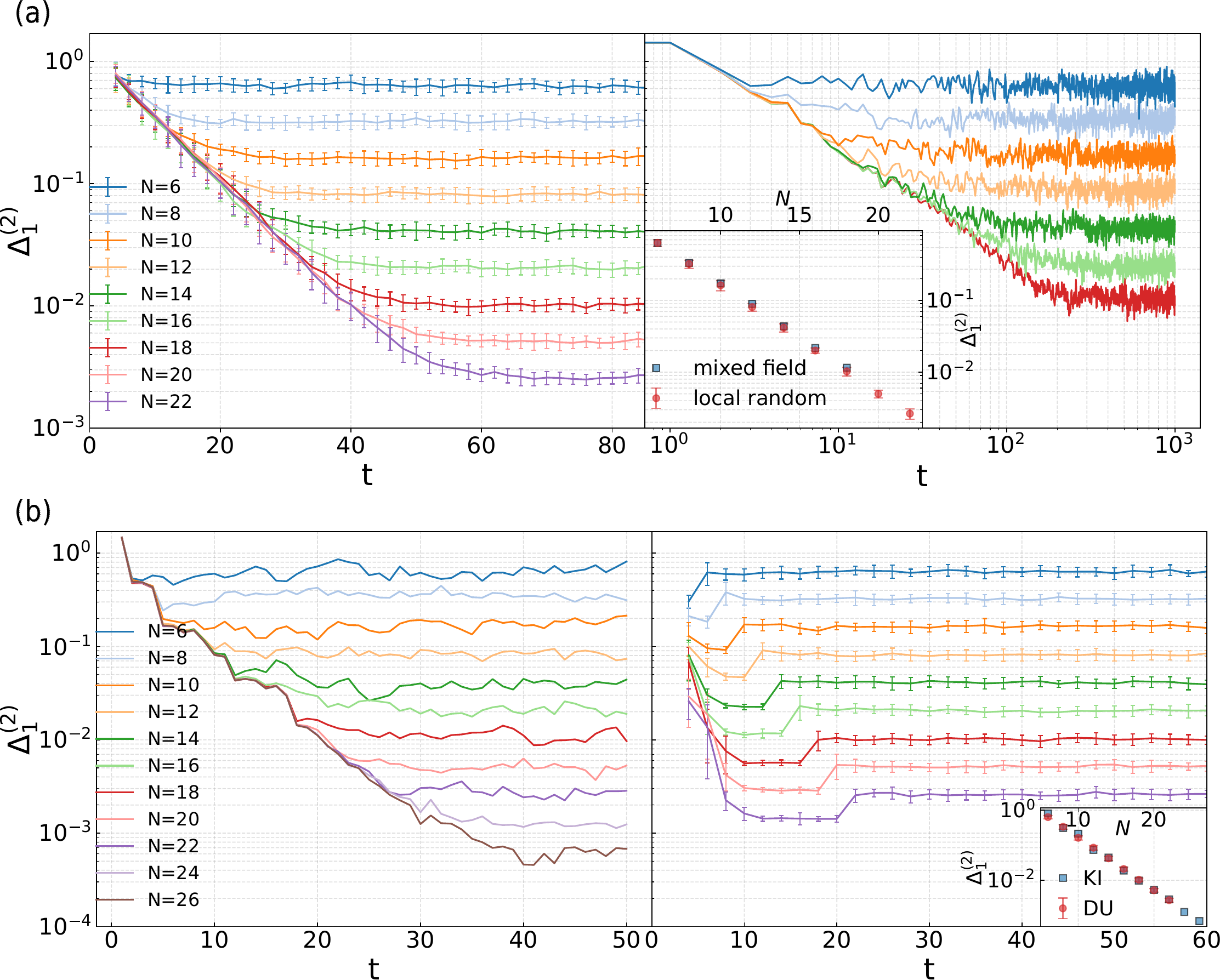}
\caption{
Distance of $\rho_{N_{A}}^{(k)}$ from the generalized Hilbert-Schmidt ensemble. (a) Repeated from Fig.~\ref{fig2} in the main text. Left: local Haar-random circuits for system size $N$ from $6$ to $22$, where error bars indicate statistical fluctuations over different circuit realizations. Right: mixed-field Ising dynamics for $N$ from $6$ to $18$, governed by the Hamiltonian $H=\sum_{j=1}^N(h_x\sigma_x^j+h_y\sigma^j_y)+\sum_{j=1}^{N-1}J\sigma_x^j\sigma_x^{j+1}$, with $\sigma_\alpha^j$ the Pauli matrix on $j$-th qubit and $(h_x,h_y,J)=(0.8090,0.9045,1.0)$.
Inset: saturated deviation $\Delta_1^{(2)}$ at late times versus the system size; for the mixed-field Ising case, $\Delta_1^{(2)}$ is averaged over time after saturation.
(b) Left: The kicked-Ising model where each circuit layer corresponds to the unitary $U=\exp(-ih\sum_{j=1}^N\sigma_y^j)\exp(-iJ\sum_{j=1}^{N-1}\sigma_z^j\sigma_z^{j+1})\exp(-ig\sum_{j=1}^N\sigma^j_z)$ with $(h, J, g)=(0.9, 0.7, 0.6)$ away from the dual-unitary points. 
(b) Right: The random dual-unitary circuits where each gate in the brick-wall structure is a randomly chosen dual-unitary gate. 
Error bars indicate statistical fluctuations from $25$ different  circuit realizations. Inset: saturated deviation $\Delta_1^{(2)}$ at late times versus the system size for the kicked-Ising model (KI) and the random dual-unitary circuits (DU).
}
\label{fig_SM}
\end{figure}

\section{conditional entropy}
\label{Sec:ConditionalInformation}
In this section, we compute the annealed average Renyi-$k$ conditional entropy introduced in Eq. (\ref{eq:conditional_entropy_expression}), which requires multiple copies of the system. In the following, we consider the asymptotic expression for an arbitrary integer $k$ under the limit $t\to\infty$ and $N_A, m\to\infty$ while their ratio $\gamma=m/N_A$ is fixed.
We use the symbol $e$ to denote the identity permutation and $\sigma'_k$ to denote the mapping from $\{1,2,3,\cdots,k\}$ to $\{k,1,2,\cdots,k-1\}$.
In addition,
the density matrix with an argument, like $\rho_{S,k}(g)$, always denotes the representation of a permutation element $g$ on $k$-replica with support on sites in region $S$ ($S=A$ or $S=R$) in each replica. The definition is almost the same as the one in Eq.~(\ref{eq:SM_specific_representation}), except that here we denote the region instead of the sizes of the region in the subscript.
It should not be confused with the reduced density matrix on a single replica, like $\rho_A$ and $\rho_{AR}$.

Computation of $S_{AR}^{(k)}$ requires us to calculate $\mathrm{Tr}\overline{\rho_{AR}^{k}}=\sum_{\bm{\alpha}}P_{\bm{\alpha}}\mathrm{Tr}\rho_{AR}^k(\bm{\alpha})$. Similarly as Eq.~(\ref{eq:SM_expression_of_distribution_EV}), we have
\begin{equation}
\mathrm{Tr}\overline{\rho_{AR}^k}=\mathrm{Tr}\left((\rho_{A,k}^\dagger(\sigma_k')\otimes\rho_{R,k}^\dagger(\sigma_k'))\rho_{AR}^{(k)}\right),
\end{equation}
with $\rho_{AR}^{(k)}=\overline{\rho_{AR}^{\otimes k}}$ being the $k$-th moment of MSPE \footnote{Here the reference part $R$ is also an output}.

To proceed, we also use the replica trick to compute $\mathrm{Tr}\overline{\rho_{AR}^k}$. We define
\begin{equation}
\mathrm{Tr}_q\overline{\rho_{AR}^{k}}=\mathrm{Tr}\left((\rho_{A,k}^{\dagger}(\sigma_{k}')\otimes\rho_{R,k}^{\dagger}(\sigma_{k}'))\overline{\rho}_{AR}^{(k,q)}\right)
\end{equation}
with $\overline{\rho}_{AR}^{(k,q)}$ defined in Eq. (\ref{SM:expression_of_bar_rho_k_q}). In the limit $q=1-k$, $\mathrm{Tr}_q\overline{\rho_{AR}^{k}}$ gives rise to $\mathrm{Tr}\overline{\rho_{AR}^k}$. 
Similarly, we define 
\begin{equation}
\mathrm{Tr}_q\overline{\rho_{A}^k}=\mathrm{Tr}\left((\rho_{A,k}^\dagger(\sigma_k')\otimes\rho_{R,k}^\dagger(e))\overline\rho_{AR}^{(k,q)}\right).
\end{equation}
We will compute the quantity
\begin{equation}
I_{R:A}^{(k,q)}=-\frac{1}{k-1}\log\frac{\mathrm{Tr}_q\overline{\rho_{AR}^{k}}}{\mathrm{Tr}_q\overline{\rho_{A}^{k}}}
\end{equation}
for any integer $q>0$, and then take the limit $q=1-k$ which gives us the conditional entropy $I_{R:A}^{(k)}$ defined in Eq. (\ref{eq:conditional_entropy_expression}).

$\mathrm{Tr}_q\overline{\rho_{AR}^{k}}$ can be re-expressed as 
\begin{equation}
\mathrm{Tr}_{q}\overline{\rho_{AR}^{k}}=\mathrm{Tr}\left((\rho_{A,k+q}^{\dagger}(\sigma)\otimes\rho_{R,k+q}^{\dagger}(\sigma))\tilde{\rho}_{AR}^{(k+q)}\right)
\end{equation}
with $\sigma=(\sigma'_k, e)$. Notice that $\sigma'_k$ is a permutation on $k$ elements and $\sigma$ is a permutation on $k+q$ elements.
It is diagrammatically represented as
\begin{equation*}
\mathrm{Tr}_q\overline{\rho_{AR}^k}=
\begin{tikzpicture}[baseline=(current  bounding  box.center), scale=0.5]
\foreach \i in {0,2,5,7,9,12}
{
\foreach \j in {0,2}
{
\Wgategreenprime{\i}{\j}
}
}
\foreach \i in {-1,1,4,6,8,10,13}
{
\foreach \j in{1,3}
{
\Wgategreenprime{\i}{\j}
}
}
\Text[x=3.1,y=2]{$\bm{\cdots}$}

\Text[x=10.9,y=2]{$\bm{\cdots}$}

\foreach \i in {0,2,9,12}
{
\Measurement{\i}{4}
}
\foreach \i in{5,7}
{
\MYcircle{\i-0.5+0.1}{3.5+0.1}
}
\foreach \i in {6,8}
{
\MYcircle{\i-0.5-0.1}{3.5+0.1}
}
\foreach \i in {-2,0,3,5,7,9,12}
{
\draw[very thick]
  (\i+0.5,-0.5) to[out=-45, in=225] (\i+1.5,-0.5);
}
\foreach \i in {-1,1}
{
\draw[very thick]
  (13.5,\i+0.5) to[out=45, in=315] (13.5,\i+1.5);
}
\draw[very thick] (13.5,3.5)--(13.5,4);
\foreach \i in {-1,1}
{
\draw[very thick]
  (-1.5,\i+0.5) to[out=135, in=225] (-1.5,\i+1.5);
}
\draw[very thick] (-1.5,3.5) -- (-1.5,4);
\draw [decorate, decoration={brace, amplitude=10pt}, very thick] (0,4.5) -- (3.5,4.5);
\draw [decorate, decoration={brace, amplitude=10pt}, very thick] (4.5,4.5) -- (7.5,4.5);
\draw [decorate, decoration={brace, amplitude=10pt}, very thick] (8.5,4.5) -- (12.5,4.5);
\draw [decorate, decoration={brace, amplitude=10pt}, very thick] (-1.5,4.5) -- (-0.5,4.5);
\Text[x=-1,y=5.75]{\large{$\bm{A}$}}
\Text[x=1.75, y=5.75]{\large{$\bm{B_3}$}}
\Text[x=6, y=5.75]{\large{$\bm{B_2}$}}
\Text[x=10.5, y=5.75]{\large{$\bm{B_1}$}}
\Text[x=13.5,y=5.75]{\large{$\bm{R}$}}
\draw[->,very thick](14.5,-1.5)--(14.5,6);
\Text[x=15,y=2.5] {\pmb{t}}
\MYsquare{-1.5}{4.1}
\MYsquare{13.5}{4.1}
\draw[very thick, dashed,color=blue](-0.45,4.0)--(-0.45,-1);
\Text[x=-0.45,y=-2]{\large$\bm{\rho_\mathrm{f}}$}
\draw[very thick,dashed,color=blue] (4.35,3.65)--(4.35,-1);
\Text[x=4.35,y=-1.5]{\large$\bm{\rho_\mathrm{a}}$}
\draw[very thick,dashed,color=blue] (7.65,3.65)--(7.65,-1);
\Text[x=7.65,y=-1.5]{\large$\bm{\rho_\mathrm{b}}$}
\draw[very thick, dashed,color=blue](13.45,4.0)--(13.45,-1);
\Text[x=13.45,y=-1.5]{\large$\bm{\rho_\mathrm{i}}$}
\draw[->,very thick](13.4,-2.5)--(4,-2.5);
\Text[x=8.7,y=-3]{Horizontal Evolution}
\end{tikzpicture}.
\end{equation*}
Here, $k'=k+q$, and we use the white square to represent the vectorized state $\rho(\sigma)$.

We follow a similar calculation in the horizontal spatial direction as in the Sec. \ref{Sec:k_th_moment_calculation}.
The rightmost boundary state $\rho_i$ can be represented  as
\begin{equation}
\rho_{i}=\bigg(\ket{\varphi_{0}}^{\otimes\frac{t}{2}}\bra{\varphi_{0}}^{\otimes\frac{t}{2}}\bigg)^{\otimes (k+q)}\otimes\rho_{R,k+q}(\sigma),
\end{equation}
with $\ket{\varphi_0}=\sum_{i=1}^d\ket{ii}$.
Suppose the steady state before the region $B_2$, $\rho_\mathrm{b}$, can be expressed
as 
\begin{equation}
\rho_\mathrm{b}=\sum_{g\in\mathbb{S}_{k+q}}\alpha(g)\rho_{t+1,k+q}(g).\label{eq:SM_sec6_rhof_AR_krenyi}
\end{equation}
The coefficients in front of different representations, $\alpha(g)$, are determined from the invariance of the overlap of $\rho_i$ and $\rho_b$ with $\rho_{t+1,k+q}(g)$. 
Since $\mathrm{Tr}\rho_{t,k+q}^\dagger(g)\bigg(\ket{\varphi_{0}}^{\otimes\frac{t}{2}}\bra{\varphi_{0}}^{\otimes\frac{t}{2}}\bigg)^{\otimes (k+q)}=d^{\frac{t(k+q)}{2}}$, 
we have
\begin{equation}
\sum_{g'\in\mathbb{S}_{k+q}}\alpha(g')d^{(t+1)l(g^{-1}g')}=d^{l(g^{-1}\sigma)}d^{\frac{t(k+q)}{2}}\ \mathrm{for}\ \forall g\in\mathbb{S}_{k+q}
.\label{eq:SM_sec6_rhof_AR_result_krenyi}
\end{equation}
In the limit $t\to\infty$ the leading term is for $g'=g$, so we can solve the equations, obtaining 
\begin{equation}
\alpha(g)=d^{l(g^{-1}\sigma)}d^{-\frac{t(k+q)}{2}-(k+q)}\label{eq:SM_sec6_krenyi_result_alpha}.
\end{equation}

The subsystems $B_2$ and $B_3$ mix $\rho_{t+1}(g)$ with different $g$, leading to different coefficients in the state $\rho_\mathrm{f}$. Suppose 
\begin{equation}
\rho_\mathrm{f}=\sum_{g\in\mathbb{S}_{k+q}} \beta(g)\rho_{t+1,k+q}(g).
\end{equation}
They can be related to $\alpha(g)$ similarly as in Sec.~\ref{Sec:k_th_moment_calculation} by computing the overlap of $\rho_{t+1,k+q}(g)$ with $\rho_\mathrm{f}$ and $\rho_\mathrm{a}$ respectively, resulting in
\begin{equation}
\sum_{g'\in\mathbb{S}_{k+q}}\beta(g')d^{(t+1)l(g^{-1}g')}=\sum_{g'\in\mathbb{S}_{k+q}}\alpha(g')\frac{d^{(t+1-m/2)l(g^{-1}g')+m/2[l(g)+l(g')]}}{d^{m(k+q)/2}}.\label{eq:SM_CE_relation_between_betaalpha}
\end{equation}
In the limit $t\to\infty$, these equations  simplify to
\begin{equation}
\beta(g)=\alpha(g)d^{m[l(g)-(k+q)]}.\label{eq:new_transformation_rule_krenyi}
\end{equation}

Combining Eqs. (\ref{eq:SM_sec6_krenyi_result_alpha}) and (\ref{eq:new_transformation_rule_krenyi}), we  express $\rho_\mathrm{f}$ as
\begin{equation}
\begin{aligned}
\rho_\mathrm{f} =\sum_{g\in\mathbb{S}_{k+q}}d^{-\frac{t(k+q)}{2}-(k+q)}d^{l(g^{-1}\sigma)}d^{m[l(g)-(k+q)]}\rho_{t+1,k+q}(g).
\end{aligned}
\end{equation}
The gates in the subsystem $A$ form an isometry, which does not change the coefficients $\beta(g)$ as shown in the Sec.~\ref{Sec:k_th_moment_calculation}. Therefore, we can compute 
$-1/(k-1)\log\mathrm{Tr}_q\overline{\rho_{AR}^k}$ as
\begin{equation}
-\frac{\log\mathrm{Tr}_q\overline{\rho_{AR}^k}}{k-1}=-\frac{1}{k-1}\log\sum_{g\in\mathbb{S}_{k+q}}\left( d^{-\frac{t(k+q)}{2}-(k+q)}d^{l(g^{-1}\sigma)}d^{m[l(g)-(k+q)]}d^{N_{A}l(\sigma^{-1}g)}\right).
\end{equation}

Next, we will compute $\mathrm{Tr}_q\overline{\rho_{A}^k}$. Since
\begin{equation}
\mathrm{Tr}_q\overline{\rho_{A}^k}=\mathrm{Tr}\left((\rho_{A,k+q}^\dagger(\sigma)\otimes\rho_{R,k+q}^\dagger(e))\tilde\rho_{AR}^{(k+q)}\right)
\end{equation}
with $\rho_{R,k+q}(e)$ being the Identity operator on the region $R$ across $k+q$ replicas,
it can be diagrammatically represented as 
\begin{equation*}
\mathrm{Tr}_q\overline{\rho_{A}^k}=
\begin{tikzpicture}[baseline=(current  bounding  box.center), scale=0.5]
\foreach \i in {0,2,5,7,9,12}
{
\foreach \j in {0,2}
{
\Wgategreenprime{\i}{\j}
}
}
\foreach \i in {-1,1,4,6,8,10,13}
{
\foreach \j in{1,3}
{
\Wgategreenprime{\i}{\j}
}
}
\Text[x=3.1,y=2]{$\bm{\cdots}$}

\Text[x=10.9,y=2]{$\bm{\cdots}$}

\foreach \i in {0,2,9,12}
{
\Measurement{\i}{4}
}
\foreach \i in{5,7}
{
\MYcircle{\i-0.5+0.1}{3.5+0.1}
}
\foreach \i in {6,8}
{
\MYcircle{\i-0.5-0.1}{3.5+0.1}
}
\foreach \i in {-2,0,3,5,7,9,12}
{
\draw[very thick]
  (\i+0.5,-0.5) to[out=-45, in=225] (\i+1.5,-0.5);
}
\foreach \i in {-1,1}
{
\draw[very thick]
  (13.5,\i+0.5) to[out=45, in=315] (13.5,\i+1.5);
}
\draw[very thick] (13.5,3.5)--(13.5,4);
\foreach \i in {-1,1}
{
\draw[very thick]
  (-1.5,\i+0.5) to[out=135, in=225] (-1.5,\i+1.5);
}
\draw[very thick] (-1.5,3.5) -- (-1.5,4);
\draw [decorate, decoration={brace, amplitude=10pt}, very thick] (0,4.5) -- (3.5,4.5);
\draw [decorate, decoration={brace, amplitude=10pt}, very thick] (4.5,4.5) -- (7.5,4.5);
\draw [decorate, decoration={brace, amplitude=10pt}, very thick] (8.5,4.5) -- (12.5,4.5);
\draw [decorate, decoration={brace, amplitude=10pt}, very thick] (-1.5,4.5) -- (-0.5,4.5);
\Text[x=-1,y=5.75]{\large{$\bm{A}$}}
\Text[x=1.75, y=5.75]{\large{$\bm{B_3}$}}
\Text[x=6, y=5.75]{\large{$\bm{B_2}$}}
\Text[x=10.5, y=5.75]{\large{$\bm{B_1}$}}
\Text[x=13.5,y=5.75]{\large{$\bm{R}$}}
\draw[->,very thick](14.5,-1.5)--(14.5,6);
\Text[x=15,y=2.5] {\pmb{t}}
\MYsquare{-1.5}{4.1}
\MYcircle{13.5}{4.1}
\draw[very thick, dashed,color=blue](-0.45,4.0)--(-0.45,-1);
\Text[x=-0.45,y=-2]{\large$\bm{\rho_\mathrm{f}}$}
\draw[very thick,dashed,color=blue] (4.35,3.65)--(4.35,-1);
\Text[x=4.35,y=-1.5]{\large$\bm{\rho_\mathrm{a}}$}
\draw[very thick,dashed,color=blue] (7.65,3.65)--(7.65,-1);
\Text[x=7.65,y=-1.5]{\large$\bm{\rho_\mathrm{b}}$}
\draw[very thick, dashed,color=blue](13.45,4.0)--(13.45,-1);
\Text[x=13.45,y=-1.5]{\large$\bm{\rho_\mathrm{i}}$}
\draw[->,very thick](13.4,-2)--(4,-2);
\Text[x=8.7,y=-2.5]{Horizontal Evolution}
\end{tikzpicture}.
\end{equation*}
Here, $k'=k+q$, and we use the empty bullet to represent the vectorized Identity operator $\rho(e)$.

Now, the rightmost boundary state can be expressed as 
$\rho_i=\bigg(\ket{\varphi_{0}}^{\otimes\frac{t}{2}}\bra{\varphi_{0}}^{\otimes\frac{t}{2}}\bigg)^{\otimes (k+q)}\otimes\rho_{R,k+q}(e)$.
Suppose the $\rho_b$ is 
\begin{equation}
\rho_{\mathrm{b}}=\sum_{g\in\mathbb{S}_{k+q}} \alpha(g) \rho_{t+1,k+q}(g),\label{eq:SM_Sec6_rhof_A}
\end{equation}
which satisfies the equation
\begin{equation}
\sum_{g'\in\mathbb{S}_{k+q}}\alpha(g')d^{(t+1)l(g^{-1}g')}=d^{l(g^{-1})}d^{\frac{t(k+q)}{2}}\ \mathrm{for}\ \forall g\in\mathbb{S}_{k+q}
\label{eq:SM_sec6_rhof_A_result_krenyi}
\end{equation}
by equating the overlap of $\rho_b$ and $\rho_i$ with $\rho_{t+1,k+q}(g)$.
In the limit $t\to\infty$, we have
\begin{equation}
\alpha(g)=d^{l(g^{-1})}d^{-\frac{t(k+q)}{2}-(k+q)}.
\end{equation}

The effect of the subsystem $B_2$, $B_3$ can be expressed by the linear transformation rule Eq. (\ref{eq:new_transformation_rule_krenyi}). Consequently,
the state $\rho_\mathrm{f}$ is written as
\begin{equation}
\begin{aligned}
\rho_\mathrm{f} = \sum_{g\in\mathbb{S}_{k+q}}d^{-\frac{t(k+q)}{2}-(k+q)}d^{l(g^{-1})}d^{m[l(g)-(k+q)]}\rho_{t+1,k+q}(g).
\end{aligned}
\end{equation}
$-1/(k-1)\log\mathrm{Tr}_q\overline{\rho_{A}^k}$ is then obtained as
\begin{equation}
-\frac{\log\mathrm{Tr}_q\overline{\rho_{A}^k}}{k-1}=-\frac{1}{k-1}\log\sum_{g\in\mathbb{S}_{k+q}}\left(d^{-\frac{t(k+q)}{2}-(k+q)}d^{l(g)}d^{m[l(g)-(k+q)]}d^{N_{A}l(\sigma^{-1}g)}\right).
\end{equation}

Therefore, we have the result of $I_{R:A}^{(k,q)}$ in the presence of $m$ sites with forgotten measurements
\begin{equation}
\begin{aligned}
I_{R:A}^{(k,q)} =-\frac{1}{k-1}\log\frac{\sum_{g\in\mathbb{S}_{k+q}}d^{l(g^{-1}\sigma)}d^{ml(g)}d^{N_{A}l(\sigma^{-1}g)}}{\sum_{g\in\mathbb{S}_{k+q}}d^{l(g)}d^{ml(g)}d^{N_{A}l(\sigma^{-1}g)}}.\label{eq:SM_conditional_withnoise_krenyi}
\end{aligned}
\end{equation}
Before proceeding, we first introduce a mathematical inequality 
\begin{equation}
l(g)+l(\sigma^{-1}g)\leq k+1+2q.\label{eq:SM_loopinequality}
\end{equation}
This inequality can be derived from the Caylay distance~\cite{delima2012cayley}, which is defined as $\mathrm{Dis}(g_1,g_2)=k+q-l(g_1^{-1}g_2)$. By the triangle inequality, we have
\begin{equation}
k+q-l(\sigma)=\mathrm{Dis}(g^{-1},g^{-1}\sigma)\leq\mathrm{Dis}(g^{-1},e)+\mathrm{Dis}(e,g^{-1}\sigma)=k+q-l(g)+k+q-l(g^{-1}\sigma).
\end{equation}
The desired inequality is obtained after reorganizing and using $l(\sigma)=q+1$ and $l(g^{-1}\sigma)=l(\sigma^{-1}g)$. Notice that the equality can be achieved, for example, when $g=e$ and $g=\sigma$.

Now, we can further simplify Eq. (\ref{eq:SM_conditional_withnoise_krenyi}) under the assumption that $N_A,m\to\infty$ with their ratio $\gamma=m/N_A$ fixed or alternatively $d\to\infty$. More concretely, 
we consider the exponential appearing both in the numerator and denomenator, $d^{ml(g)}d^{N_Al(\sigma^{-1}g)}$. Its exponent can be written as $N_A\left(\gamma l(g)+l(\sigma^{-1}g\right)$. If $\gamma<1$, sing Eq.~(\ref{eq:SM_loopinequality}), we have
\begin{equation}
\gamma l(g)+l(\sigma^{-1}g)\leq\gamma\left(k+2q+1-l(\sigma^{-1}g)\right)+l(\sigma^{-1}g)\leq k+q+\gamma(q+1).
\end{equation}
The maximum is achieved only if $g=\sigma$.
Similarly, if $\gamma >1$, we have
\begin{equation}
\gamma l(g)+l(\sigma^{-1}g)\leq\gamma l(g)+k+2q+1-l(g)\leq\gamma(k+q)+q+1.
\end{equation}
The maximum is achieve only if $g=e$. In summary, we have the following conclusions:

\begin{enumerate}
\item If $N_{A}>m$, the main contribution in Eq.~(\ref{eq:SM_conditional_withnoise_krenyi}) is from $g=\sigma$, which gives us $I_{R:A}^{(k,q)}=-\log(d)$;
\item $N_{A}<m$, the main contribution in Eq.~(\ref{eq:SM_conditional_withnoise_krenyi}) is from $g=e$, which gives us $I_{R:A}^{(k,q)}=\log(d)$;
\item If $N_{A}=m$, $I_{R:A}^{(k,q)}=0$ no matter how large $d$ or $N_A$ is. This can be seen from Eq.~(\ref{eq:SM_conditional_withnoise_krenyi}) by changing $g\to g^{-1}\sigma$ in the denomenator.
\end{enumerate}
Since the above result is independent of $q$, we can take the analytic continuation to  $q=1-k$ and expect that the same result holds for $I_{R:A}^{(k)}$. This concludes the sharp phase transition of $I_{R:A}^{(k)}$ at $N_A=m$ for every $k$.

The above conclusion can also be reduced to $m=0$, i.e., without any forgottened measurements. In this case, as long as $N_A\to\infty$ or equivalently $d\to\infty, N_A>1$, we always have $I_{R:A}^{(k)}=-\log(d)$.

\section{Deep thermalization with $m/2$ sparse corrupted  measurement outcomes}
\label{Sec:Modelwithsparseerausureerrors}
\begin{figure}[t]
\centering
\begin{tikzpicture}[baseline=(current  bounding  box.center), scale=0.5]
\foreach \i in {0,2,5,7,11,13,16}
{
\foreach \j in {0,2,4}
{
\Wgategreen{\i}{\j}
}
}
\foreach \i in {-1,1,4,6,8,10,12,14,17}
{
\foreach \j in{1,3}
{
\Wgategreen{\i}{\j}
}
}
\Text[x=3.1,y=0]{$\bm{\cdots}$}
\Text[x=3.1,y=4]{$\bm{\cdots}$}
\Text[x=9.1,y=0]{$\bm{\cdots}$}
\Text[x=9.1,y=4]{$\bm{\cdots}$}
\Text[x=15.1,y=0]{$\bm{\cdots}$}
\Text[x=15.1,y=4]{$\bm{\cdots}$}
\foreach \i in {1,4,8,10,14,17}
{
\Measurement{\i}{5}
}
\foreach \i in{5,11}
{
\MYcircle{\i+0.5+0.1}{4.5+0.1}
}
\foreach \i in {7,13}
{
\MYcircle{\i-0.5-0.1}{4.5+0.1}
}
\foreach \i in {-2,0,3,5,7,9,11,13,16}
{
\draw[very thick]
  (\i+0.5,-0.5) to[out=-45, in=225] (\i+1.5,-0.5);
}
\foreach \i in {-1,1,3}
{
\draw[very thick]
  (17.5,\i+0.5) to[out=45, in=315] (17.5,\i+1.5);
}
\foreach \i in {-1,1,3}
{
\draw[very thick]
  (-1.5,\i+0.5) to[out=135, in=225] (-1.5,\i+1.5);
}
\draw[very thick,dashed](0.4,-1.5)--(0.4,6);
\Text[x=-1,y=5.5]{\large{$\bm{\tilde{\rho}_{N_A}^{(k)}}$}}
\draw[->,very thick](18,-1.5)--(18,6);
\Text[x=18.5,y=2.5] {\pmb{t}}
\Text[x=0.4,y=-2]{\large$\bm{\rho_n}$}
\draw[very thick, dashed,color=blue](17.45,5.0)--(17.45,-1);
\Text[x=17.45,y=-1.5]{\large$\bm{{\rho'}_0}$}
\draw[very thick, dashed, color=blue](11.4,5)--(11.4,-1);
\Text[x=11.4,y=-1.5]{\large$\bm{{\rho'}_1}$}
\draw[very thick, dashed, color=blue](12.6,5)--(12.6,-1);
\Text[x=12.6,y=-1.5]{\large$\bm{\rho_0}$}
\draw[very thick, dashed, color=blue](6.6,5)--(6.6,-1);
\draw[very thick, dashed, color=blue](5.4,5)--(5.4,-1);
\Text[x=5.4,y=-1.5]{\large$\bm{{\rho'}_{i+1}}$}
\Text[x=6.6,y=-1.5]{\large$\bm{\rho_i}$}
\draw[->,very thick](17.4,-2)--(4,-2);
\Text[x=8.7,y=-2.5]{Horizontal Evolution}
\draw [decorate, decoration={brace, mirror, amplitude=10pt}, very thick] (17,5.5) -- (13,5.5);
\Text[x=15,y=6.75]{\large{$\bm{B_1}$}}
\draw [decorate, decoration={brace, mirror, amplitude=10pt}, very thick] (11.25,5.5) -- (9.25,5.5);
\Text[x=10.25,y=6.75]{\large{$\bm{B_2}$}}
\draw [decorate, decoration={brace, mirror, amplitude=10pt}, very thick] (8.75,5.5) -- (6.75,5.5);
\Text[x=7.75,y=6.75]{\large{$\bm{B_{i+1}}$}}
\draw [decorate, decoration={brace, mirror, amplitude=10pt}, very thick] (5.25,5.5) -- (3.25,5.5);
\Text[x=4.25,y=6.75]{\large{$\bm{B_{i+2}}$}}
\draw [decorate, decoration={brace, mirror, amplitude=10pt}, very thick] (2.75,5.5) -- (0.75,5.5);
\Text[x=1.75,y=6.75]{\large{$\bm{B_{n+1}}$}}
\end{tikzpicture}
\caption{Quench dynamics leading to MSPE with  $m/2$ sparse corrupted  measurement outcomes. }
\label{Fig:SM_sparseerror}
\end{figure}

In this section, we focus on $m/2$ sparse corrupted  measurement outcomes on $n$ sites ($m=2n$) and show that the system deep thermalizes to the generalized Hilbert-Schmidt ensemble also in this case. We consider the spatial evolution as illustrated in Fig.~\ref{Fig:SM_sparseerror}, where ${\rho'}_0, \rho_n$ correspond to the initial and final state, i.e. $\rho_\mathrm{i}$ and $\rho_\mathrm{f}$, in Fig.~\ref{fig:kmomentSM}, respectively.

We assume that the circuit between each two pairs of erasure errors, i.e., the circuit $B_i$, is long enough such that the time-like state $\rho_i$ has been projected onto the steady space of the spatial transfer matrix. Consequently, we can express $\rho_i$ as 
\begin{equation}
\rho_i=\sum_g \alpha_i(g) \rho_{t+1,k}(g).
\end{equation}
After the pair of erasure sites, the time-like state ${\rho'}_{i+1}$ becomes 
\begin{equation}
{\rho'}_{i+1}=\sum_{g}\frac{d^{l(g)}\alpha_{i}(g)\rho_{t,k}(g)\otimes\rho_{1,k}(e)}{d^{k}},
\end{equation}
which is further projected to the steady state 
$\rho_{i+1}=\sum_g \alpha_{i+1}(g) \rho_{t+1,k}(g)$.
Since the overlap of $\rho_{t+1,k}(g)$ with ${\rho'}_{i+1}$ is the same as that 
with $\rho_{i+1}$, we obtain
\begin{equation}
\sum_{g'}\alpha_{i+1}(g')\frac{d^{(t+1)l(g^{-1}g')}}{d^{(t+1)k}}=\sum_{g'}\alpha_{i}(g')\frac{d^{l(g')+l(g)+tl(g^{-1}g')}}{d^{k+(t+1)k}}\ \mathrm{for}\ \forall g. \label{Eq:SM_equation_determing_recursive}
\end{equation}
This recursive relation determines the state $\rho_n$ once we know the initial time-like state ${\rho'}_0$. At the limit $t\to\infty$, the above recursive relation can be simplified to $\alpha_{i+1}(g)=\alpha_i(g)d^{-2l(g)}$ and consequently
\begin{equation}
\alpha_{n}(g)=\alpha_0(g)d^{-2nl(g)}=\alpha_0(g)d^{-ml(g)}.\label{Eq:SM_recursive_relation_sparse_error}
\end{equation}
Since the initial state ${\rho'}_0$ is an equal superposition of all $\rho_{t+1}(g)$, we have $\alpha_0(g)=1$ up to a constant. Therefore,
\begin{equation}
\alpha_{n}(g)\propto d^{-ml(g)},
\end{equation}
which agrees with the results in Sec. \ref{Sec:large_t}. By performing the same analysis, we conclude that the MSPE for this setting is also deeply thermalized to the generalized Hilbert-Schmidt ensemble.
At the leading order of $d^t$, the Renyi-$k$ conditional entropy is also exactly the same as that derived in Sec.~\ref{Sec:ConditionalInformation}. This can be observed by noting that the transformation rule Eq.~(\ref{Eq:SM_recursive_relation_sparse_error}) is the same as Eq.~(\ref{eq:new_transformation_rule_krenyi}).
Nonetheless, in the next order of $\mathcal{O}\left(\frac{1}{d^t}\right)$, Eq.~(\ref{Eq:SM_equation_determing_recursive}) becomes non-diagonal, which will mix different permutation coefficients $\alpha(g)$. At finite $t$, this gives rise to a different result from the case with consecutive corrupted measurement outcomes studied in the main text.

\end{document}